\keywords{Regular tree languages, infinite trees, Factorization theory, IO and OI}
\newcommand{\set}[2]{\{\, {#1}\;\mid\; {#2} \}}
\newcommand\rmax{r_{\text{max}}}
\newcommand{\prebest}{\preceq_{\text{best}}}
\newcommand\lds{,\ldots,}
\newcommand{\sse}{\subseteq}
\newcommand{\ssneq}{\varsubsetneq}
\newcommand{\es}{\emptyset}
\newcommand{\sm}{\setminus}
\newcommand{\os}[1]{\{#1\}}
\newcommand{\wh}[1]{\widehat{#1}}
\newcommand{\wt}[1]{\widetilde{#1}}
\newcommand{\pNTA}{parity-NTA\xspace}
\newcommand{\pATA}{parity-ATA\xspace}
\newcommand{\ScX}{\Sig\cup\cX}
\newcommand{\N}{\mathbb{N}}
\newcommand{\B}{\mathbb{B}}
\newcommand{\PSPACE}{\text{\sc Pspace}}
\newcommand{\EXPSPACE}{\text{\sc Expspace}}
\newcommand{\DEXPTIME}{\text{\sc Dexptime}}
\newcommand{\cC}{\mathcal{C}}
\newcommand{\cD}{\mathcal{D}}
\newcommand{\cL}{\mathcal{L}}
\newcommand{\cP}{\mathcal{P}}
\newcommand{\cT}{\mathcal{T}}
\newcommand{\cX}{\mathcal{X}}
\newcommand{\Sig}{\Sigma}
\newcommand{\Del}{\Delta}
\newcommand{\Gam}{\Gamma}
\newcommand{\OO}{\Omega}
\newcommand{\sig}{\sigma}
\newcommand{\del}{\delta}
\newcommand{\eps}{\varepsilon}
\newcommand{\alp}{\alpha}
\newcommand{\bet}{\beta}
\newcommand{\gam}{\gamma}
\newcommand{\oo}{\omega}
\newcommand{\oi}[1]{{#1}^{-1}}
\newcommand{\tra}{transition\xspace}
\newcommand{\tras}{transitions\xspace}
\newcommand{\Pro}{\mathrm{\cP}}
\newcommand{\abs}[1]{\left|\mathinner{#1}\right|}
\newcommand{\Abs}[1]{\left\Vert\mathinner{#1}\right\Vert}
\newcommand\la{\leftarrow}%{\longleftarrow}
\newcommand\lsa{\leftarrow}
\def\math#1{\ifmmode #1\else \mbox{$#1$} \fi}
\newcommand{\IFF}{if and only if\xspace}
\newcommand{\sr}{\mathop{\boldsymbol{\iota}}} %source of an edge
\newcommand{\tr}{\mathop{\boldsymbol{o}}}  %target of an edge
\renewcommand*{\set}[2]{\left\{#1\, \middle| \,#2 \right\}}
\newcommand{\Nat}{\mathbb{P}}% {\N_+}
\newcommand{\rk}{\operatorname{rk}}
\newcommand*{\Pos}[0]{\operatorname{Pos}}
\newcommand*{\pos}[0]{\operatorname{pos}}
\renewcommand*{\root}[0]{\operatorname{root}}
\newcommand*{\leaf}[0]{\operatorname{leaf}}
\newcommand*{\Run}[2]{\operatorname{Run}(#1,#2)}
\newcommand*{\RUN}{\operatorname{Run}}
\newcommand*{\TDFTA}{\@ifnextchar{.}{$\downarrow$FTA}{$\downarrow$FTA\@\xspace}}
\newcommand*{\BUFTA}{\@ifnextchar{.}{$\uparrow$FTA}{$\uparrow$FTA\@\xspace}}
\newcommand*{\TDDFTA}{\@ifnextchar{.}{$\downarrow$DFTA}{$\downarrow$DFTA\@\xspace}}
\newcommand*{\BUDFTA}{\@ifnextchar{.}{$\uparrow$DFTA}{$\uparrow$DFTA\@\xspace}}
\newcommand*{\FTA}{\@ifnextchar{.}{FTA}{FTA\@\xspace}}
\newcommand*{\eg}{\@ifnextchar{.}{e.\,g}{e.\,g.\@\xspace}}
\newcommand*{\ie}{\@ifnextchar{.}{i.\,e}{i.\,e.\@\xspace}}
\newcommand*{\wrt}{\@ifnextchar{.}{w.r.t}{w.r.t.\@\xspace}}
\newcommand*{\OBdA}{\@ifnextchar{.}{O.\,B.\,d.\,A}{O.\,B.\,d.\,A.\@\xspace}}
\newcommand*{\oBdA}{\@ifnextchar{.}{o.\,B.\,d.\,A}{o.\,B.\,d.\,A.\@\xspace}}
\newcommand*{\usw}{\@ifnextchar{.}{usw}{usw.\@\xspace}}
\renewcommand*{\dh}{\@ifnextchar{.}{d.\,h}{d.\,h.\@\xspace}}
\newcommand*{\zB}{\@ifnextchar{.}{z.\,B}{z.\,B.\@\xspace}}
\newcommand*{\idR}{\@ifnextchar{.}{i.\,d.\,R}{i.\,d.\,R.\@\xspace}}
\newcommand*{\bzw}{\@ifnextchar{.}{bzw}{bzw.\@\xspace}}
\newcommand*{\s}{\@ifnextchar{.}{s}{s.\@\xspace}}
\newcommand*{\su}{\@ifnextchar{.}{s.\,u}{s.\,u.\@\xspace}}
\newcommand*{\iZ}{\@ifnextchar{.}{i.\,Z}{i.\,Z.\@\xspace}}
\newcommand*{\iW}{\@ifnextchar{.}{i.\,W}{i.\,W.\@\xspace}}
\newcommand*{\ua}{\@ifnextchar{.}{u.\,a}{u.\,a.\@\xspace}}
\newcommand*{\iA}{\@ifnextchar{.}{i.\,A}{i.\,A.\@\xspace}}
\newcommand{\Ip}{In particular,\xspace}
\newcommand{\solu}{solution\xspace}
\renewcommand{\hom}{homomorphism\xspace}
\newcommand\Homs{Homomorphisms\xspace}
\newcommand{\subst}{substitution\xspace}
\newcommand{\Tfin}{\mathop{T_{\text{fin}}}}
\newcommand{\TfinX}{\mathop{T_{\text{$\cX$-fin}}}}
\newcommand{\TfinGam}{\mathop{T_{\text{$\Gam$-fin}}}}
\newcommand{\TfinH}{\mathop{T_{\text{$H$-fin}}}} %gss terms with finitely many holes
\renewcommand{\phi}{\varphi}
\newcommand{\sio}{\sigma_{\mathrm{io}}}
\newcommand{\sext}{\sigma_{\mathrm{e}}}
\newcommand{\mio}{{\mathrm{io}}} %gss
\newcommand{\moi}{{\mathrm{oi}}}
\newcommand{\whsio}{\wh\sigma_{\mathrm{io}}}
\newcommand{\vsio}{\check{\sigma}_{\mathrm{io}}}
\newcommand{\vsig}{\check\sigma}
\newcommand{\ngam}[1]{\gam_{#1}}
\newcommand{\gaminf}{\gam_\infty}
\newcommand{\gaminfs}{\gam_\infty(s)}
\newcommand{\soi}{\sigma_{\mathrm{oi}}}
\newcommand{\pr}[1]{\text{pr}_{#1}}
\newcommand{\prref}[1]{\prettyref{#1}}
\newcommand{\lref}[1]{\prettyref{#1}}
\begin{document}
\title{Regular matching problems for infinite trees}

\author[C. Camino]{Carlos Camino\rsuper{a}}
\author[V. Diekert]{Volker Diekert\rsuper{a}}
\author[B. Dundua]{Besik Dundua\rsuper{b}}
\author{Mircea Marin\rsuper{c}}
\author[G. S{\'e}nizergues]{G{\'e}raud S{\'e}nizergues\rsuper{d}}

\address{FMI, Universit\"at Stuttgart, Germany}
\email{cfcamino@gmail.com, diekert@fmi.uni-stuttgart.de}
\address{Kutaisi International University and  VIAM, Tbilisi State University}
\email{bdundua@gmail.com}
\address{FMI, West University of Timi\c soara, Romania}
\email{mircea.marin@e-uvt.ro}
\address{LaBRI, Universit{\'e} de Bordeaux, France}
\email{geraud.senizergues@u-bordeaux.fr}

%% the abstract has to PRECEDE the command \maketitle:
%% be sure not to issue the \maketitle command twice!

\begin{abstract}
  \noindent We study the matching problem of regular tree languages, that is, ``$\exists \sigma:\sigma(L)\subseteq R$?'' where $L,R$ are  regular tree languages over the union of finite ranked alphabets $\Sigma$ and $\mathcal{X}$ where $\mathcal{X}$ is an alphabet of variables and $\sigma$ is a substitution such that $\sigma(x)$ is a set of trees in $T(\Sigma\cup H)\setminus H$ for all $x\in \mathcal{X}$. Here, $H$ denotes a set of ``holes'' which are used to define a ``sorted'' concatenation of trees. Conway studied this problem in the special case for languages of finite words in his classical textbook \emph{Regular algebra and finite machines} published in 1971. He showed that if $L$ and $R$ are regular, then the problem ``$\exists \sigma \forall x\in \mathcal{X}: \sigma(x)\neq \emptyset\wedge  \sigma(L)\subseteq R$?'' is decidable. Moreover, there are only finitely many maximal solutions, the maximal solutions are regular substitutions, and they are effectively computable.
We extend Conway's results when $L,R$ are  regular languages of finite and infinite trees, and language substitution is applied inside-out, in the sense of Engelfriet and Schmidt (1977/78). More precisely, we show that if $L\subseteq T(\Sigma\cup\mathcal{X})$ and $R\subseteq T(\Sigma)$ are regular tree languages over finite or infinite trees, then the problem ``$\exists \sigma \forall x\in \mathcal{X}: \sigma(x)\neq \emptyset\wedge  \sigma_{\mathrm{io}}(L)\subseteq R$?'' is decidable. Here, the subscript ``$\mathrm{io}$'' in $\sigma_{\mathrm{io}}(L)$ refers to ``inside-out''. Moreover, there are only finitely many maximal solutions $\sigma$, the maximal solutions are regular substitutions and effectively computable. The corresponding question for the outside-in extension $\sigma_{\mathrm{oi}}$ remains open, even in the restricted setting of finite trees.

In order to establish our results we use alternating tree automata with a parity condition and games.
\end{abstract}

\maketitle

%% start the paper here:
\section*{Preamble}
The additional material in the appendix (\prref{sec:append}) is not needed to understand the results in the main body of the paper.

\section{Introduction}\label{sec:intro}
\subsection{Historical background}\label{sec:hb}
Regular matching problems using generalized sequential machines were studied first by Ginsburg and Hibbard. Their publication~\cite{GinsburgHibbard64}, dating back to 1964, showed that it is decidable whether there is a generalized sequential machine which maps $L$ onto $R$ if $L$ and $R$ are regular languages of finite words. The paper also treats several variants of this problem. For example, the authors notice that the decidability cannot be lifted to context-free languages. Another paper in that area is by Prieur et al.~\cite{PrieurCL97}. It appeared in 1997 and studies the problem whether there exists a sequential bijection from a finitely generated free monoid to a given rational set $R$.
Earlier, in the 1960's Conway studied regular matching problems in the following variant of~\cite{GinsburgHibbard64}:
Let $\cX$, $\Sig$ be finite alphabets and $L\sse (\Sig\cup\cX)^*$, $R\sse \Sig^*$.
A \subst $\sig:\cX \to 2^{\Sig^*}$ is extended to $\sig:\Sig\cup\cX \to 2^{\Sig^*}$ by $\sig(a)=\os a$ for all $a\in \Sig\sm \cX$.
It is called a \emph{\solu} of  the problem ``$L\sse R$?'' if
$\sig(L)\subseteq R$.
In his textbook~\cite[Chapt.~6]{conway1971regular}, Conway developed
a \emph{factorization theory} of formal languages. Thereby he found a nugget in formal language theory: Given as input regular word languages
$L\sse (\Sig\cup\cX)^*$ and $R\sse \Sig^*$, it holds:
\begin{enumerate}
\item
It is decidable whether there is a \subst
	$\sig:\cX \to 2^{\Sig^*}$ such that $\sig(L)\sse R$ and  $\es\neq \sig(x)$ for all $x\in \cX$.
\item
	Define $\sig\leq \sig'$
	by $\sig(x)\sse \sig'(x)$ for all $x\in \cX$. Then every solution is bounded from above by  a maximal \solu; and the number of maximal solutions is finite.
\item
If $\sig$ is maximal, then $\sig(x)$ is regular for all $x\in \cX$; and all maximal \solu{s} are effectively computable.
\end{enumerate}
 The original proof is rather technical and not easy to digest. On the other hand, using the algebraic concept of \emph{recognizing morphisms}, elegant and simple proofs exist.
 Regarding the complexity, it turns out that the problem
``$\exists \sig :  \sig(L)\sse R$?'' is $\PSPACE$-complete if $L$ and $R$ are given by NFAs by~\cite[Lemma 3.2.3]{koz77}.
The apparently similar problem
``$\exists \sig :  \sig(L)=R$?'' is more difficult: Bala showed that it is $\EXPSPACE$-complete~\cite{bala2006complexity}.

Conway also asked whether
the unique maximal \solu of the language equation $L x=xL$ is given by a \subst such that $\sig(x)$ is regular\footnote{There is a unique maximal \solu since the union over all \solu{s} is a \solu.}. This question was answered by Kunc in a highly unexpected way: there is a finite set $L$ such that the unique maximal \solu $\sig(x)$ of $L x=xL$ is co-recursively-enumerable-complete~\cite{kunc2007power}. A recent survey on language equations is in~\cite{KuncOkhotin21}.

\subsection{Conway's result for trees}\label{sec:tr}
The present paper generalizes Conway's result to regular tree languages. We consider finite and infinite trees simultaneously
over a finite ranked alphabet $\Del$. We let
$T(\Del)$ be the set of all trees with labels in $\Del$, and by $\Tfin(\Del)$ we denote its subset of finite trees. More specifically, we consider
finite ranked alphabets $\cX$ of \emph{variables} and $\Sig$ of \emph{function symbols}. In order to define a notion of a concatenation we also need a set of \emph{holes} $H$.  These are symbols of rank zero. For simplicity, throughout the set of holes is chosen as
$H=\os{1\lds\abs H}$. We require $(\Sig\cup\cX)\cap H=\es$.  Trees in $T(\Del)$ are rooted and they can be written as terms $x(s_1\lds s_r)$ where $r = \rk(x)\geq 0$ and  $s_i$ are trees. In particular, all symbols of rank $0$ are trees. Words $a_1\cdots a_n\in \Sig^*$ (with $a_i\in \Sig$) are encoded as terms $a_1(\cdots (a_n(\$))\cdots )$ where the $a_i$ are function symbols of rank $1$ and  the only symbol of rank $0$ is~$\$$ which signifies ``\emph{end-of-string}''.  An infinite word $a_1a_2\ldots\in\Sig^\oo$ is encoded  as  $a_1(a_2(\ldots))$ and no hole appears.
In contrast to the case of classical term rewriting, \subst{s} are applied at inner positions, too. In the word case it is clear what to do. For example, let $w=xyx\in \cX^*$ with $\sig(x)=L_x$ and $\sig(y)=L_y$, then
we obtain $\sig(w) = L_xL_yL_x$. Translated to term notation, we obtain $w= x(y(x(\$)))$, $\sig(z)=\set{u(1)}{u\in L_z}$ for $z\in \os{x,y}$ with the result $\sig(w) = L_xL_yL_x(\$)$.
On the other hand, in the tree case variables of any rank may exist. As a result, variables may appear at inner nodes as well as at leaves.
 Throughout, if $t$ is any tree, then  $\leaf_i(t)$ denotes the set of leaves labeled by the hole $i\in H$, and by $i_j$ we denote elements of $\leaf_i(t)$.

In the following, a \emph{\subst} means a mapping $\sig:  \cX\to 2^{T(\Sig\cup H)\sm H}$ such that for all $x\in \cX$  we have
$\sig(x)\sse T(\Sig\cup \os{1\lds \rk(x)})$.
A \emph{homomorphism} (resp.~\emph{partial \hom}) is a \subst $\sig$ such that $|\sig(x)|=1$ (resp.~$|\sig(x)|\leq 1$)
for all $x\in \cX$. We write $\sig_1\leq \sig_2$ if $\sig_1(x)\sse \sig_2(x) $ for all $x\in \cX$; and we say that $\sig$ is \emph{regular}\footnote{There are several equivalent definitions for regular tree languages, e.g.~see~\cite{tata2007,rab69,MullerSchupp87tcs,MullerSchupp95tcs,tho90handbook}.}
 if
$\sig(x)$ is regular for all $x\in \cX$.

Trees are represented graphically, too.
For example, $g(1,f(g(a,1)))$ and $g(t,f(g(a,t)))$ are represented in \prref{fig:tree1}. We obtain $g(t,f(g(a,t)))$ by replacing the positions labeled by hole~$1$ in $g(1,f(g(a,1)))$ by any rooted tree $t$.
\begin{figure}[t]
\begin{center}
\begin{tikzpicture}[xscale=0.6,yscale=0.3]
		\node (s) at (-3,0) {$t'=$};
		\node (rew) at (-1,0) {$g$};
		\node (r1) at (-2,-1) {$1$};
		\node (r2) at (0,-1) {$f$};
		\node (r21) at (1,-2) {$g$};
		\node (r211) at (0,-3) {$a$};
		\node (r212) at (2,-3) {$1$};
		\draw (rew) to (r1);
		\draw (rew) to (r2);
		\draw (r2) to (r21);
		\draw (r21) to (r211);
		\draw (r21) to (r212);

		 \node (st) at (5,0) {$t'[1_j\lsa t]= $};

		\node (ew) at (8,0) {$g$};

		\draw[fill=gray!10] (7,-1) -- (6.2,-2.5) -- (7.8,-2.5) -- (7,-1);
		\node (1) at (6.8,-1.2) {};
\node (1t) at (7,-1.8) {$t$};
		\node (2) at (9,-1) {$f$};
		\node (21) at (10,-2) {$g$};
		\node (211) at (9,-3) {$a$};

		\draw[fill=gray!10] (11,-3) -- (10.3,-4.5) -- (11.7,-4.5) -- (11,-3);
		\node (212) at (11.2,-3.3) {};
         \node (212t) at (11,-3.8) {$t$};
		\draw (ew) to (1);
		\draw (ew) to (2);
		\draw (2) to (21);
		\draw (21) to (211);
		\draw (21) to (212);
\end{tikzpicture}
\end{center}
\caption{The left tree $t'$ has two positions labeled with hole $1\in H$. Holes define a composition of trees. The right tree is obtained by composing $t'$ with a tree $t$ over the hole $1$ which is denoted by  $t'[1_j\lsa t]$.}%
\label{fig:tree1}
\end{figure}
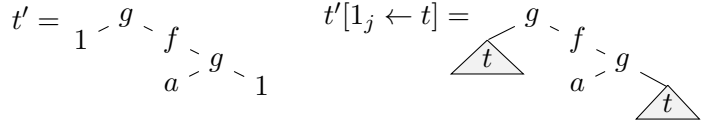
As soon as $\sig$ is not a partial \hom, one has to distinguish
between ``Inside-Out'' (IO for short)  and ``Outside-In''
(OI for short) as advocated and defined in~\cite{EngelfrietS77,EngelfrietS78}. Our positive decidability results concern IO, only. The IO-definition $\sio(s)$  for a given tree
$s\in T(\ScX)$ and a \subst $\sig$ has the following interpretation.
First, we extend $\sig$ to a mapping from
$\Sig \cup \cX \cup H$ to $2^{T(\Sig \cup  H)}$ by $\sig(f) =\{ f(1\lds \rk(f))\}$ for $f\in \Sig\sm \cX$. Second, we use a term
notation $s=x(s_1\lds s_r)$ (for finite and infinite trees $s$) and we let  $\sio(s) \sse T(\Sig)$ to be a certain
fixed point of the language equation
\begin{align}\label{eq:introsio}
\sio(s)=\bigcup \set{t[i_j\lsa t_i]}{t\in \sig(x) \wedge \forall\, 1\leq i \leq r: \;t_i\in \sio(s_i)}.
\end{align}
The notation $t[i_j\lsa t_i]$ in \prref{eq:introsio} means that
for all $i\in H$ and $i_j\in \leaf_i(t)$ all leaves $i_j$ are replaced by $t_{i}$ where $t_i$ depends on $i$, only.
It is a special case of the notation $t[i_j\lsa t_{i_j}]$ which says that
for all $i\in H$ and $i_j\in \leaf_i(t)$ some
tree $t_{i_j}$ is chosen and then each leaf $i_j$ is replaced
by $t_{i_j}$.
This more general notation appears below when defining ``Outside-In'' \subst{s}.
The computation of the elements in $\sio(s)$ follows a recursive procedure which at each call first selects the tree $t\in \sig(x)$ (if $x$ is the label of the root of $s$) and then it makes recursive  calls for each hole $i$ which appears as a label in $t$ to compute the elements $t_i$.
After that, all positions $i_j$ in $t$ which have the label $i$ are replaced
by the same tree $t_i$.
For finite trees the procedure terminates.
For example, $g(t,f(g(a,t)))= g(1,f(g(a,1)))[1_j\lsa t]$ in \prref{fig:tree1} represents an instance of an IO-\subst. For that, we let
$s=g(x,f(g(a,x)))$ where $x$ is a variable of rank zero and
$\sig(x)=\os{t_1,t_2}$. Then $\sio(s) = \os{g(t_1,f(g(a,t_1))),g(t_2,f(g(a,t_2)))}$.
For infinite trees, the recursion is not guaranteed to stop. Running it for $n$ recursive calls  defines a Cauchy sequence in an appropriate complete metric space $T_\bot(\Sig \cup \cX \cup H)$ where $\bot$ plays the role of undefined if the procedure cannot select a tree because the corresponding set $\sig(x)$ happens to be empty.

As explained above, the feature of IO is that every leaf~$i_j$ in $t\in \sig(x)$ labeled with a hole~$i$ is substituted with the same tree $t_i\in \sio(s_i)$. If we remove this restriction (that is: in \prref{eq:introsio} we replace $t[i_j\lsa t_i]$ by $t[i_j\lsa t_{i_j}]$), then we obtain the OI-\subst $\soi(s)$ which can be much larger than $\sio(s)$. For example,
 for $s=g(x,f(g(a,x)))$ and
$\sig(x)=\os{t_1,t_2}$ we obtain
\begin{align*}
\soi(s) =\set{g(t_i,f(g(a,t_j)))}{i,j \in \os{1,2}}.
\end{align*}
Clearly, $\sio(s)=\soi(s)$ if $\sig$ is a partial \hom.
Thus, for a partial \hom~$\phi$ we may write $\phi(s)$ without risking ambiguity.
Another situation for $\sio(s)=\soi(s)$ is when
duplications of holes do not appear. \emph{Duplication} means
that there are some $x$ and $t\in \sig(x)$ where a hole $i\in H$ appears at least twice in $t$. Duplications cannot appear in the traditional framework of words, but  ``duplication'' is a natural concept for trees.

Allowing duplications complicates the situation because it might happen that $\sio(L)$ is not regular, although $L$ and $\sig$ are regular. Actually, this may happen {even if} $\sig$ is defined by a homomorphism $\phi:  \cX\to \Tfin(\Sig\cup H)$. Recall that the notation $\Tfin(\Sig\cup H)$ refers to the subset of finite trees in $T(\Sig\cup H)$.
Examples of a \hom $\phi$ where $\phi(L)$ is not regular are easy to construct. The classical example is
$L= \set{x^n(\$)}{n\in \N}$ and $\phi(x)=f(1,1)$. Then
$\phi(L)$ is not regular.
The corresponding trees of height $3$ are depicted in \lref{fig:gunnar}.
\begin{figure}[h]
	\begin{center}
		\begin{tikzpicture}[scale=0.8]
		\node(s) at (-4,0) {$s_3=$};
				\node (ew) at (-3,0) {$x$};
				\node (1) at (-3,-1) {$x$};
				\node (2) at (-3,-2) {$x$};
				\node (3) at (-3,-3) {$\$$};

				\draw (ew) to (1);
				\draw (1) to (2);
				\draw (2) to (3);
				\node(sigxy) at (-1.2,0) {$\phi(x)= $};
				\node (rew) at (0,0) {$f$};
				\node (r1) at (-0.5,-1) {$1$};
				\node (r2) at (0.5,-1) {$1$};

				\draw (rew) to (r1);
				\draw (rew) to (r2);

			\node(sigs) at (2,0) {$\phi(s_3)= $};
				\node (rew) at (5,0) {$f$};
				\node (r1) at (3,-1) {$f$};
				\node (r2) at (7,-1) {$f$};
				\node (r21) at (6,-2) {$f$};
				\node (r22) at (8,-2) {$f$};
				\node (r11) at (2,-2) {$f$};
				\node (r12) at (4,-2) {$f$};
				\node (r111) at (1.5,-3) {$\$$};
				\node (r112) at (2.5,-3) {$\$$};
				\node (r121) at (3.5,-3) {$\$$};
				\node (r122) at (4.5,-3) {$\$$};
				\node (r211) at (5.5,-3) {$\$$};
				\node (r212) at (6.5,-3) {$\$$};
				\node (r221) at (7.5,-3) {$\$$};
				\node (r222) at (8.5,-3) {$\$$};
				\draw (rew) to (r1);
				\draw (rew) to (r2);
				\draw (r1) to (r11);
				\draw (r1) to (r12);
				\draw (r2) to (r21);
				\draw (r2) to (r22);

				\draw (r11) to (r111);
				\draw (r11) to (r112);
				\draw (r12) to (r121);
				\draw (r12) to (r122);
				\draw (r21) to (r211);
				\draw (r21) to (r212);
				\draw (r22) to (r221);
				\draw (r22) to (r222);
				\end{tikzpicture}
            \end{center}
	\vspace{0.5cm}
	\caption{$L=x^*(\$)$  is regular, but $\phi(L)$ is not.}\label{fig:gunnar}
	\end{figure}
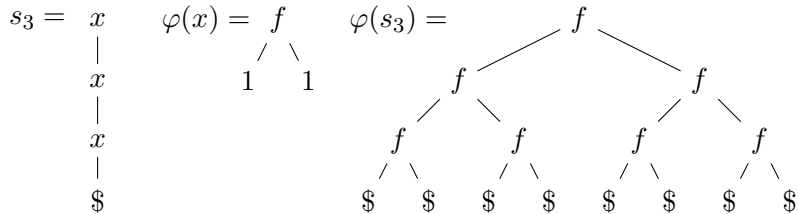
This led to the \emph{\text{HOM}-problem}. The inputs are a homomorphism~$\phi$ and a regular tree language $L$. The question is whether $\phi(L)$ is regular. The problem is decidable {in the setting of finite trees by~\cite{GodoyG13JACM}.} It is $\DEXPTIME$-complete by~\cite{CreusGGR16SIAMCOMP}.

Most of our work deals with regular tree languages. However,
once the results are established for regular sets, we can easily push the results further to some border of decidability.
We consider a class $\cC$ of tree languages such that on input
$L\in \cC$ and a regular tree language $K$, the emptiness problem
$L\cap K$ is decidable. For example, in the word case, the class $\cC$ can be defined by the class of context-free languages, and then Conway's result for finite words still holds if $L$ is context-free and $R$ is regular.

\subsection{Statement of the main results}%: \prref{thm:s1s2main}}%
\label{sec:ms}
Our main results can be found in \prref{sec:mainres}.
We are ready to formulate them now. \prref{thm:s1s2main} can be rephrased as follows.
First, the following decision problem is decidable
\begin{itemize}
\item Input: Regular \subst{s} $\sig_1,\sig_2$ and tree languages $L\sse T(\Sig\cup \cX)$,
$R\sse T(\Sig)$ such that $R$ is regular and $L\in \cC$. Here,
$\cC$ is as defined above such that emptiness of the intersection
with regular sets is decidable.
\item Question: Is there some \subst $\sig: \cX\to 2^{T(\Sig\cup H)\sm H}$ satisfying both, $\sio(L) \sse R$  and  $\sig_1\leq \sig\leq \sig_2$?
\end{itemize}
Second, we can effectively compute the set of maximal \subst{s} $\sig$ satisfying $\sio(L) \sse R$  and  $\sig_1\leq \sig \leq \sig_2$. It is a finite set of regular \subst{s}.
A special case is stated in \prref{cor:main}: we can compute the finite set of maximal \subst{s} $\sig$ satisfying $\sio(L) \sse R$  and  $\sig(x)\neq \es$. This statement reflects the original setting of Conway.

\subsection{Roadmap to prove Theorem~\ref{thm:s1s2main}}\label{sec:road}
Our proof is designed to be accessible for readers who are familiar with basic results about metric spaces and regular languages over trees.\footnote{The simpler case of finite and infinite words can be found in \prref{sec:oldconwords} as a kind of warm-up exercise.}

We don't rely on (or use) the theory of monads. This is a categorial concept, which leads to a general notion of a \emph{syntactic algebra}, see the arXiv-paper of Boja{\'n}czyk~\cite{Bojanczyk15arxiv} for finite trees.\footnote{Conway's result for finite trees follows from the theory of monads, too. Personal communication Miko{\l}aj Boja{\'n}czyk, 2019. For the notion of a monad see~\cite[Chapters III to V]{maclane71}.} For infinite trees, the notion of a \emph{syntactic algebra} was given by Blumensath~\cite{Blumensath20lmcs}.
In our paper we use nondeterministic finite (top-down) parity-tree automata to define an appropriate \emph{congruence} of finite index. This is conceptually simple but there is no free lunch: the index of the congruence defined by an automaton is not guaranteed to be the smallest one.\footnote{With respect to worst case complexity this is more an advantage than a problem. ``Generically'' it is debatable whether it makes sense to spend any efforts in computing syntactic congruences. It doesn't.}

Having a convenient notion of a congruence, the next step is to define $\sio(s)$ for finite and infinite trees such that  $\sio(s)$ is indeed the intended fixed point for
\prref{eq:introsio}. For finite trees the set  $\sio(s)$ can be defined by
induction on the size of $s$. Then $\sio(s)$ becomes the unique least fixed point of
\prref{eq:introsio} satisfying $\sio(x)=\sig(x)$ for all symbols of rank zero.
For infinite trees the definition is more subtle.  Given a tree $s$ and a \subst $\sig$ we introduce a notion of a \emph{choice function}. That is a function $\gamma: \Pos(s) \to T(\Sigma\cup H)\cup \{\bot\}$ where $\Pos(s)$ is the set of positions (that is: vertices) of~$s$ such that the following holds. If $u\in \Pos(s)$ is labeled by~$x$, then  $\gam$ selects a tree $\gam(u)\in \sig(x)$. If $\sig(x)=\es$, then $\gam(u)$ is not defined, which is denoted as $\gam(u)=\bot$.
To each choice function we will associate a Cauchy sequence $\gamma_n(s)$ in some complete metric space $T_\bot(\Sig\cup \cX\cup H)$; and we let $\gamma_\infty(s)=\lim_{n\to \infty}\gamma_n(s)$ be its limit. We can think of the space
$T_\bot(\Sig\cup \cX\cup H)$ as a union of the usual Cantor space
$T(\Sig\cup \cX\cup H)$ together with an isolated point $\bot$ which has distance $1$ to every other point.
Then we define
\begin{equation}\label{eq:firstchoice}
\sio(s) = \set{\gamma_\infty (s)}{\text{ $\gamma$ is a choice function for $s$ and $\gamma_\infty (s)\neq \bot$}}.
\end{equation}
It turns out that this definition  coincides with the natural definition for finite trees, and it satisfies
\prref{eq:introsio}, too.
Another crucial step on the road to show \prref{thm:s1s2main} is
the following result. If $\sig$ and $R$ are regular, then the ``inverse image''
$\oi \sio(R)  = \set{s\in T(\Sig \cup \cX)}{\sio(s) \sse R}$ is a regular set of trees. In order to prove this fact
we use two well-known results. First,
the class of regular tree languages can be characterized by
alternating parity-automata, and the semantics of these automata can be defined by parity-games~\cite{MullerSchupp87tcs,MullerSchupp95tcs,tata2007}.
Second, parity-games are determined and have positional (= memoryless) winning strategies,~\cite{GurevichH82stoc}.

\section{Notation and preliminaries}\label{sec:not}
We let $\N= \os {0,1, \ldots}$ denote the set of natural numbers, $\Nat=\N\sm {\os 0}$, and $\Nat^*$ to be the monoid of finite sequences of positive integers with the operation ``$.$'' and the neutral element~$\epsilon$. % chktex 40
For $r\in\N$ we let $[r]=\os{1\lds r}$. We write $2^S$ for the power set of $S$, and identify every element $x\in S$ with the singleton $\{x\}$.

A \emph{rooted tree} is a nonempty, connected, and directed graph $t=(V,E)$ with vertex set $V$ and without multiple edges such that there is exactly one vertex, the \emph{root}, without any incoming edge and all other vertices have exactly one incoming edge. As a consequence, for every vertex
$v\in V$ there is exactly one directed path from the root to $v$.
Since there are no multiple edges we assume without restriction $E\sse \set{(u,v)}{u,v\in V, u\neq v}$.
If $(u,v)\in E$ is an edge, then we say that $v$ is a \emph{child} of $u$, and $u$ is the \emph{parent} of $v$.
A \emph{leaf} of $t$ is a vertex without children.
Throughout, we restrict ourselves to directed graphs where the set of edges is (at most) countable. Hence, it is possible to encode the vertex set of a rooted tree as
a subset of \emph{positions} $\Pos(t)\sse \Nat^*$ satisfying the following conditions: $\eps\in \Pos(t)$, and
if $u.j\in \Pos(t)$, then both $u\in \Pos(t)$ and $u.i\in \Pos(t)$
for all $1\leq i \leq j$.
Using this, we have $\root(t)=\eps$ and edge set $\set{(u,u.i)}{u,u.i\in \Pos(t)}$. We are mainly interested in
\emph{ordered} trees: these are
rooted trees where the children of a node are equipped with a  ``left-to-right'' ordering. In such a case the ``left-to-right breadth-first'' ordering on positions can be represented by the length-lexicographical ordering on $\Nat^*$.
The \emph{size} of a tree $t$ is the cardinality of $\Pos(t)$.
The \emph{level} of a vertex $u$ is the length of the unique directed path starting at the root and ending in~$u$. Dually, the
\emph{height} of a vertex~$u$ is the length of the longest directed path starting at~$u$. Leaves have height~zero.
The \emph{height of a tree} is the height of its root.

Typically, and actually without restriction, every position in a tree has a label in some set $\Delta$. Such a tree $t$ can be defined therefore
through a mapping $t:\Pos(t)\to \Delta$ where $\Pos(t)$ is the set of positions and $t(u)\in \Delta$ is the label of the position $u$.
The set of all trees with labels in $\Delta$ is denoted by $T(\Delta)$. Its subset of finite trees is denoted by $\Tfin(\Delta)$. More precisely, if $\Gam\sse \Del$, then
$\TfinGam(\Delta)$ is the set of trees where the number of positions with a label in $\Gam$ is finite.

Let $t,t'\in T(\Delta)$ and $u\in \Pos(t)$.
We define the trees $t|_u$ and $t[u\lsa t']$ as usual:
\begin{align*}
 \Pos(t|_u)&= \set{v\in \Nat^*}{u.v\in \Pos(t)} \text{ with labeling }
 t|_u(v)=t(u.v),\\
\Pos(t[u\lsa t'])&= \set{u.u'}{u'\in \Pos(t')}\cup
\set{v \in \Pos(t)}{u\text{ is not a prefix of }v},\\
t[u\lsa t'](u.u') &= t'(u') \text{ and } t[u\lsa t'](v) =t(v)\text{ if $u$ is not a prefix of }v.
\end{align*}
In almost all our cases (there is one exception in the proof of \prref{prop:equivtasl}) the degree of vertices in a tree is bounded by a constant depending on $\Del$, and vertices have finite degree depending on their label.
A \emph{ranked alphabet} is a nonempty finite set of labels $\Del$ with a \emph{rank} function $\rk:\Del \to \N$. Trees over a ranked alphabet have to
satisfy the following additional constraint:
\begin{itemize}
\item $\forall u\in \Pos(t)$, if $t(u)=x$, then $\os{1\lds \rk(x)}=\{i\in\N\mid u.i\in \Pos(t)\}$.
\end{itemize}
Trees
in  $T(\Del)$ are represented by the set of \emph{terms}, too.
These are the ordered trees $t\in T(\Del)$. We adopt the standard notation of terms to denote trees: $x(s_1,\ldots,s_r)$ represents the tree $s$ with $s(\eps)=x$ and $s|_i=s_i$ for all $i\in [\rk(x)]$.
Henceforth, if not otherwise specified, we let $\OO= \cX\cup \Sig\cup H $ be a ranked alphabet consisting of three sets: $\cX$ is the set of \emph{variables}, $\Sigma$ is the set of \emph{function symbols}, $H$ is the set of \emph{holes}. We require
$(\cX\cup \Sig) \cap H = \es$ but $\cX\cap \Sig\neq \es$ is not forbidden.
Symbols $a\in\Sig$ with $\rk(a)=0$ are called \emph{constants}.
Holes are not constants, but they have rank~$0$, too.
For simplicity, we assume  $H=[\rmax]$ where $\rmax\in\N$ satisfies $\rk(x)\leq \rmax$ for all $x\in\Sigma\cup\cX$.
To make the theory nontrivial, we assume that there is some $x\in \Sig\cup \cX$ with $\rk(x)\geq 1$. \Ip $T(\OO)$ contains finite trees where some leaves are labeled by a hole. For $\rk(x)\geq 2$ there are also infinite trees with this property.

Given a tree $t\in T(\OO)$  we denote by
 $\leaf_i(t)$ the set of leaves which are labeled by the hole  $i\in H$.
The length-lexicographical
 ordering of positions induced by $\Nat^*$ is a well-order, which has the type of either a finite ordinal or
 the ordinal $\omega$. This well-order induces a linear order on $\leaf_i(t)$,  which can be represented by a downward-closed subset of $\Nat$. That is, we may write  $\leaf_i(t)=\set{i_j}{1\leq j \leq |\leaf_i(t)|}$. If $|\leaf_i(t)|=\infty$, then this means $\leaf_i(t)=\set{i_j}{j \in\Nat}$.
The term notation is also convenient for a concise notation of infinite trees using fixed point equations. For example, let $f\in \Sig$ be a function symbol of rank $2$, then there is exactly one tree $t\in T(\Sig \cup \os 1)$ which satisfies the
equation $t=f(t,1)$.  It is depicted in
 \prref{fig:infcomb}. The set $\leaf_1(t)$ is the set of all leaves.
 There is no leftmost leaf, but a rightmost leaf which is in turn the first one in the length-lexicographical
 ordering.

 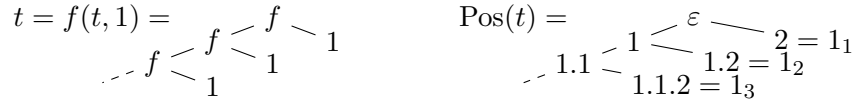
\begin{figure}[h]
\begin{center}
		\begin{tikzpicture}[xscale=0.8,yscale=0.3]
	\node (lrew) at (0,0) {$f$};
	\node (t) at (-3,0) {$t=f(t,1)= $};
				\node (lr1) at (-1,-1) {$f$};
				\node (lr2) at (1,-1) {$1$};
				\draw (lrew) to (lr1);
				\draw (lrew) to (lr2);
				\node (lr11) at (-2,-2) {$f$};
				\node (lr12) at (0,-2) {$1$};
				\draw (lr1) to (lr11);
				\draw (lr1) to (lr12);
				\node (lr111) at (-3,-3) {};
				\node (lr112) at (-1,-3) {$1$};
				\draw [dashed] (lr11) to (lr111);
				\draw  (lr11) to (lr112);
				%%%%%%%% RIGHT POS
			\node (post) at (4,0) {$\Pos(t)= $};
				\node (rew) at (7,0) {$\eps$};
				\node (r1) at (6,-1) {$1$};
				\node (r2) at (9,-1) {$2=1_1$};
				\draw (rew) to (r1);
				\draw (rew) to (r2);
				\node (r11) at (5,-2) {$1.1$};
				\node (r12) at (8,-2) {$1.2= 1_2$};
				\draw (r1) to (r11);
				\draw (r1) to (r12);
				\node (r111) at (4,-3) {};
				\node (r112) at (7,-3) {$1.1.2=1_3$};
				\draw [dashed] (r11) to (r111);
				\draw  (r11) to (r112);

		\end{tikzpicture}
    \end{center}
	\caption{The tree (representing an ``infinite comb'') $t=f(t,1)$ has infinitely many holes labeled by~$1$, but no leftmost hole. The set $\Pos(t)\sse \Nat^*$ is depicted on the right. Following our convention, the  subset $\leaf_1(t)\sse \Pos(t)$ is written as $\{1_1, 1_2, 1_3, \ldots\}$.
	}\label{fig:infcomb}
\end{figure}

 Suppose  that sets $T_x\sse T(\Sig\cup [r])$ and
 $T_{i}\sse T(\Sig)$ for $i\in [r]$  are defined.
 Then we define the set
 $T_x[i_j\lsa T_{i_j}]\sse T(\Sig)$ as the union
 over all trees  $t_x[i_j\lsa t_{i_j}]$ where $t_x\in T_x$ and
 $t_{i_j}\in T_{i_j}$. This is explained in more detail in \prref{sec:substi}.

\subsection{Substitutions: outside-in and inside-out for trees in \texorpdfstring{$\TfinX(\Sig \cup \cX)$}{T\_X-fin(Sigma union X)}}\label{sec:substi}
The ranked alphabet $\ScX$ contains function symbols and variables. As mentioned  in the introduction, we allow
$\Sig \cap \cX \neq \es$.
In the following, if $\sig:\cX\to 2^{T(\Sig \cup H)}$ is a mapping which is specified on the set of variables, then we
extend it to a mapping $\sig:\ScX\to 2^{T(\Sig \cup H)}$
by letting
\begin{align}\label{eq:extsig}
\sig(f)= \os{f(1\lds \rk(f))} &\text{\; for all $f \in H\cup \Sig\sm \cX$}.
\end{align}
We say that a mapping $\sig:\cX\to 2^{T(\Sig \cup H)\sm H}$ is
a \emph{substitution} if $\sig$ satisfies the following additional property
\begin{align}\label{eq:defsig}
\sig(x)\subseteq T(\Sig\cup[\rk(x)])
\setminus H &\text{\; for all $x \in \cX$}.
\end{align}
Note that (\ref{eq:extsig}) and (\ref{eq:defsig}) together imply that $t\in T(\Sig\cup[\rk(x)])\setminus H$ for all $t\in \sig(x)$ and for all $x\in \ScX$. For all elements $x\in \ScX$ of rank zero we have $\sig(x)\sse T(\Sig)$.
The set of \subst{s} is a partial order by letting
 $\sig\leq \sig'$ if $\sig(x)\sse \sig'(x)$ for all $x\in \cX$. A \subst $\sigma$ is called a \emph{homomorphism} (resp.~\emph{partial homomorphism}) if $|\sig(x)|=1$ (resp.~$|\sig(x)|\leq 1$) for all $x\in \cX$. Since we identify elements and singletons
 we can also say that a \hom\footnote{\Homs $\sig$ such that $\sig(x)\notin H$ for all $x \in \cX$ are called \emph{non-erasing} by Courcelle in~\cite[page~117]{courcelle83-tcs}. Thus, there is some difference in notation between our paper and~\cite{courcelle83-tcs}.} is specified by a mapping
$\sig:\cX\to T(\Sig \cup H)\sm H$.
Recall that $t[i_j\lsa t_{i_j}]$ denotes the tree produced from $s$ by
replacing every  position $i_j\in \leaf_i(s)$ with the tree $t_{i_j}$.
According to~\cite{EngelfrietS77,EngelfrietS78} there are two natural ways to extend a substitution $\sig$ to $\Tfin(\Sig\cup\cX)$: \emph{outside-in} (\emph{OI} for short) and \emph{inside-out} (\emph{IO} for short).
The corresponding notations are $\sigma_{\mathrm{oi}}$ and $\sigma_{\mathrm{io}}$ respectively.
Our goal is to extend $\sig$ to extensions $\sio$ and $\soi$ from $T(\Sig\cup H)$ to $2^{T(\Sig \cup H)}$ such that $T(\ScX)$ maps to $2^{T(\Sig)}$. In this section we restrict ourselves to
maps from $\TfinX(\Sig\cup\cX)$ to arbitrary subsets of $T(\Sig \cup H)$. The inside-out-extension of $\sio$ including infinite trees
relies on ``choice functions''.  We postpone the general definition of $\sio$ to \prref{sec:ch}. The corresponding extension of $\soi$
can be found in \prref{sec:oiinf}. It is not used elsewhere.

For a tree $s=x(s_1,\ldots,s_r)\in\TfinX(\Sig\cup \cX)$, the sets of trees  $\sio(s_i)$ and $\soi(s_i)$ are defined by induction on the maximal level of a position labeled by some variable. If $s\in T(\Sig)$, then we let $\soi(s) = \sio(s) = \os s$. Thus, we may assume that some variable occurs in $s$.
For $r=0$ we let
$\soi(s) = \sio(s) = \sig(x) \sse T(\Sig)$.
For $r\geq 1$, the sets $\sio(s_i)\sse \soi(s_i)\sse T(\Sig)$ are defined by induction for all $i\in [r]$. Hence, we can define
\begin{align}%
\label{eq:siodef}
\sio(s)
&=\set{t_x[{i_j}\lsa t_{i}]}{t_x \in \sig(x) \wedge t_{i}\in \sigma_{\mathrm{io}}(s_i) }\\%
\label{eq:soidef}
\soi(s) &=\set{t_x[{i_j}\lsa t_{i_j}]}{t_x \in \sig(x) \wedge t_{i_j}\in \sigma_{\mathrm{oi}}(s_i)}
\end{align}

\begin{rem}\label{rem:siovssioes}
For the interested reader we note that $\sio(s)\neq\es\iff\soi(s)\neq\es$. Indeed, $\sio(s)\neq\es$ implies $\soi(s)\neq\es$ because  $\sio(s) \sse \soi(s)\subseteq 2^{T(\Sig)}$ by definition. The other direction holds for $r=0$. For $r\geq 1$ the induction (on the maximal level) tells us that
$\soi(s_i)\neq\es$ implies $\sio(s_i)\neq\es$.
\qed \end{rem}
There is more flexibility in OI than in IO because positions $i_{j}\neq i_k$ of a hole $i\in H$  may be substituted with $\soi$ by different trees $t_{i_{j}}$ and $t_{i_k}$, whereas with $\sio$ they are substituted by the same tree $t_{i_{j}}= t_{i_k}$. This means the $i$-th child of a position $u$ in $s$ is duplicated if there is some
$t\in \sig(s(u))$ where $\abs{\leaf_i(t)}\geq 2$.
Hence, $\sio(s)\ssneq\soi(s)$ is possible because of ``duplication''.
\prref{fig:iooi} depicts an example for $\sio(s)\neq\soi(s)$.
\begin{figure}[t]
\begin{center}
		\begin{tikzpicture}[scale=0.8]
		\node(s) at (-9,0) {$s=$};
				\node (ew) at (-8,0) {$x$};
				\node (ewx) at (-8,-1) {$x$};
				\node (c1) at (-8,0) {$\phantom{f}\;,$};
				\node (1) at (-8,-2) {$z$};
				\draw (ew) to (ewx);
				\draw (ewx) to (1);
			\node(sigX) at (-5.8,0) {$\sig(x)=$};
				\node (rew) at (-4.5,0) {$f$};
				\node (c2) at (-4.5,0) {$\phantom{f}\, ,$};
				\node (r1) at (-5,-1) {$1$};
				\node (r2) at (-4,-1) {$1$};
				\draw (rew) to (r1);
				\draw (rew) to (r2);
			\node(sigx) at (-5,-2) {$\sig(z)=\os {a,b}$, };
t\node(sigoiX) at (2,0) {$\in \soi(s)\sm \sio(s)$};

%				\node (1rew) at (-0.5,0) {$f$};
%					\node (comma) at (0.5,0) {$\phantom{f},$};
%				\node (1r1) at (-0,-1) {$a$};
%				\node (1r2) at (-1,-1) {$a$};
%
%				\draw (1rew) to (1r1);
%				\draw (1rew) to (1r2);
%
%				\node (s2) at (-2,0) {$\sio(x^2(z))= \{$};
%				\node (send) at (2,0) {$\mid c=a \vee c=b\}$};
				\node (rew) at (0,0) {$f$};
				\node (r1) at (-1,-1) {$f$};
				\node (r2) at (1,-1) {$f$};
				\node (r21) at (0.5,-2) {$a$};
				\node (r22) at (1.5,-2) {$b$};
				\node (r11) at (-1.5,-2) {$a$};
				\node (r12) at (-0.5,-2) {$a$};

				\draw (rew) to (r1);
				\draw (rew) to (r2);
				\draw (r1) to (r11);
				\draw (r1) to (r12);
				\draw (r2) to (r21);
				\draw (r2) to (r22);
			\end{tikzpicture}
        \end{center}
	\caption{Let $a\neq b$ be two constants and $n\geq 1$. Duplication of the hole~$1$ yields
	$|\sio(x^n(z))|=2$
	and $|\soi(x^n(z))|= 2^{n+1}$. \Ip $\soi(x^n(z))\sm \sio(x^n(z))$ for all $n\geq 1$.}%
\label{fig:iooi}
\end{figure}
 Here, $x$ is a variable of rank $1$ and $z$ is a variable of rank 0 (playing the role of ``end-of-string''). Note that in this situation (with $\sig(x)=f(1,1)$ and  $\sig(z)=\os{a,b}$) neither
$\sio(x^*(z))$ nor $\soi(x^*(z))$ is regular, since every
tree in $\sio(x^nz)$ and in  $\soi(x^nz)$ is a full binary tree with $2^n$ leaves. Reading the labels of the leaves from left-to-right reveals the difference. For $n\geq 1$ and $a\neq b$
the ``leaf-language'' of $\sio(x^nz)$ is the two-element set
$\{a^{2^n},b^{2^n}\}$ whereas the ``leaf-language'' of $\sio(x^nz)$ has $2^n$ elements. It is equal to $\os{a,\, b}^{2^{n}}$.

Consider a modification of the example in \prref{fig:iooi} by letting $\sig(z)=\es$
(or any other subset of trees in $T(\Sig)$) and $\sig(x)=\os{f(1,1),a}$. This leads to a striking situation where
$\sio(x^*(z))$ is not regular but $\soi(x^*(z))$ is the set of all finite trees over $\os{f,a}$. Indeed, $\sio(x^n(z))$
is the set of all full binary trees
with $2^m$ leaves for all $0\leq m < n$. The set $\sio(x^3(z))$ is depicted in \prref{fig:soiall}.
All leaves are labeled by $a$ and all inner nodes are labeled by~$f$. In contrast,
$\soi(x^n(z))$ is the set of all trees in $T(\os{f,a})$ with height less than $n$.
 \begin{figure}[h]
	\begin{center}
		\begin{tikzpicture}[scale=0.8]

		\node(s) at (-10,0) {$s_3=$};
				\node (ew) at (-9,0) {$x$};
				\node (1) at (-8,-0.666) {$x$};
				\node (2) at (-7,-1.333) {$x$};
				\node (3) at (-6,-2) {$z$};

				\draw (ew) to (1);
				\draw (1) to (2);
				\draw (2) to (3);

		\node(s) at (-3,0) {$\sio(s_3)= \quad \{a\quad, $};

				\node (rew) at (-0.5,0) {$f$};
					\node (comma) at (0.5,0) {$\phantom{f},$};
				\node (r1) at (-0,-1) {$a$};
				\node (r2) at (-1,-1) {$a$};

				\draw (rew) to (r1);
				\draw (rew) to (r2);

				\node (rew) at (2,0) {$f$};
				\node (r1) at (1,-1) {$f$};
				\node (r2) at (3,-1) {$f$};
				\node (r21) at (2.5,-2) {$a$};
				\node (r22) at (3.5,-2) {$a$};
				\node (r11) at (0.5,-2) {$a$};
				\node (r12) at (1.5,-2) {$a$};

				\draw (rew) to (r1);
				\draw (rew) to (r2);
				\draw (r1) to (r11);
				\draw (r1) to (r12);
				\draw (r2) to (r21);
				\draw (r2) to (r22);
				\node (bracket) at (3.5,0) {$\}$};
			\end{tikzpicture}
        \end{center}
	\vspace{-0.5cm}
	\caption{$L=x^*(z)$ and $\sig(x)=\os{f(1,1),a}$. Then $\sio(L)$ is not regular, but $\soi(L)=T(\os{f,a})$.}\label{fig:soiall}
	\end{figure}

Yet another variant is given by $\Sig= \os{f,a,b}$, $H=\os 1$, and $\cX=\os{x,y}$ where $\rk(f)=2$,  $\rk(x)=\rk(y)=\rk(a)=1$, and $\rk(b)=0$. Let $\sig(x)=t$ where $t=f(t,1)$ as in \prref{fig:infcomb} and
$\sig(y)=a^*(b)$. Then  we obtain the following situation as depicted in \prref{fig:ffa}.
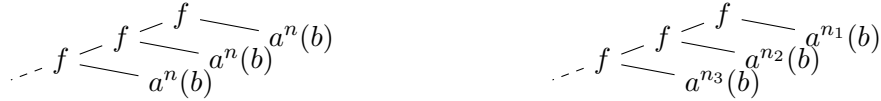
\begin{figure}[t]
\begin{center}
		\begin{tikzpicture}[xscale=0.8,yscale=0.3]
		%\soi on the left
	\node (lrew) at (0,0) {$f$};
				\node (lr1) at (-1,-1) {$f$};
				\node (lr2) at (2,-1) {$a^n(b)$};
				\draw (lrew) to (lr1);
				\draw (lrew) to (lr2);
				\node (lr11) at (-2,-2) {$f$};
				\node (lr12) at (1,-2) {$a^n(b)$};
				\draw (lr1) to (lr11);
				\draw (lr1) to (lr12);
				\node (lr111) at (-3,-3) {};
				\node (lr112) at (0,-3) {$a^n(b)$};
				\draw [dashed] (lr11) to (lr111);
				\draw  (lr11) to (lr112);
			%\si0 on the right

				\node (rew) at (9,0) {$f$};
				\node (r1) at (8,-1) {$f$};
				\node (r2) at (11,-1) {$a^{n_1}(b)$};
				\draw (rew) to (r1);
				\draw (rew) to (r2);
				\node (r11) at (7,-2) {$f$};
				\node (r12) at (10,-2) {$a^{n_2}(b)$};
				\draw (r1) to (r11);
				\draw (r1) to (r12);
				\node (r111) at (6,-3) {};
				\node (r112) at (9,-3) {$a^{n_3}(b)$};
				\draw [dashed] (r11) to (r111);
				\draw  (r11) to (r112);
			\end{tikzpicture}
        \end{center}
	\caption{For all $n\in \N$ there exists $t_n\in \sio(xy(b))$ as shown on the left. For all sequences $(n_i)_{i\in \N}$ there is a tree $t\in \soi(xy(b))$ as on the right. The set $\soi(xy(b))$ is regular, but $\sio(xy(b))$ is not.}%
\label{fig:ffa}
\end{figure}

In the sequel, we show that the domain of $\sio$ can be extended to $T(\Sig\cup\cX)$ such that equation (\ref{eq:siodef}) still holds for trees in $\TfinX(\Sig\cup\cX)$. We shall use two auxiliary concepts: complete metric spaces (\prref{sec:TOObot})  and choice functions~(\prref{sec:ch}).
%%%%%%%%%%%%%%%%%%%%%%%%%

\subsection{Complete metric spaces with \texorpdfstring{$\bot$}{bottom} for ``undefined''}\label{sec:TOObot}
Let us introduce a special constant $\bot\notin \OO$ which plays the role of ``undefined''. The idea is that an empty set $\sio(s)$ is replaced by the singleton $\bot$.
We turn $T(\OO\cup \os\bot)$ into a metric space by defining a metric $d$, which makes sure that  both sets, $\os\bot$ and $T(\OO)$, become clopen (= open and closed) subspaces in $(T(\OO\cup \os\bot),\, d)$. We  let $2^{-\infty}=0$ and define:
 \begin{align*}
 d(s,t)=
 \begin{cases}
 2^{-\inf\set{|u|}{u\in \Pos(s)\cap \Pos(t) \wedge s(u)\neq t(u)}}
 &\text{ if either $s,t\in T(\OO)$ or both, $s,t\notin T(\OO)$,}\\
 1 &\text{ otherwise: $s\in T(\OO) \iff t\notin T(\OO)$.}
\end{cases}
\end{align*}
Note that every constant in $T(\OO\cup \os\bot)$, like the constant $\bot$, has distance $1$ to every other element in $T(\OO\cup \os\bot)$.
As usual, $T(\OO)$ is closed in $T(\OO\cup \os\bot)$ but, thanks to the definition of~$d$, it is also open.
Hence,  each of the subsets $T(\OO)$, $\{\bot\}\cup T(\OO)$, and their complements with respect to $T(\OO\cup \os\bot)$ are clopen subsets in $(T(\OO\cup \os\bot),\, d)$. The restriction to the clopen subspace $\os\bot \cup T(\OO)$ yields a compact ultra-metric space such that
 \begin{align}\label{def:stamet}
 d(s,t)=2^{-\inf\set{|u|}{u\in \Pos(s)\cap \Pos(t) \wedge s(u)\neq t(u)}}
 \end{align}
  holds for all $s,t\in \os\bot \cup T(\OO)$.  The ambient metric space $(T(\OO\cup \os\bot),\, d)$ is not complete and therefore not compact. Indeed, recall that, by our convention, there is some function symbol of rank at least one. Hence, there exists an infinite tree $s$ and a Cauchy sequence $(t_n)_{n\in \N}$ in $T(\OO\cup \os\bot)\sm T(\OO)$ such that $(t_n)_{n\in \N}$ converges to $s$ in the usual Cantor-space, but not in our metric. For example, assume $\rk(a)=1$,
then the Cauchy sequence $(a^n(\bot))_{n\in \N}$ does not have any limit in
 $(T(\OO\cup \os\bot),\, d)$. However,
 as $\bot$ represents ``undefined'', we wish
 that $\lim_{n\to \infty}a^n(\bot)=\bot$. There is a way to achieve that:
 we identify the clopen set $T(\OO\cup \os\bot)\sm T(\OO)$ with the constant $\bot$. Thereby $\bot$ becomes an isolated point. To be precise, let us
 define the equivalence relation $\sim$ on $T(\OO \cup\{\bot\})$
 which is induced by letting $\bot \sim t$ for all  $t\in T(\OO\cup \os \bot)\sm T(\OO)$. Now, we have $a^n(\bot) \sim \bot$ for all~$n$. Hence, the image of the sequence in the quotient space
 $(a^n(\bot))_{n\in \N}$ is equal to the constant sequence $(\bot)_{n\in \N}$ with the obvious limit $\bot$.
 The natural embedding of $\os\bot \cup T(\OO)$ into the quotient space $T(\OO\cup \os \bot)/\sim$ induces an isometry
 between $(T(\OO) \cup \os\bot,d)$ and $(T(\OO\cup \os \bot)/\sim,d_\sim)$
 where $d_\sim$ is the  canonical quotient metric\footnote{The interested reader may consult \prref{sec:pseu} %in the appendix
 for the definition of a \emph{quotient metric} in general topology.\label{foot:pm}}
on $T(\OO\cup \os \bot)/\sim$.

\subsection{Choice functions and the definition of \texorpdfstring{$\sio$}{sigma-io} for infinite trees}\label{sec:ch}
Henceforth, $T_\bot(\Sig\cup H)$ denotes the complete metric space
$(T(\Sig\cup H)\cup \os\bot,d)$ which, by the previous subsection, is identified with the quotient space $(T(\Sig\cup H\cup \os \bot)/\sim,\,d_\sim)$.\\
A \emph{choice function} for $s \in T(\Sig\cup\cX)$ is a mapping
$\gamma : \Pos(s)\to  T_\bot(\Sig \cup  H)$ such that
\begin{align*}
\gam(u)\in \os\bot \cup T(\Sig\cup[\rk(s(u))])\setminus H \text{ with $\gam(u)=f\big(1\lds \rk(f)\big)$ if $f=s(u)\in\Sig \sm\cX$}.
\end{align*}
For $u\in \Pos(s)$ we let $\gam|_u:\Pos(s|_u)\to T_\bot(\Sig \cup  H)$ be the mapping defined by $\gam|_u(u')=\gam(u.u')$. Note that $\gam|_u$ is a choice function for the subtree $s|_u$ whenever $u\in \Pos(s)$ and $\gam$ is a choice function for $s$.
For every $s\in T(\Sig\cup\cX)$ and choice function $\gam$ for $s$, we define the sequence of trees $(\gam_n(s))_{n\in\N}$ in $T(\Sig\cup H\cup\os\bot)$ as follows:
\begin{align}\label{eq:defgamn}
\gam_0(s)=\gam(\eps)\quad \text{ and } \quad \gam_n(s)=\gam(\eps)[i_j\leftarrow (\gam|_i)_{n-1}(s|_i)]\text{ if $n>0$}.
\end{align}
This yields a Cauchy sequence $(\gam_n(s))_{n\in\N}$ in $T_\bot(\Sig\cup H)$ and
therefore, $\lim_{n\to\infty}\gam_n(s)$ exists.
Since choice functions don't map tree positions to holes, we have $\lim_{n\to\infty}\gam_n(s)
\in \os\bot \cup T(\Sig)$.
\begin{defi}\label{def:Gamsio}
Let $\sig:\cX\to 2^{T(\Sig\cup H)\setminus H}$ be a \subst and
$s\in T(\ScX)$. We define:
\begin{enumerate}
\item The tree $\gaminf(s)= \lim_{n\to \infty}\ngam n (s)\in \os\bot \cup T(\Sig)$.
\item The set $\Gam(\sig,s)$ by the set of choice functions for $s$ satisfying for all $u\in \Pos(s)$ with $x=s(u)$ the following condition:
 if $\sig(x)\neq\es$, then $\gam(u)\in \sig(x)$, otherwise $\gam(u)=\bot$.
\item The set $\sio(s)=\{\gam_\infty(s)\mid \gam\in\Gam(\sig,s)\}\sm \os \bot$.
\end{enumerate}
\end{defi}
%%%%%%%%%%%%%%%%%%%%
\begin{prop}\label{prop:gamsio}
Let $\sig:\cX\to 2^{T(\Sig\cup H)\setminus H}$ be a \subst,
$s=x(s_1\lds s_r)\in T(\Sig \cup  \cX)$ be a tree, and  $\gamma \in \Gamma(\sig,s)$ be a choice function.
Then $\gamma_\infty(s)= \gam(\eps)[i_j\lsa (\gam_{|i})_\infty(s_i)]$.
\end{prop}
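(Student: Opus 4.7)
The plan is to read the identity as continuity of the ``root substitution'' operator applied to sequences whose convergence is guaranteed by \prref{def:Gamsio}. The degenerate case $\gam(\eps)=\bot$ is immediate: by the identification of $T(\OO\cup\os\bot)\sm T(\OO)$ with the isolated point $\bot$, every $\gam_n(s)$ collapses to $\bot$, and also $\bot[i_j\lsa (\gam|_i)_\infty(s_i)]=\bot$, so both sides equal $\bot$. Henceforth assume $\gam(\eps)\in T(\Sig\cup[\rk(x)])\sm H$.

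\textbf{Continuity of the root substitution.} Define the map
\[
\Phi: T_\bot(\Sig\cup H)^r \to T_\bot(\Sig\cup H), \qquad \Phi(t_1\lds t_r) = \gam(\eps)[i_j\lsa t_i].
\]
I claim that $\Phi$ is continuous. Given $N\in\N$, consider $J_N=\set{(i,i_j)}{i\in[r],\,i_j\in\leaf_i(\gam(\eps)),\,|i_j|<N}$; this set is finite because every label of $\gam(\eps)$ has rank at most $\rmax$, so $\gam(\eps)$ is finitely branching. Suppose each $t_i$ agrees with a target $t_i^\infty\in T(\Sig\cup H)$ at all positions of length $<N-|i_j|$, for every $(i,i_j)\in J_N$. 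Then $\Phi(t_1\lds t_r)$ and $\Phi(t_1^\infty\lds t_r^\infty)$ coincide on all positions of length $<N$: a position not lying below any leaf of $\gam(\eps)$ at depth $<N$ inherits its label from $\gam(\eps)$ on both sides, whereas a position $u=i_j.v$ with $(i,i_j)\in J_N$ and $|v|<N-|i_j|$ is labeled by $t_i(v)=t_i^\infty(v)$ on both sides. In terms of the ultra-metric this gives $d(\Phi(t_1\lds t_r),\Phi(t_1^\infty\lds t_r^\infty))\leq 2^{-N}$. The remaining case where some $t_i^\infty=\bot$ is absorbed by the isolation of $\bot$: then $t_i^{(n)}=\bot$ for large $n$, so $\Phi(t_1^{(n)}\lds t_r^{(n)})$ contains $\bot$ and is identified with $\bot=\Phi(t_1^\infty\lds t_r^\infty)$. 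Hence $\Phi$ is continuous.

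\textbf{Conclusion.} By \prref{def:Gamsio}(1) applied to the choice function $\gam|_i$ for $s|_i=s_i$, we have $(\gam|_i)_n(s_i)\to(\gam|_i)_\infty(s_i)$ in $T_\bot(\Sig\cup H)$ for each $i\in[r]$. The recursion \prref{eq:defgamn} reads $\gam_n(s)=\Phi\bigl((\gam|_i)_{n-1}(s_i)\bigr)_{i=1}^r$ for $n\geq 1$, so continuity of $\Phi$ yields
\[
\gaminf(s)=\lim_{n\to\infty}\gam_n(s) = \Phi\bigl((\gam|_i)_\infty(s_i)\bigr)_{i=1}^r = \gam(\eps)[i_j\lsa (\gam|_i)_\infty(s_i)],
\]
as desired. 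The main obstacle is the continuity argument for $\Phi$: $\gam(\eps)$ may be an infinite tree with holes at unbounded depth, and one needs finite branching (bounded rank) to keep $J_N$ finite, which is exactly what makes the quantitative $\delta$--$\eps$ estimate work.
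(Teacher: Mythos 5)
Your proof is correct and follows essentially the same route as the paper's: the paper's argument is the four-line computation that pushes $\lim_{n\to\infty}$ through the substitution $\gam(\eps)[i_j\lsa \cdot\,]$, which is precisely the continuity of your operator $\Phi$. You additionally supply the quantitative justification of that continuity (finite branching makes $J_N$ finite), which the paper leaves implicit; the only cosmetic imprecision is that when $\leaf_i(\gam(\eps))=\es$ the value $\Phi(\ldots)$ with $t_i=\bot$ does not literally ``contain $\bot$'', but $\Phi$ is then constant in that coordinate, so continuity is unaffected.
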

\begin{proof}
Since $\gamma_\infty(s)=\lim_{n\to \infty}\ngam n (s)$ and
$(\gam_{|i})_\infty(s_i)=\lim_{n\to \infty}(\gam_{|i})_n (s_i)$ for all~$i$, we obtain
\begin{align*}
\gam(\eps)[i_j\lsa (\gam_{|i})_\infty(s_i)]
&=  \gam(\eps)[i_j\lsa \lim_{n\to \infty}(\gam_{|i})_n(s_i)]\\
&=  \gam(\eps)[i_j\lsa \lim_{n\to \infty}(\gam_{|i})_{n-1}(s_i)]\\
&=  \lim_{n\to \infty} \gam(\eps)[i_j\lsa (\gam_{|i})_{n-1}(s_i)]\\
 &= \lim_{n\to \infty}\ngam n (s) = \gamma_\infty(s). \qedhere
\end{align*}
\end{proof}
%%%%%%%%%%%%%%%%%%%%
\begin{cor}\label{cor:gamsio}
Let  $s=x(s_1\lds s_r)\in T(\Sig \cup  \cX)$ be a tree. Then we have
$\sio(s)= \set{t_x[i_j\lsa t_i]}{t_x\in \sig(x) \wedge t_i\in \sio(s_i)}$.
\Ip for finite trees the new definition of $\sio(s)$ in \prref{def:Gamsio} agrees with the earlier one given in \prref{eq:siodef}.
\end{cor}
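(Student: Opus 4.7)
The equality $\sio(s)=\set{t_x[i_j\lsa t_i]}{t_x\in\sig(x)\wedge t_i\in\sio(s_i)}$ is essentially a reformulation of \prref{prop:gamsio}; the nontrivial content lies in tracking how $\bot$-values do or do not propagate in the quotient metric space $T_\bot(\Sig\cup H)$. My plan is to prove both inclusions by turning choice functions $\gam$ into tuples $(t_x,t_1,\ldots,t_r)$ and back.

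For $\subseteq$, I pick $t\in\sio(s)$, so $t=\gaminf(s)\neq\bot$ for some $\gam\in\Gam(\sig,s)$. By \prref{prop:gamsio}, $t=\gam(\eps)[i_j\lsa (\gam_{|i})_\infty(s_i)]$. Because any tree containing $\bot$ at some position is identified with $\bot$ in the quotient metric space, the assumption $t\neq\bot$ forces $\gam(\eps)\neq\bot$, hence $t_x:=\gam(\eps)\in\sig(x)$; likewise, for every hole $i$ occurring as a leaf of $t_x$, one must have $(\gam_{|i})_\infty(s_i)\neq\bot$, so $t_i:=(\gam_{|i})_\infty(s_i)\in\sio(s_i)$ since $\gam_{|i}\in\Gam(\sig,s_i)$. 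For $\supseteq$, conversely, I take $t_x\in\sig(x)$ together with witnesses $\gam^{(i)}\in\Gam(\sig,s_i)$ certifying $t_i=(\gam^{(i)})_\infty(s_i)\in\sio(s_i)$, and glue them into a single $\gam\in\Gam(\sig,s)$ by setting $\gam(\eps)=t_x$ and $\gam(i.u')=\gam^{(i)}(u')$ for $u'\in\Pos(s_i)$. Applying \prref{prop:gamsio} once more yields $\gaminf(s)=t_x[i_j\lsa t_i]\in T(\Sig)$, so this tree is indeed in $\sio(s)$.

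The ``in particular'' claim for finite trees then follows by a straightforward induction on the maximum level of a variable-labeled position of $s$. The inductive step is exactly the identity just established, and the two base cases, $s\in T(\Sig)$ giving $\{s\}$ and $s=x()$ with $\rk(x)=0$ giving $\sig(x)$, are read off directly from \prref{def:Gamsio}. The main delicate point throughout is the $\bot$-bookkeeping: in particular, when some $\sio(s_i)$ is empty while hole $i$ does not actually occur in $t_x$, one must read the set-builder in \prref{eq:siodef} as quantifying only over those indices $i$ which label leaves of $t_x$, which is the convention consistent with how the notation $t_x[i_j\lsa t_i]$ is defined.
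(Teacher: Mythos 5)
Your proof is correct and follows essentially the same route as the paper's: both rest on \prref{prop:gamsio} together with the observation that a choice function for $s$ decomposes into an element of $\sig(x)$ (or $\bot$) at the root plus a tuple of choice functions $\gam_{|i}$ for the subtrees $s_i$, and conversely such data glue back to a choice function for $s$. The paper compresses this decomposition into a one-word justification (``trivial'') and does not track $\bot$ explicitly, so your extra bookkeeping --- in particular the remark that the set-builder must be read as quantifying only over holes actually occurring in $t_x$ when some $\sio(s_i)$ is empty --- is a welcome refinement rather than a divergence.
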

%%%%%%%%%%
\begin{proof}
We use the notation as given in \prref{def:Gamsio}.
\begin{align*}
\sio(s)&=\bigcup \set {\gam(\eps)[i_j \lsa (\gam_{|i})_\infty (s_i)]}{\gam\in\Gamma(\sig,s)} & \text{ by \prref{prop:gamsio}}\\
&=  \set{t_x[i_j\lsa (\gam_{|i})_\infty(s_i)]}{t_x\in \sig(x),\gam_{|i}\in\Gam(\sig, s_{|i})}&\text{ trivial}\\
&=  \set{t_x[i_j\lsa t_i]}{t_x\in \sig(x) \wedge t_i\in \sio(s_i)} &
 \text{ by \prref{def:Gamsio}}
\end{align*}
Hence, $\sio(s)= \set{t_x[i_j\lsa t_i]}{t_x\in \sig(x) \wedge t_i\in \sio(s_i)}$ as desired.
\end{proof}
The following examples show that
$\sio(s)$ is not closed in $T_\bot(\Sig)$, in general.
Every (infinite) tree $s\in T(\Sig)$ can be written as a limit $\lim_{n\to \infty} s_n$ where
$s_n$ are finite trees in $T(\Sig)$ (assuming, as usual, that $\Sig$ contains a constant). Then there are Cauchy sequences $(t_n)_{n\in\N}$ with $t_n\in\sio(s_n)$ which do not converge to any tree in $\sio(s)$. In these examples $x$ is a variable of rank one, $\Sig=\os{f,a,b}$ with a constant~$b$,  $\rk(a)=1$, $\rk(f)=2$, and {where only the first hole $1\in H$ is used.}
In the first example we have $s=s_n$ for all $n$.
\begin{exa}\label{ex:limit2}
Consider $s=x(b)$ and $\sig(x)=\{a^n (1)\mid n\in\N\}$. Then $s=\lim_{n\to\infty}s_n$ where $s_n=s$.
We define $t_n=\gam^{(n)}(s)=a^n(b)$ where $\gam^{(n)}\in\Gam(s,\sig)$, $\gam^{(n)}(\eps)=a^n(1)$, and note that $(t_n)_{n\in\N}$ is a Cauchy sequence with every $t_n\in \sio(s_n)$, but \[\lim_{n\to\infty}t_n=a^\oo\notin\sio(s)=\{a^n(b)\in\N\}.\]
\end{exa}
The next example is a modification of \prref{ex:limit2} such that  $s$ becomes infinite and the sequence $s_n$ is increasing
such that $s_{n+1}=s_n[1\lsa x(1)]$.
\begin{exa}\label{ex:limit}
 Let $s=f(a^\oo, x(b))$ and $\sig$ be defined by $\sig(x)=\set{a^n(1)}{n\in\N}$. Then $s=\lim_{n\to \infty} s_n$
where $s_n=f(a^n(1), x(b))$. We have $t_n=f(a^n(1),a^n(b))\in\sio(s_n)$ since for each $n$ we may use the choice function $\gam^{(n)}$ with $\gam^{(n)}(u)= a^n(b)$ where $u=2\in \Pos(s)$. The $t_n$'s form a Cauchy sequence with $t=f(a^\oo,a^\oo)=\lim_{n\to \infty} t_n$ but $t\notin \sio(s) = \set{f(a^\oo,a^n(b))}{n\in\N}$.
\end{exa}
\section{Regular tree languages}\label{sec:regis}
There are several equivalent definitions for regular languages of finite and infinite trees, see~\cite{tata2007,LNCS2500automata,rab69,tho90handbook}.
Here, we will consider regular languages of finite and infinite trees from $T(\Sig\cup\cX\cup H \cup \os \bot)$.
Note that for $\Gam\sse \Del \sse \Sig\cup\cX\cup H \cup \os \bot$, the sets $\Tfin(\Del)$ (more general: $\TfinGam(\Del)$) and~$T(\Del)$ are regular subsets of $T(\Sig\cup\cX\cup H \cup \os \bot)$.
%VD Not NEEDED AT THIS PLACE: Moreover, there are isometrical embeddings of
%$T(\Del)\sse T(\Sig\cup\cX\cup H \cup \os \bot)$ and
%$T_\bot(\Del)\sse T_\bot(\Sig\cup\cX\cup H)$. The element $\bot\in T_\bot(\Sig\cup\cX\cup H)$ is isolated since it has distance at least $1$ to any other element.

\subsection{Nondeterministic tree automata with parity acceptance}\label{sec:parnta}
We use parity-NTAs (nondeterministic tree automata with a parity acceptance condition) as the basic reference  to characterize regular tree languages.
The parity condition is used to accept infinite trees. In our definition a parity-NTA accepts sets of finite and infinite trees.
Alternating tree automata with a parity acceptance condition will be considered later.

Let $\Del$ be any ranked alphabet (finite as usual) with a rank function $\rk:\Del\to \N$.
A \emph{parity-NTA} over $\Del$ is specified by
a tuple $A=(Q,\Del,\del,\chi)$ where
$Q$ is a nonempty finite set of states,
$\del$ is the \emph{\tra relation}, and
$\chi:Q\to C$ is a coloring with $C=\os{1,\lds \abs C}$. Without restriction, we assume that $\abs C$ is odd.
We have
\begin{equation}\label{eq:ntadel}
\del \sse \bigcup_{f\in \Del}Q\times \os{f}\times Q^{\rk(f)}.
\end{equation}
Thus,  $\del$ is a set of tuples
$
\big(p,f,p_{1}\lds p_{r})
$
where $r=\rk(f)$. The acceptance condition is defined as follows.
\begin{defi}\label{def:parityacc}
Let $A$ be a parity-NTA and
let $t\in T(\Del)$.
\begin{itemize}
\item A \emph{run $\rho$ of $t$} is a relabeling of the positions of $t$ by states which is consistent with the \tra{s}. That is, $\rho:\Pos(t)\to Q$ is a mapping such that for all $u\in \Pos(t)$ with $t(u)=f$ there is a \tra $(p,f,p_{1}\lds p_{\rk(f)})\in \del$ such that $\rho(u)=p$ and $\rho(u.j)=p_j$ for all $j\in [\rk(f)]$. (See~\cite{LNCS2500automata} for more details.)\\
If $\rho(\eps)=p$, then we say that $\rho$  is a \emph{run of $t$ at state $p$}.
\item Let $\rho$ be a run and $(u_0,u_1,u_2,\ldots)$ be an infinite path in $\Pos(t)$ such that $u_{j+1}$ is a child of $u_j$. The path is \emph{accepting} if the number
\[\liminf_{k\to \infty}(\chi\rho(u_k))=\max\set{\min\set{\chi\rho(u_i)\in C}{i\geq k}}{k\in \N}\]
is even. This means that the minimal color appearing infinitely often on this path is even. The run $\rho$ is \emph{accepting} if all its all infinite directed paths are accepting.
\item By $\RUN_A(t,p)$ we denote the set of \textbf{accepting} runs of $t$ at  state $p$. (If the context to $A$ is clear, we might simply write $\RUN(t,p)$.)
\item If $p\in Q$, then we let $L(A,p) = \set{t\in T(\Del)}{\RUN_A(t, p) \neq \es}$. It is the \emph{accepted language of $A$ at state $p$}.
\item For $P\sse Q$ we define $L(A,P)$ by
\begin{equation}\label{eq:defLAP}
L(A,P)=\bigcap\set{L(A,p)}{p\in P}.
\end{equation}
The set $L(A,P)\sse T(\Del)$ is called the \emph{accepted language at the set $P$}.
The convention is as usual: $L(A,\es)= T(\Del)$.
\end{itemize}
\end{defi}

\noindent
For $\rk(f)=0$ there are no children; and we accept $f$ at state $p$ \IFF
$(p,f)\in \del$.
Therefore, no final states appear in the specification of parity-NTA\@. For a finite tree $t\in T(\Del)$ we have
$t\in L(A,p)$ \IFF there exists a run $\rho$ of $t$ with $\rho(\eps)=p$ since the parity condition holds vacuously. Thus, for trees in $\Tfin(\Del)$ no other accepting condition is required than the existence of a run.
The following fact is well-known, see for example~\cite{LNCS2500automata}.
We shall use %the characterization in
\prref{prop:regispnta}
as the working  definition for the present paper.
\begin{prop}\label{prop:regispnta}
A language $L\sse T(\Del)$
is regular \IFF there is a parity-NTA $A=(Q,\Del,\del,\chi)$ and a state $p\in Q$ such that $L= L(A,p)$.
\end{prop}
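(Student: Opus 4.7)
The plan is to treat this as an equivalence between two well-established characterizations of regular tree languages and not to re-prove the full theory from scratch. Following the references \cite{tata2007,LNCS2500automata,rab69,tho90handbook} cited in \prref{sec:regis}, we may fix one of the standard definitions of ``regular'' as the baseline — for concreteness, the class of languages accepted by \emph{nondeterministic Muller tree automata} over $\Del$ (equivalently, by nondeterministic Rabin tree automata, or MSO-definable languages) — and then show that this class coincides with the class $\set{L(A,p)}{A\text{ is a parity-NTA},\;p\in Q_A}$.

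For the easy direction, given a parity-NTA $A=(Q,\Del,\del,\chi)$ and a state $p$, one reads off a Muller (or Rabin) tree automaton with the same states and transitions, where the Muller family of accepting state-sets $\cF\sse 2^Q$ is defined to be those $F\sse Q$ such that $\min\set{\chi(q)}{q\in F}$ is even. On any infinite path the set of states occurring infinitely often is some $F\in 2^Q$, and the parity condition on the path is satisfied \IFF $F\in\cF$. Hence $L(A,p)$ equals the Muller-accepted language, which is regular by assumption. Finite trees cause no problem: both conditions reduce to ``a run exists'', as there are no infinite paths to inspect.

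For the converse, starting from a Muller (or Rabin) tree automaton $B$ recognizing $L$, we invoke the classical \emph{index reduction} construction (Mostowski / Latest Appearance Record), which transforms $B$ into an equivalent parity-NTA: the state space is augmented with an LAR component, the transition relation is lifted componentwise, and an appropriate coloring $\chi$ on the enlarged state set translates the Muller (resp.~Rabin) acceptance into parity acceptance along every infinite path. Correctness of this construction is independent of the branching structure — it is applied path by path — so it carries over from the word setting to tree runs verbatim. Singling out the lifted initial state as $p$ yields $L=L(A,p)$.

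The main obstacle is purely bookkeeping: one must verify that the path-wise translation of acceptance conditions preserves acceptance of an entire run, and that the requirement $\abs C$ odd (which we imposed w.l.o.g.) can always be arranged by padding $\chi$ with an unused top color. Both are standard and documented in the cited literature, so the proposition follows by combining these constructions with the chosen baseline definition of regularity.
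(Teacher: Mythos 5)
The paper gives no proof of this proposition at all: it explicitly labels it as a well-known fact, cites the literature, and adopts it as the \emph{working definition} of regularity for the rest of the paper. Your sketch (parity $\to$ Muller by taking $\cF=\set{F\sse Q}{\min\chi(F)\text{ even}}$, and Muller/Rabin $\to$ parity via the latest-appearance-record index reduction applied branchwise, with the finite-tree and odd-$\abs{C}$ details handled as you describe) is a correct and standard account of exactly what those citations cover, so it is consistent with the paper's treatment rather than divergent from it.
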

It is  a well-known classical fact that the class of regular tree languages forms an effective Boolean algebra~\cite{rab69}. This means that first, there is a parity-NTA accepting all trees in $T(\Del)$ and second, given two parity NTAs $A_1$ and $A_2$ with states $p_1$ and $p_2$ respectively, we can effectively construct a parity NTA accepting $L(A_1,p_1)\sm L(A_2,p_2)$.
\Ip
\prref{prop:regispnta} implies that
for each parity-NTA $A$ and all subsets $P,P'\sse Q$ we can effectively construct a parity-NTA $B$ with a single initial state $q$ such that
$L(B,q)= L(A,P)\sm L(A,P')$.

\section{Tasks and profiles}\label{sec:tapro}
In this section we introduce the notions of a \emph{task} and a \emph{profile}. Throughout, we let $H=\os{ 1\lds \abs H}$ be a set of holes and $\Sig$ be a ranked alphabet such that $\rk(f)\leq \abs H$ for all $f\in \Sig$. We also fix a parity-NTA $B=(Q,\Sig,\del,\chi)$ such that
$L(B,p)\sse  T(\Sig)$ for all $p\in Q$.
Every such $B$ is extended to a parity-NTA $B_H$ by adding more \tra{s} ensuring that every hole $i\in H$ is accepted at all states. Thus, we let $B_H=(Q,\Sig\cup H,\del_H,\chi)$ where $\del_H= \del\cup (Q\times H)$.
Clearly, $L(B_H,p)$ is regular and therefore  the complement
$T(\Sig\cup H)\sm L(B_H,p)$ is regular, too. For the rest of this section, a run $\rho$ of a tree $t$ refers to a tree $t\in  T(\Sig\cup H)$ and the
NTA $B_H$. Thus $\RUN(t,p)=\RUN_{B_H}(t,p)$ if not stated explicitly otherwise.
Let $\rho$ be any run of a tree $t:\Pos(t)\to \Sig$ (not necessarily accepting) and let $\alp= (u_0,u_1, \ldots)$, $\alp'= (u'_0,u'_1, \ldots)$ be infinite directed paths in $\Pos(t)$ with $\eps= u_0=u'_0$. We concentrate on infinite paths because these paths decide whether $\rho$ is accepting.
Since directed paths in trees are directed away from the root,
$u_{i+1}$ is a child of $u_{i}$ and $u'_{i+1}$ is a child of $u'_{i}$ for all $i\in \N$.
The paths $\alp$ and $\alp'$ define infinite words over $\Sig$ and
the run $\rho$ defines infinite words $\rho(\alp)$ and $\rho(\alp')$ over $Q$.
Suppose each path is cut into infinitely many finite and nonempty pieces
$w_i$ and $w_i'$ such that we can write
$\rho(\alp)=\rho(w_0)\rho(w_1) \cdots$
and $\rho(\alp')=\rho(w'_0)\rho(w'_1) \cdots$
where all $\rho(w_i)$ and $\rho(w'_i)$ are in  $Q^+$. Assume that
$\alp'$ satisfies the parity condition.
We aim for a condition based on the factors $\rho(w_i)$ and $\rho(w'_i)$ which is strong enough to imply that the other infinite word
$\alp$ satisfies the parity condition, too.\footnote{To find such a sufficient condition we just mimic the standard concept of a strongly recognizing morphism in the theory of $\oo$-regular words as exposed for example in the appendix, \prref{sec:oldconwords}.}

Naturally, such a condition is related to colors which appear in $\chi\rho(w_i)$ and $\chi\rho(w'_i)$. More precisely,
let $c_i$ (resp.~$c'_i$) denote the minimal color
in $\chi\rho(w_i)$ (resp.~$\chi\rho(w'_i)$) for $i\in \N$.
We obtain two infinite words $(c_0,c_1, \ldots)$ and
$(c'_0,c'_1, \ldots)$ over the alphabet $C$ of colors.
If we compare  locally each color $c_i$ with the corresponding
color $c'_i$, then intuitively, with respect to the parity condition, an even color is better than an odd color.
A small even color is better than a large even color. However, a large odd color is better than a small odd color. Based on that intuition, let us introduce the \emph{best-ordering}
$\prebest$ on the set $\os{0\lds \abs C}$.
Note that we explicitly include $0$ which is not in $\chi(Q)$
 and that $\abs C$ is odd by our convention. We let
\begin{align}\label{eq:prebest}
0\prebest 2 \prebest \cdots \prebest \abs C-1 \prebest\abs C \prebest \abs C-2 \prebest \cdots \prebest 3 \prebest 1
\end{align}
Indeed, in the linear order $\prebest$ all even numbers are ``better'' than the odd numbers.
 Among even numbers smaller is better. Among odd numbers larger is better.

As a consequence, if $c_i\prebest c_i'$ for all $i\in \N$ and if
the parity condition holds for $\alp'$, then it holds for $\alp$.

\begin{defi}\label{def:task}
A \emph{task} is a tuple $(p,\psi_1\lds \psi_{|H|})$ with $p\in Q$ and $\psi_i : Q\to \os{0}\cup \chi(Q)$ for $1\leq i \leq |H|$.
A \emph{profile} is a set of tasks.
\end{defi}
The range of $\psi_i$ in \prref{def:task} is a subset of the set
 $\os{0\lds \abs C}$ where $0$ does not belong to $\chi(Q)$, but it is the best number in the linear order $\prebest$. As we will see, for ``satisfying'' a task the value $\psi_i(q)=0$ is better than any value in $\chi(Q)$, The best (resp.~worst) value in $\chi(Q)$ is the smallest even (resp.~odd) number.
The best-ordering defines a  partial order on the set of tasks:
we write $(p,\psi_1\lds \psi_{|H|})\leq (p',\psi'_1\lds \psi'_{|H|})$
if first, $p=p'$ and second, $\psi_i(q)\prebest \psi'_i(q)$ for all $q\in Q$ and
$i\in H$.
\begin{defi}\label{def:runmodtask}
Let $\tau=(p,\psi_1\lds \psi_{|H|})$ be a task and
{$t,t'\in T(\Sig \cup  H \cup \os\bot)$} be trees.
\begin{enumerate}
\item A run $\rho\in \RUN_{B_H}(t,p)$ \emph{satisfies $\tau$} if the following condition holds for all leaves $i_j\in \leaf_i(t)$:
\begin{itemize}
\item If $c_\rho(i_j)$ denotes the minimal
color (\wrt to the natural order $\leq$ for natural numbers)  on the path from the root of $t$ to position $i_j\in \leaf_i(t)$ in the tree $\chi\rho:\Pos(t)\to C$, then we have
$c_\rho(i_j) \prebest\psi_i(\rho(i_j))$.
\end{itemize}
We also write $\rho\models \tau$ in that case. Moreover, we write
$t\models \tau$ if there exists some $\rho\in \RUN_{B_H}(t,p)$ such that $\rho\models \tau$. Note that a tree containing the symbol $\bot$ cannot satisfy any task because there is no run on such a tree.

\item A run $\rho\in \RUN_{B_H}(t,p)$ \emph{defines $\tau$} if
first, $\rho\models \tau$ and second, for all $q\in Q$ and
$i\in H$ we have $\psi_i(q)\geq 1 \iff \exists i_j\in \leaf_i(t)
: \psi_i(q) = c_\rho (i_j)$.
\item The set of all tasks $\tau$  such that there is a tree
$t\in T(\Sig\cup H\cup \{\bot\})$ with $t\models \tau $ is denoted by
$\cT = \cT_B$.
\item By $\pi(t)$ we denote the set of tasks $\tau$ such that $t\models \tau$. \Ip $\tau \in \pi(t)$ implies $\tau \in \cT$.
\item The set of all profiles $\pi$ such that there is a tree
$t\in T(\Sig \cup  H \cup \os\bot)$ with $\pi=\pi(t)$ is denoted by $\Pro = \Pro_B$.
\item By ${\equiv_B}$ we denote the equivalence relation which is defined by $t\equiv_B t'\iff \pi(t)= \pi(t')$.
\end{enumerate}
\end{defi}

\noindent
Note that $\pi(t)=\es$ \IFF $\RUN_{B_H}(t,p)=\es$ for all $p\in Q$. Since $\bot\notin T(\Sig \cup H)$, we have $\pi(\bot)=\es$.
The index of ${\equiv_B}$ is finite since it is bounded by $\abs \Pro$.
More precisely,
\begin{equation}\label{eq:numprof}
\abs{\set{\set{t\in T(\Sig\cup H)}{t\equiv_B t'}}{t'\in T(\Sig\cup H)}} \leq
\abs{\Pro_B} \leq 2^{\abs Q\cdot (\abs C +1)^{\abs{H\times Q}}}.
\end{equation}
The value $0$ in the range of $\psi_i$ plays the following role:
Let $q\in Q$ and $\rho\models (p,\psi_1\lds \psi_{|H|})$ with $\rho\in \Run p t$. Then $\psi_i(q)=0$ implies   $\set{i_j\in \leaf_i(t)}{\rho(i_j)=q}=\es$. The condition is vacuously true if $\leaf_i(t)=\es$.
 The following proposition says that $\set{t\in T(\Sig \cup H)}{t\models \tau}$ is effectively regular for every task $\tau\in \cT_B$.
The proof of \prref{prop:treg} is based on a product construction where the first component simulates the NTA $B_H$ and the second component  remembers the minimal color appearing on paths in runs $\rho$ of $B_H$ at states $p$.
\begin{prop}\label{prop:treg} Let $\tau= (p,\psi_1\lds \psi_{|H|})\in \cT_B$ be any task.
Then the set of trees $\set{t\in T(\Sig \cup H)}{t\models \tau}$ is effectively regular.  More precisely,
$\set{t\in T(\Sig \cup H)}{t\models \tau}=$\\ $L(B_\tau,(p,\chi(p)))$ where
$B_\tau= (Q\times C, \Sig \cup H, \del_\tau, \chi_\tau)$ is the following parity-NTA\@.

For a color $c\in C$, a function symbol $f\in \Sig$ with $r=\rk(f)$, a hole $i\in H$, and states $q, p_1\lds p_r\in Q$ we define  $\del_\tau$ by the following equivalences:
\begin{align}\label{eq:lem101}
\hspace{-0.6cm}
\big((q,c) ,f, (p_1,\min\os{c,\chi(p_1)})\lds
(p_r,\min\os{c,\chi(p_r) })\big)\in\del_\tau &\iff (q,f, p_1\lds p_{r}) \in \del_B\\
\big((q,c),i\big)\in \del_\tau &\iff c\prebest \psi_i(q)\label{eq:lem102}
\end{align}
The color of a pair $(q,c)$ is defined by
$\chi_\tau(q,c) = \chi(q)$.
\end{prop}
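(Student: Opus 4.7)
The plan is to establish a bijection between accepting runs of $B_H$ at $p$ on $t$ that satisfy $\tau$ and accepting runs of $B_\tau$ at $(p,\chi(p))$ on $t$. The intuition behind $B_\tau$ is that the second coordinate of a state $(q,c)$ stores the running minimum color (in the natural order on $\N$) on the directed path from the root of $t$ to the current position. Equation \prref{eq:lem101} propagates this invariant down the tree; \prref{eq:lem102} enforces at every hole-leaf exactly the local condition demanded by the definition of $\rho\models\tau$; and $\chi_\tau(q,c)=\chi(q)$ ensures that every infinite path in a run of $B_\tau$ carries the same color sequence as the corresponding path in the projected $B_H$-run, so the parity acceptance condition transfers freely between the two.

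For the inclusion ``$\sse$'', given $\rho\in\RUN_{B_H}(t,p)$ with $\rho\models\tau$, define $\rho'(u)=(\rho(u),c_\rho(u))$ where $c_\rho(u)=\min\{\chi(\rho(v))\mid v \text{ is a prefix of } u\}$. Then $\rho'(\eps)=(p,\chi(p))$, and since $c_\rho(u.j)=\min\{c_\rho(u),\chi(\rho(u.j))\}$, the map $\rho'$ respects the $\Sig$-transitions of $\del_\tau$ specified by \prref{eq:lem101}. At each leaf $i_j\in\leaf_i(t)$, the assumption $\rho\models\tau$ gives $c_\rho(i_j)\prebest\psi_i(\rho(i_j))$, which is precisely \prref{eq:lem102}, and the parity acceptance of $\rho'$ is inherited from that of $\rho$ because $\chi_\tau$ ignores the second coordinate.

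Conversely, given any accepting run $\rho'\in\RUN_{B_\tau}(t,(p,\chi(p)))$, let $\rho$ be its first-coordinate projection. A short induction on $|u|$ shows that the second coordinate of $\rho'(u)$ equals $c_\rho(u)$: the base case $c_\rho(\eps)=\chi(p)$ is immediate from $\rho'(\eps)=(p,\chi(p))$, and \prref{eq:lem101} forces the inductive step since the second coordinate of each child is uniquely determined by the parent's second coordinate and the child's first coordinate. Projecting each $\Sig$-transition of $\rho'$ onto its first coordinate shows $\rho\in\RUN_{B_H}(t,p)$, while \prref{eq:lem102} at each hole-leaf $i_j$ yields $c_\rho(i_j)\prebest\psi_i(\rho(i_j))$, hence $\rho\models\tau$. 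The parity acceptance of $\rho$ again follows from $\chi_\tau(q,c)=\chi(q)$. The only point requiring a little care is pinning down the invariant identifying the second coordinate of $\rho'$ with the running minimum color, and this is dictated by the rigid format of \prref{eq:lem101}; everything else is a direct unfolding of the definitions.
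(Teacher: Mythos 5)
Your proof is correct and follows essentially the same route as the paper's: a canonical bijection between accepting runs of $B_H$ at $p$ satisfying $\tau$ and accepting runs of $B_\tau$ at $(p,\chi(p))$, with the second state component maintaining the running minimum color along root-to-node paths, the hole transitions enforcing the task condition, and $\chi_\tau(q,c)=\chi(q)$ transferring parity acceptance in both directions. The explicit inductive identification of the second coordinate with $c_\rho(u)$ is exactly the invariant the paper uses (stated there as $c_k=\min\{c_1\lds c_k\}$ along any path).
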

\begin{proof}
Let $\pr1:Q\times C\to Q, (p,c)\mapsto p$ be the projection onto the first component. Let $t\in T(\Sig \cup H)$. The aim is to show that $\pr1$ defines for $t$ a canonical bijection between accepting runs $\rho$ at a state $p$ in $B_H$ satisfying~$\tau$ and  the set of accepting runs $\rho_\tau$ at the state~$(p,\chi(p))$ in~$B_\tau$.
This implies $\set{t\in T(\Sig \cup H)}{t\models \tau}=L(B_\tau,(p,\chi(p)))$ and hence, $\set{t\in T(\Sig \cup H)}{t\models \tau}$ is effectively regular.

The definition of $\del_\tau$ says
that $\pr1 \rho':\Pos(t)\to Q$ is a run in $\RUN_{B_H}(t,p)$
 for every $t\in T(\Sig \cup H)$ and every run
$\rho'\in \RUN_{B_\tau}(t,(p,\chi(p)))$. Here, as usual, $\pr1\rho'=\pr1\circ \rho'$ denotes the composition. Consider any nonempty directed path $\eps,u_1\lds u_k$ in $t$, then, by induction on $k$, we see that $\rho'(u_k)=(q_k,c_k)$ satisfies $c_k=\min\os{c_1\lds c_k}$. We claim that $\pr1\rho' \models \tau$.  Indeed, the minimal color seen on every infinite directed path of the tree $\rho't$ with $\rho'(\eps)=(p,\chi(p))$ is even.
Since $\chi_\tau(q,c) = \chi(q)$, the same holds for $\pr1 \rho'$.
Together with the definition of $\del_\tau((q,c),i)$ we obtain the claim that
$\pr1\rho' \models \tau$.
Thus, the projection onto the first component defines a mapping
\begin{align}\label{eq:projrun}
\pr1: \RUN_{B_\tau}(t,(p,\chi(p)))\to \set{\rho\in \RUN_{B_H}(t,p)}{\rho \models \tau}, \; \rho'\mapsto \pr1(\rho') = \pr1\rho'.
\end{align}
We wish a bijection. Therefore, let us also define a mapping $\rho\mapsto \rho_\tau$ in the other direction from $\set{\rho\in \RUN_{B_H}(t,p)}{\rho \models \tau}$
to $\RUN_{B_\tau}(t,(p,\chi(p)))$. Given $\rho\in \RUN_{B_H}(t,p)$ such that $\rho \models \tau$, we define the run $\rho_\tau$ top-down beginning at the root with $\rho_\tau(\eps)=(p,\chi(p))$.
Now, assume that $\rho_\tau(u)=(q,c)$ is defined for $u\in \Pos(t)$ such that $\rho(u)=q$. Suppose $t(u)=f$ and $\rk(f)=r$
Then, thanks to $\rho\in \RUN_{B_H}(t,p)$, there is some $(q,f,p_1\lds p_r)\in \del_{B_H}$ such that
$\rho(u.i)=p_i$ for all $1\leq i \leq r$. However, since
$\rho_\tau(u)=(q,c)$ is fixed there is exactly one corresponding \tra $\big((q,c) ,f, (p_1,\min\os{c,\chi(p_1)})\lds
(p_r,\min\os{c,\chi(p_r) })\big)\in\del_\tau$ and we obtain
 $\rho_\tau(u.i)=\min\os{c,\chi(p_i)}$ for all $1\leq i \leq r$.
 This is clear for $f\in \Sig$ by (\ref{eq:lem101}) and
 $f=i\in H$ by (\ref{eq:lem102}) because $\rho\models \tau$.
 Moreover, since every infinite path in the tree $\rho$
 satisfies the parity condition, the same is true for the run
 $\rho_\tau$ because the color of a state $\chi(q,c)$ is defined by the first component: $\chi(q,c)=\chi(q)$. Thus, $\rho_\tau\in
 \RUN_{B_\tau}(t,(p,\chi(p)))$ as desired.
 A straightforward inspection shows $(\pr1\rho')_\tau = \rho'$
 for all $\rho'\in
 \RUN_{B_\tau}(t,(p,\chi(p)))$ and $\pr1(\rho_\tau) = \rho$
 for all $\rho\in \RUN_{B_H}(t,p)$ where $\rho \models \tau$.
  This shows that the mapping $\pr1$  defined in (\ref{eq:projrun}) is bijective. We conclude
  $\set{t\in T(\Sig \cup H)}{t\models \tau}= L(B_\tau,(p,\chi(p)))$. \end{proof}

\begin{cor}\label{cor:pireg}
Let $\pi$ be a profile, then $\set{t\in T(\Sig \cup H \cup \{\bot\})}{\pi(t)=\pi}$ is effectively regular.
\end{cor}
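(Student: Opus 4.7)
The plan is to express $\{t : \pi(t) = \pi\}$ as an effective finite Boolean combination of the regular languages guaranteed by \prref{prop:treg}, and then invoke the effective Boolean algebra of regular tree languages recalled just after \prref{prop:regispnta}.

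First, I would observe that the set of all syntactically possible tasks $(p, \psi_1, \ldots, \psi_{|H|})$ is \emph{finite}, since $p$ ranges over the finite set $Q$ and each $\psi_i$ is a function from the finite set $Q$ into the finite set $\{0\} \cup \chi(Q)$. In particular, $\cT_B$ is finite, and any profile $\pi$ is a finite subset of this finite set. Enumerate all tasks as $\tau_1, \ldots, \tau_N$.

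Next, I would write
\begin{equation*}
\{t \in T(\Sig \cup H \cup \{\bot\}) : \pi(t) = \pi\}
= \bigcap_{\tau \in \pi} K_\tau \;\cap\; \bigcap_{\tau \notin \pi} \bigl(T(\Sig \cup H \cup \{\bot\}) \setminus K_\tau\bigr),
\end{equation*}
where $K_\tau := \{t \in T(\Sig \cup H \cup \{\bot\}) : t \models \tau\}$. Because a tree containing the symbol $\bot$ cannot satisfy any task (as noted right after \prref{def:runmodtask}), we in fact have $K_\tau = \{t \in T(\Sig \cup H) : t \models \tau\}$. By \prref{prop:treg}, each such $K_\tau$ is effectively regular as a subset of $T(\Sig \cup H)$; and since $T(\Sig \cup H)$ is itself a regular subset of $T(\Sig \cup H \cup \{\bot\})$ (e.g.\ accepted by adding a trap state for $\bot$), the set $K_\tau$ is effectively regular inside $T(\Sig \cup H \cup \{\bot\})$ as well.

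Finally, I would conclude by applying closure under finite intersection and complementation in the effective Boolean algebra of regular tree languages. The construction is effective because we know explicitly the parity-NTA $B_\tau$ from \prref{prop:treg} for each of the finitely many tasks $\tau$, and we can algorithmically perform complementation and intersection on parity-NTAs. The only mild subtlety, and arguably the main point to watch, is the domain issue: one must ensure that the membership $\pi(t) = \pi$ is interpreted over $T(\Sig \cup H \cup \{\bot\})$ rather than only over $T(\Sig \cup H)$, so that trees containing $\bot$ are correctly classified into the class with $\pi(t) = \es$; this is handled uniformly by the observation above that $K_\tau \subseteq T(\Sig \cup H)$ for every task $\tau$.
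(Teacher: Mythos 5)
Your proof is correct and follows essentially the same route as the paper: express $\set{t}{\pi(t)=\pi}$ as the finite Boolean combination $\bigcap_{\tau\in\pi}K_\tau\cap\bigcap_{\tau\notin\pi}K_\tau^c$ of the languages from \prref{prop:treg} and invoke the effective Boolean algebra of regular tree languages. The only (cosmetic) difference is that you handle trees containing $\bot$ uniformly via the observation $K_\tau\sse T(\Sig\cup H)$, whereas the paper treats the cases $\pi=\es$ and $\pi\neq\es$ separately; both correctly place $\bot$-trees in the class with empty profile.
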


\begin{proof}
Let $\pi=\es$, first. Then
$\set{t\in T(\Sig \cup H \cup \{\bot\})}{\pi(t)=\pi}$ is the disjoint union of $T(\Sig \cup H \cup \{\bot\})\sm T(\Sig \cup H)$
(which is regular) and the set $\set{t\in T(\Sig \cup H)}{\pi(t)=\pi}$. For $\pi\neq \es$, we have $\set{t\in T(\Sig \cup H \cup \{\bot\})}{\pi(t)=\pi}\sse \set{t\in T(\Sig \cup H )}{\pi(t)=\pi}$.
Thus, it is enough to show that $\set{t\in T(\Sig \cup H)}{\pi(t)=\pi}$ is regular.
This set coincides with
\begin{align}\label{eq:piisreg}
\bigcap_{\tau\in \pi}\set{t\in T(\Sig \cup H)}{t\models\tau} \cap \bigcap_{\tau\notin \pi}
\set{t\in T(\Sig \cup H)}{t\not\models\tau}
\end{align}
which is a finite intersection of languages, or complements of languages, from the set of languages
$\cL=\set{\set{t\in T(\Sig \cup H)}{t\models\tau}}{\tau\in\tau_B}$. By \prref{prop:treg}, the family $\cL$ consists of regular tree languages. Since the class of regular tree languages in $T(\Sig \cup H\cup \{\bot\})$ is an effective Boolean algebra~\cite{rab69}, we conclude that $\set{t\in T(\Sig \cup H \cup \{\bot\})}{\pi(t)=\pi}$ is effectively regular, too.
\end{proof}

\begin{prop}\label{prop:equivtasl}
Let $B$ be a parity NTA with state set $Q$ such that $L(B,p)\sse T(\Sig)$ for all $p\in Q$ and $s \in  T(\Sig \cup \cX)$ be a tree. If $\gam,\gam':\Pos(s) \to T_\bot(\Sig \cup H)$ are two choice functions for $s$
such that $\gam'(u) \equiv_B \gam(u)$ for all $u\in \Pos(s)$, then
$\gamma_\infty(s)\in  L(B,p_0)\iff \gamma_\infty'(s)\in  L(B,p_0)$
for all states $p_0$ of $B$.
\end{prop}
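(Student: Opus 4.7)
My plan is to prove one direction, namely $\gaminf(s)\in L(B,p_0)\Rightarrow \gam'_\infty(s)\in L(B,p_0)$, with the converse following by swapping $\gam$ and $\gam'$. Fix an accepting run $\rho$ of $B$ on $\gaminf(s)$ at $p_0$. Using \prref{prop:gamsio} I view $\gaminf(s)$ as being assembled in \emph{blocks}: at each hole-leaf $i_j$ of a copy of $\gam(u)$ one glues a fresh copy of $(\gam|_i)_\infty(s|_i)$, so every position of $\gaminf(s)$ lies inside exactly one copy of some $\gam(u)$, $u\in\Pos(s)$. I index such a block by a pair $(u,\iota)$, where $\iota$ records the chain of hole-leaves followed from the root-copy of $\gam(\eps)$. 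The restriction $\rho_{u,\iota}$ of $\rho$ to this block is an accepting run in $\RUN_{B_H}(\gam(u), p_{u,\iota})$ for a uniquely determined state $p_{u,\iota}\in Q$, because every infinite directed path inside the block is also a path of $\rho$.

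Next, for each $(u,\iota)$ I take $\tau_{u,\iota}$ to be the task defined by $\rho_{u,\iota}$ in the sense of \prref{def:runmodtask}(2), so in particular $\gam(u)\models \tau_{u,\iota}$. The hypothesis $\gam(u)\equiv_B\gam'(u)$ yields $\tau_{u,\iota}\in\pi(\gam(u))=\pi(\gam'(u))$, hence $\gam'(u)\models\tau_{u,\iota}$, and I select a witness $\rho'_{u,\iota}\in\RUN_{B_H}(\gam'(u), p_{u,\iota})$ satisfying $\tau_{u,\iota}$. From these local pieces I build a candidate run $\rho'$ on $\gam'_\infty(s)$ top-down: install $\rho'_{\eps,\iota_0}$ at the root copy of $\gam'(\eps)$; then, whenever a hole-leaf labelled $i$ of a populated copy of $\gam'(u)$ carries state $q$, glue a fresh copy of $\gam'(u.i)$ and populate it with $\rho'_{u.i,\iota''}$ for some original block $(u.i,\iota'')$ rooted at state $q$. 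Such an index $\iota''$ must exist: the satisfaction $\rho'_{u,\iota}\models \tau_{u,\iota}$ rules out $\psi_i^{u,\iota}(q)=0$, and the defining property of $\tau_{u,\iota}$ then produces an $i$-leaf of $\gam(u)$ which $\rho_{u,\iota}$ labels $q$ and which in $\rho$ leads to an original child block rooted at~$q$.

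The main verification is that $\rho'$ is accepting. An infinite directed path $\alp'$ of $\rho'$ either stays eventually in a single block $(u,\iota)$---in which case the parity condition is inherited from the accepting run $\rho'_{u,\iota}$---or it traverses infinitely many blocks $(u_0,\iota_0),(u_1,\iota_1),\ldots$ with $u_{k+1}=u_k.i_k$. In the latter case write $c'_k$ for the minimum color on the $k$-th block-segment of $\alp'$ and $q_k$ for the state at the exit leaf. The task satisfaction $\rho'_{u_k,\iota_k}\models\tau_{u_k,\iota_k}$ gives $c'_k\prebest\psi^{u_k,\iota_k}_{i_k}(q_k)$, and since $\tau_{u_k,\iota_k}$ is defined by $\rho_{u_k,\iota_k}$ that threshold is realized as the minimum color $c_k$ along the path from the root of $\gam(u_k)$ to a specific $i_k$-leaf which $\rho_{u_k,\iota_k}$ labels $q_k$. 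Stitching these witnesses together inside $\rho$ produces an infinite directed path $\alp$ of $\gaminf(s)$ with per-block minima $c_k$, and the acceptance of $\rho$ forces $\liminf_k c_k$ to be even. Invoking the $\prebest$-monotonicity principle recalled just before \prref{def:task}---a pointwise $\prebest$-better sequence of segment minima preserves parity acceptance---I conclude that $\liminf_k c'_k$ is even as well.

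The step I expect to be the main obstacle is this last one: carrying the parity condition from the old path $\alp$ to the new path $\alp'$ is exactly where the $\prebest$ order and the ``defined by'' clause of \prref{def:runmodtask} are needed, since without the sharpness of $\tau_{u,\iota}$ the relation $c'_k\prebest c_k$ could fail along some segments and the limit argument would break down. Everything else is bookkeeping about matching blocks by root state and transferring local runs, which is precisely what the task/profile apparatus of \prref{sec:tapro} was designed to support.
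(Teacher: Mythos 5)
Your proposal is correct and follows essentially the same route as the paper's proof: fix an accepting run on $\gaminf(s)$, cut it into blocks matching the copies of $\gam(u)$, extract from each block the minimal task it defines, use $\gam(u)\equiv_B\gam'(u)$ to pick a local run of $\gam'(u)$ satisfying that task, glue these top-down into a run on $\gam'_\infty(s)$, and transfer the parity condition segment-by-segment via the $\prebest$-monotonicity observation, with the witness leaves from the ``defined by'' clause supplying the comparison path in the original run. Your blocks $(u,\iota)$ and the state-matched child selection correspond exactly to the paper's $\bet$- and $\alp$-sequences and its mapping $\theta$.
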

In the following proof we encounter trees with infinite directed paths and where nodes may have an infinite degree:
the $\bet$-sequences. This requires some attention.
%%%%%%%%%%%%%%%%%%%%
\begin{proof}Let $s=x(s_1\lds s_r)$ and $\gam_\infty(s)\in  L(B,p_0)$. By symmetry, it is enough to show $\gamma'_\infty(s)\in  L(B,p_0)$. Recall that
$\gam_\infty(s)$ is the limit of trees $\gam_n(s)$.
\prref{fig:pgnps} depicts the basic relationship between paths in $\gam_n(s)$ and
paths in $s$. For the path in $s$ it shows  positions $u\in \Pos(s)$ which can be identified with positions on the corresponding path in $\gam_n(s)$.
Let us explain the relationship of the depicted paths in more detail.
We begin with a directed path $p$ in $\gam_n(s)$ starting at
the root $\eps$ which defines $t_0=\gam(\eps)$. The path $p$ stays inside $t_0$ or it leaves $t_0$ using a hole labeled by some $h_1\in H$.
Assuming the second case, we let $u_0=\eps$ and $h_1=u_1\in \Pos(s)\cap \Nat$. Hence, $t_1= \gam(u_1)$ exists\footnote{Observe that
$\Pos(s)\cap H=[r]$. Thus, the choice function determines $t_1$.}. The path $p$ continues in the tree $\gam_n(s)$. Again, there are two possibilities: it stays inside $t_1$ or it leaves $t_1$ using a hole labeled by some $h_2\in H$. If it leaves, we let $u_2=h_1\,{.}\,h_2 \in \Pos(s)\cap \Nat^2$.
If possible, we continue by defining $u_3\in \Pos(s)$, $t_3=\gam(u_3)$, and $h_3\in H$, etc. {}From the very beginning we face two cases: either the path stops at some leaf of $\gam_n(s)$
(which is also a leaf of $t_n$) labeled by some $h_{n+1}\in H$
or there is some $m\leq n$ such that
the path stays in $t_m$.  \prref{fig:pgnps} shows a situation where  $p$ ends in a leaf of $\gam_n(s)$. This implies that there exists $u_n{.} h_{n+1}\in \Pos(s)$.
Assume $p$ stays in $t_m$ for some $m\leq n$. This path can be finite or it can be infinite. Moreover, it may possibly stay inside $t_m$ although, perhaps, for all $h\in H$ there exist positions $u_m{.} h\in \Pos(s)$. However, in that case, we do not define any position $u_{m+1}$.
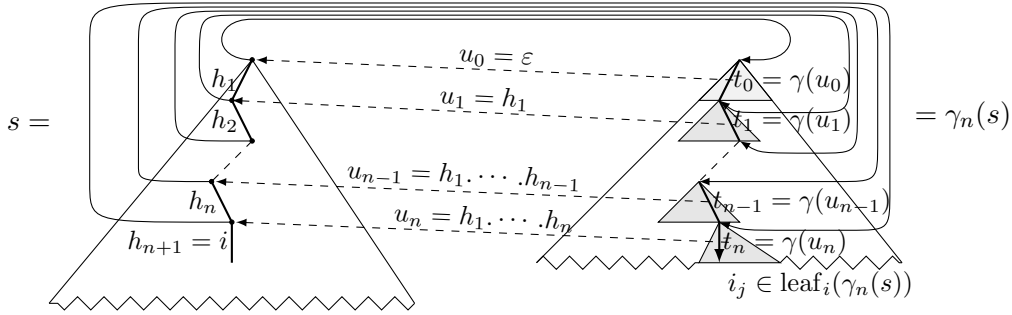
\begin{figure}
\begin{center}
\begin{tikzpicture}[inner sep=1pt,scale=0.54,>=latex]
\node[above=3pt] at (-11.5,1.5) {$s =$};
\node[above=3pt] at (11.5,1.5) {$= \gam_n(s)$};
\draw[fill] (-6,3.5) circle (1.5pt) (-6.5,2.5) circle (1.5pt) (-6,1.5) circle (1.5pt) (-7,.5) circle (1.5pt) (-6.5,-.5) circle (1.5pt);
\draw[thick]  (-6,3.5) -- node[left]{\small $h_1$}(-6.5,2.5);
\draw[thick]  (-6.5,2.5) -- node[left]{\small $h_2$}(-6,1.5);
\draw[dashed] (-6,1.5) -- (-7,0.5);
\draw[thick] (-7,0.5) --  node[left]{\small $h_{n}$} (-6.5,-.5);
\draw[thick] (-6.5,-.5) --  node[left]{\small $h_{n+1}=i$} (-6.5,-1.5);
\draw[fill=gray!10] (6,3.5) -- (5,2.5) -- (6.8,2.5) -- (6,3.5);
\draw[fill=gray!20] (5.5,2.5) -- (4.5,1.5) -- (6.5,1.5) -- (5.5,2.5);
\draw[fill=gray!20] (5,.5) -- (4,-.5) -- (6,-.5) -- (5,.5);
\draw[fill=gray!20] (5.5,-.5) -- (5,-1.5) -- (7,-1.5) --  (5.5,-.5);
\draw[thick,->] (6,3.5) -- node[right]{\small $t_0=\gam(u_0)$} (5.5,2.5) -- node[right]{\small $t_1=\gam(u_1)$} (6,1.5) (5,.5) -- node[right]
{\small $t_{n-1}=\gam(u_{n-1})$}(5.5,-.5) -- node[right]{\small $t_{n}=\gam(u_n)$} (5.5,-1.5);
\node[below right=2pt] at (5.5,-1.5) {\small $i_j\in \leaf_i(\gam_n(s))$};
\draw  (-2,-2.5) -- (-6,3.5) -- (-11,-2.5);
\draw (6,3.5) -- (1,-1.5) (10,-1.5) -- (6,3.5);% (0,-3) -- (-4,3) -- (-9,-3);
\draw decorate [decoration=zigzag] {(1,-1.5) -- (5,-1.5)};
\draw decorate [decoration=zigzag] {(7,-1.5)  -- (10,-1.5)};
\draw decorate [decoration=zigzag] {(-2,-2.5) -- (-11,-2.5)};
\draw[dashed] (6,1.5) -- (5,.5);
\draw[dashed,<-] (-5.9,3.5) -- node[sloped,above]{\small $u_0=\eps$} (5.9,3.);
\draw[dashed,<-] (-6.5,2.5) -- node[sloped,above]{\small $u_1=h_1$} (5.9,1.9);
\draw[dashed,<-] (-6.9,0.5) -- node[sloped,above]{\small $u_{n-1}=h_1.\cdots.h_{n-1}$}(5.3,0);
\draw[dashed,<-] (-6.4,-.5) -- node[sloped,above]{\small $u_n=h_1.\cdots.h_{n}$} (5.8,-1.0);
\draw[->] (-6,3.5)   .. controls (-7,3.5) and (-7,4.5) .. (-6,4.5) --
               (6.5,4.5) .. controls (7.5,4.5) and (7.5,3.5) .. (6.5,3.5) -- (6,3.5);
\draw[->] (-6.5,2.5) .. controls (-7.3,2.5) and (-7.3,3.2) ..
               (-7.3,3.4) .. controls (-7.3,4.6) .. (-6.3,4.6) -- (7.5,4.6)
               .. controls (8.7,4.6)  .. (8.7,3.4) -- (8.7,3.2) .. controls (8.7,2.2) .. (7.7,2.2) -- (6.7,2.2)
                .. controls (5.9,2.2) .. (5.5,2.5);
\draw[->] (-6,1.5) -- (-6.5,1.5) .. controls (-7.9,1.5) .. (-7.9,2.5) -- (-7.9,3.7)
               .. controls (-7.9,4.7) .. (-6.5,4.7) -- (7.9,4.7) .. controls (8.9,4.7) .. (8.9,3.7)
               -- (8.9,2.2) .. controls (8.9,1.2) .. (7.9,1.2) .. controls (6.5,1.2) .. (6,1.5);
\draw[->] (-7,.5)  .. controls (-8.8,.5) .. (-8.8,.9) -- (-8.8,3.8)
               .. controls (-8.8,4.8) .. (-6.7,4.8) -- (8.1,4.8) .. controls (9.4,4.8) .. (9.4,3.8)
               -- (9.4,1.5) .. controls (9.4,.5) .. (8.4,.5) -- (5,.5);
\draw[->] (-6.5,-.5) -- (-7,-.5) .. controls (-10,-.5) .. (-10,.5) -- (-10,3.9)
               .. controls (-10,4.9) .. (-6.6,4.9) -- (8.3,4.9) .. controls (9.7,4.9) .. (9.7,3.9)
               -- (9.7,.3) .. controls (9.7,-.7).. (8.7,-.7) -- (7.3,-.7) .. controls (6.1,-.7) .. (5.5,-.5);
\end{tikzpicture}
\end{center}
\caption{A directed path  in $\gam_n(s)$ starting at the root maps to a unique sequence of positions $\eps, h_1, h_1.h_2 \ldots$ in  $s$ (indicated by dashed arrows). These positions of the sequence can be identified with positions in $\gam_n(s)$: the top positions of the small triangles (indicated by plain arrows from positions in $s$ to positions in $\gam_n(s)$). The depicted directed path ends in a leaf $i_j$, but it could also stay inside $t_n$ or end at a leaf of $t_n$ which is a constant.}\label{fig:pgnps}
\end{figure}

Now, consider
 $u\in \Pos(s)$ and $p\in Q$ such that $\RUN_{B_H}(\gam(u),p)\neq \es$. Then each run $\rho\in  \RUN_{B_H}(\gam(u),p)$
defines a task $\tau_\rho= (p,\psi_1\lds \psi_{|H|})$ by\begin{equation}\label{eq:mintaurp}
\psi_i(q) =
 \text{sup}_{\prebest}\set{c_\rho(i_j) \in \os{0\lds \abs C}}{\exists i_j\, : q =  \rho(i_j)}.
\end{equation}
The existence of $\rho$ implies that
 $\gam(u)\models \tau_\rho$. The task $\tau_\rho$ is minimal\footnote{Minimal \wrt  the natural partial order on tasks defined by $\prebest$} among all tasks $\tau$ such that $\gam(u)\models \tau$. The minimality of $\tau_\rho$ implies that there is some leaf $i_j$ in $\gam(u)$ where $q =  \rho(i_j)$ \IFF  $\psi_i(q) \geq 1$.
 Moreover,  Since $\gam(u)\equiv_B\gam'(u)$, there exists
 some $\rho'\in \RUN_{B_H}(\gam'(u),p)$ such that $\rho'\models \tau_\rho$. It follows that there is a mapping
\begin{align}\label{eq:rho2'}
\theta:\RUN_{B_H}(\gam(u),p) \to \RUN_{B_H}(\gam'(u),p), \quad\rho \mapsto\rho'
\end{align}
such that $\rho'=\theta(\rho)$ satisfies $\rho'\models \tau_\rho$.

The definition of $\theta$ is crucial to prove $\gamma'_\infty(s)\in  L(B,p_0)$. Note that the shape of $t=\gam(u)$ and $t'=\gam'(u)$ might be very different. For example, $t$ can be finite whereas $t'$ is infinite, or vice versa. The trees share  however $\leaf_i(t)\neq\es \iff \leaf_i(t')\neq\es$ because $\gam(u)\equiv_B\gam'(u)$. Still, the cardinalities of $\leaf_i(t)$ and $\leaf_i(t')$
might differ drastically.

The rest of the proof uses an additional concept of an \emph{IO-prefix}. Let $t,t_\eps, t_i\in T_\bot(\Sig \cup H)$ be trees for each
$i\in H$ where $\leaf_i(t_\eps)\neq \es$.
We say that \emph{$t_\eps$ is an IO-prefix of $t$} if we can write
$t=t_\eps[i_j\lsa t_i]$.

In order to see $\gamma'_\infty(s)\in  L(B,p_0)$ we fix a run
$\rho\in \Run{\gamma_\infty(s)}{p_0}$ witnessing $\gamma_\infty(s)\in  L(B,p_0)$. Using this fixed $\rho$ we
construct a set of (finite or infinite) sequences
$\alp= (u_0,p_0,\rho_0) \cdots (u_k,p_k,\rho_k)$ and
$\bet= (u_0,p_0,\rho'_0) \cdots (u_k,p_k,\rho'_k)$ such that
for all $i$ we have $u_i\in \Pos(s)$, $p_i\in Q$, and
$\rho_i\in \Run{\gam(u_i)}{p_i}$ (resp.~$\rho'_i\in \Run{\gam'(u_i)}{p_i}$). See \prref{fig:pgnps} for an intuition about the positions $u_i$.
We require that  $u_{j+1}$ is child of $u_j$ for all $0\leq j<k$.
The set of sequences $\alp$ (resp.~$\bet$) form themselves a rooted tree where the
root is the empty sequence. The parent of a nonempty sequence $w_0 \cdots w_{k-1}w_k$ of length $k+1$.
As we will see later, each node in the tree defined by an $\alp$-sequence has at most $\abs{H\times C}$ children whereas a node in the tree defined by a $\bet$-sequence may have infinitely many children. The number of children is actually equal to the size of the index sets $I_\alp$ resp.~$I_\bet$ as defined in (\ref{eq:Ialp}) and (\ref{eq:Ibet})
below.

Every such sequence $\alp= (u_0,p_0,\rho_0) \cdots (u_k,p_k,\rho_k)$  will define below a unique position $\nu(\alp)$ in $ \gam_\infty(s)$ such that the subtree of $\gam_\infty(s)$ at position $\nu(\alp)$ has the tree $\gam(u_k)$ as an IO-prefix as defined above. The same will be true for a sequence $\bet$ with respect to  $\gam'_\infty(s)$. The position of the empty sequence $\nu(\eps)$ is $\eps$ which is the root of $\gam_\infty(s)$ and $\gam'_\infty(s)$.
We define a sequence $\alp_1=(u_0,p_0,\rho_0)= (\eps,p_0,\rho_0)$ such that
$\rho_0\in \Run{\gam(\eps)}{p_0}$ is induced by $\rho\in \Run{\gam_\infty(s)}{p_0}$.
This yields  $\bet_1=(\eps,p_0,\rho'_0)$ with $\rho'_0= \theta(\rho_0)$ as it was defined by (\ref{eq:rho2'}).

Now, let $k\in \N$ and assume that a sequence $\alp= (u_0,p_0,\rho_0) \cdots (u_{k},p_{k},\rho_{k})$
is already defined.
By induction, we also know a position $\nu(\alp')\in \Pos(\gaminf(s))$ where $\alp'$ is defined by the equation $\alp'(u_{k},p_{k},\rho_{k})= \alp$.
Let
\begin{equation}\label{eq:Ialp}
I_\alp=\set{(i,q)\in H\times Q}{\exists i_j \in \leaf_i(\gam(u_k)): \rho_k(i_j)= q}.
\end{equation}
For each $(i,q)\in I_\alp$ we define a triple
$(u_k{.}i,q,\rho(i,q))$. Here, $u_k.i$ is the $i$-th child of $u_k\in \Pos(s)$. The children of $\alp$ are defined by the sequences
$\alp \cdot (u_k.i,q,\rho(i,q))$. Thus, there are at most $|H\times C|$ children in $\alp$-sequences. Before we define the run $\rho(i,q)$, let us define the position $\nu(\alp)\in \Pos(\gaminf(s))$. By induction, the subtree at position $\nu(\alp')\in \Pos(\gaminf(s))$ has $\gam(u_k)$ as an IO-prefix; and every leaf $i_\ell\in \leaf_i(\gam(u_k))$ defines a unique position in $\gaminf(s)$
in the subtree at position $\nu(\alp')$.
We choose $\nu(\alp)$ as a leaf in $\gam(u_k)$. For that we choose any leaf
$i_j\in \leaf_i(\gam(u_k))$ such that $c_{\rho_k}(i_j)=\psi_i(q)$
where $\psi_i(q)$ refers to the task $\tau_{\rho_k}$ defined in
(\ref{eq:mintaurp}).
%For example, we can choose $i_j$ to be the first one in the length-lexicographical ordering of leaves.
Hence, the minimal color on the path from the root in $\gam(u_k)$ to the chosen leaf $i_j$ induced by the run $\rho_k$  is the greatest among all these colors among all $i_\ell\in \leaf_i(\gam(u_k))$ with respect to best ordering. Since the root
of $\gam(u_k)$ is $\nu(\alp')$ we can define $\nu(\alp)$ by the corresponding position of the chosen position $i_j$.
The subtree of $\gam_\infty(s)$ at position $\nu(\alp)$ has the tree $\gam(u_k.i)$ as an IO-prefix. Hence, $\rho$ induces
a run $\rho(i,q)\in \Run{\gam(u_k.i)}{q}$; and $(u_k.i,q,\rho(i,q))$ is defined by $(i,q)\in I_\alp$. Note that $\alp$ has a child \IFF
$I_\alp\neq \es$. If $I_\alp= \es$ the sequence $\alp$ is a leaf in the corresponding tree.

The definition of the tree of $\bet$-sequences is defined analogously.
We let $k\in \N$ and assume that a sequence $\bet= (u_0,p_0,\rho'_0) \cdots (u_k,p_k,\rho'_k)$
is already defined such that $\rho'_i=\theta(\rho_i)$ for all $1\leq i \leq k$. The mapping $\theta$ was defined in (\ref{eq:rho2'}). By induction, we also know a position $\nu(\bet)\in \Pos(\gam'_\infty(s))$.
It might be that $\bet$ has infinitely many children because we do not choose a particular leaf (as we did for $\alp$-sequences) but have to consider all leaves $i_j$. More precisely,
we define
\begin{equation}\label{eq:Ibet}
I_\bet=\set{(i,q,i_j)\in H\times Q \times \leaf_i(\gam'(u_k))}{i_j \in \leaf_i(\gam'(u_k))\wedge \rho'_k(i_j)= q}.
\end{equation}
The set $I_\bet$ is in a canonical bijection with the union
$\bigcup\set{\leaf_i(\gam'(u_k))}{i\in H}$.
For each $(i,q,i_j)\in I_\bet$ we define the run $\rho'(i,q)= \theta(\rho(i,q))$. (The run $\rho(i,q)$ was defined above for the $\alp$-sequence by choosing a particular leaf.)
Thereby we obtain a triple
$(u_k.i,q,\rho'(i,q))$. For $\bet'= \bet \cdot (u_k.i,q,\rho'(i,q))$ the position $\nu(\bet')\in \Pos(\gam'_\infty(s))$ is given in analogy to the
position $\nu(\alp')$. That is, we consider the position  $\nu(\bet)\in \Pos(\gam'_\infty(s))$. The subtree $\nu(\bet)$ has $\gam'(u_k)$ as an IO-prefix; and we let $\nu(\bet')=\nu(\bet (i,q,i_j))$ be the position which corresponds to the leaf $i_j$.
Since $I_\bet$ is in a canonical bijection with the set of all leaves in $\gam'(u_k)$ labeled by some hole $i\in H$, we see that we actually have defined runs $\rho_n':\Pos(\gam'_n(s)) \to Q$ for all $n\in \N$, hence a run
 $\rho':\Pos(\gam'_\infty(s)) \to Q$.

It remains to show that the run $\rho'$ is accepting. Since it is a run, it is enough to consider infinite paths
$w'$. These paths result from a finite or infinite sequence
$\bet= (u_0,p_0,\rho'_0)(u_1,p_1,\rho'_1)\cdots$.
If the infinite path $w'$ is defined by some finite prefix of $\bet$, then it is accepting\footnote{Accepting paths were defined in \prref{def:parityacc}.} because every finite prefix of $\bet$ defines an accepting path of $\gam'_\infty(s)$ at $p_0$.
So, we may assume that  $\bet$ is infinite, too. Hence, $w'$ can be cut into infinitely many finite paths $w'=w'_0 w'_1 \cdots$ where each $w'_i$ is a finite path inside
$\gam'(u_i)$.
Each finite path belongs to a run $\rho_{v'}'$ of $\gam'(u)$ at some position $v'\in \Pos(\gam'_\infty(s))$. It corresponds to the
some run $\rho_v$ of $\gam(u)$ at some position $v\in \Pos(\gam_\infty(s))$. In this way we obtain an infinite
path in $w=w_0 w_1\cdots$ in $\gam_\infty(s)$ where each
$w_i$ is a path in $\gam(u_i)$. Recall that $\bet$ was defined together with a corresponding sequence
$\alp= (u_0,p_0,\rho_0)(u_1,p_1,\rho_1)\cdots$
such that $\rho'_i=\theta(\rho_i)$. Every path in tree $\rho':\Pos(\gam'_\infty(s))$ can be mapped to
some accepting path in $\rho:\Pos(\gam_\infty(s))$. This situation is depicted in \prref{fig:runprime}.
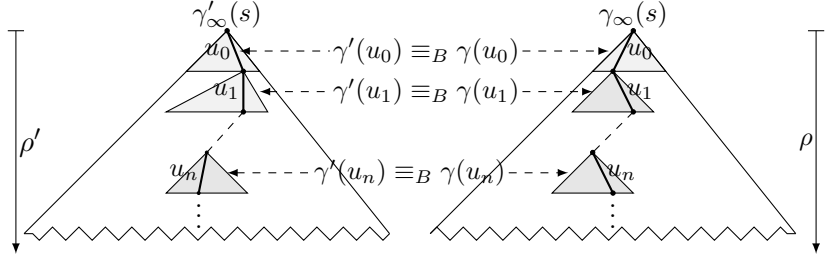
\begin{figure}
\begin{center}
\begin{tikzpicture}[inner sep=1pt,scale=0.54,>=latex]
\draw[fill=gray!10] (3,3.5) -- (2,2.5) -- (3.8,2.5) -- (3,3.5);
\draw[fill=gray!20] (2.5,2.5) -- (1.5,1.5) -- (3.5,1.5) -- (2.5,2.5);
\draw[fill=gray!20] (2,.5) -- (1,-.5) -- (3,-.5) -- (2,.5);
\draw[fill=gray!10] (-7,3.5) -- (-8,2.5) -- (-6.2,2.5) -- (-7,3.5);
\draw[fill=gray!10] (-6.6,2.5) -- (-8.5,1.5) -- (-6,1.5) -- (-6.6,2.5);
\draw[fill=gray!20] (-7.5,.5) -- (-8.5,-.5) -- (-6.5,-.5) -- (-7.5,.5);
\draw[fill] (3,3.5) circle (1.5pt) (2.5,2.5) circle (1.5pt) (3,1.5) circle (1.5pt) (2,.5) circle (1.5pt) (2.5,-.5) circle (1.5pt)
(-7,3.5) circle (1.5pt) (-6.6,2.5) circle (1.5pt) (-6.6,1.5) circle (1.5pt) (-7.5,.5) circle (1pt) (-7.7,-.5) circle (1pt);
\node[below] at (2.5,-.3) {\small $\vdots$};
\draw[thick] (3,3.5) -- node[right]{\small $u_0$} (2.5,2.5) -- node[right]{\small $u_1$} (3,1.5) (2,.5) -- node[right]{\small $u_{n}$}( 2.5,-.5)
          (-7,3.5) -- node[left]{\small $u_0$} (-6.6,2.5) -- node[left]{\small $u_1$} (-6.6,1.5)
          (-7.5,.5) -- node[left]{\small $u_{n}$} (-7.7,-.5);
\node[below] at (-7.7,-.3) {\small $\vdots$};
\node[above] at (3,3.5) {\small $\gam_\infty(s)$};
\node[above] at (-7,3.5) {\small $\gam'_\infty(s)$};
\draw (7,-1.5) -- (3,3.5) -- (-2,-1.5)
          (-3,-1.5) -- (-7,3.5) -- (-12,-1.5);
\draw decorate [decoration=zigzag] {(-2,-1.5) -- (7,-1.5)};
\draw decorate [decoration=zigzag] {(-3,-1.5) -- (-12,-1.5)};
\draw[dashed] (3,1.5) -- (2,.5)  (-7.5,.5) -- (-6.6,1.5);

\node (k1) at (-2.1,3) {\small  $\gam'(u_0)\equiv_B\gam(u_0)$};
\node (k2) at (-2.1,2) {\small  $\gam'(u_1)\equiv_B\gam(u_1)$};
\node (k3) at (-2.5,0) {\small  $\gam'(u_{n})\equiv_B\gam(u_{n})$};

\draw[dashed,<-] (-6.7,3) -- (k1);
\draw[dashed,->] (k1) -- (2.4,3);

\draw[dashed,<-] (-6.2,2) -- (k2);
\draw[dashed,->] (k2) -- (2,2);

\draw[dashed,<-] (-7,0) -- (k3);
\draw[dashed,->] (k3) -- (1.4,0);
\draw (7.3,3.5) -- (7.7,3.5) (-12,3.5) -- (-12.4,3.5);
\draw[->] (7.5,3.5) -- node[left] {\small $\rho$} (7.5,-2);
\draw[->] (-12.2,3.5) -- node[right] {$\rho'$}(-12.2,-2);
\end{tikzpicture}
\end{center}
\caption{The $u_i\in \Pos(s)$ are the top positions of the small triangles
according to \prref{fig:pgnps}. The run $\rho$ defines a run $\rho'$. It satisfies the parity condition because every path in tree $\rho$ has this property.}\label{fig:runprime}
\end{figure}
The task $\tau_v$ defined by $\rho_v$ is satisfied by $\rho_{v'}'$.
Hence, the minimal color seen on the path from the
root in $\gam'(u)$ to a leaf $i_j$ is never greater in the best-ordering
than the minimal color seen on the path from the
root in $\gam(u)$ to a leaf $i_j$. As a consequence, the
minimal color seen infinitely often on the run defined by the
path $w'$ is not greater in the best-ordering than the one defined by $w$. Since the minimal color seen infinitely often defined by $w$ is even, the same is true for the minimal color seen infinitely often defined by $w'$.

Let's repeat the essential steps in the proof  by looking at
\prref{fig:runprime}. A high-level interpretation of the picture shows us the top levels of two trees of possible infinite trees where the  degree of nodes on the right is finite whereas the degree on of nodes on left can be infinite. On the other hand, $\gaminf'(s)$ is just a tree where every vertex has at most $\rmax$ children. In order to see whether $\gaminf'(s)\in L(B,p_0)$, we have to consider all directed infinite paths in $\gaminf'(s)$. The left tree in the picture shows such a path $p'$. This path maps to a unique path in
$\gaminf(s)$ which is chosen such that acceptance is made most difficult, but still the acceptance condition holds because we
$\gaminf(s)$ has an accepting run $\rho$. We use the small triangles on the right to define local runs on the small triangles on the left.
This is possible because $\gam(u)$  and  $\gam'(u)$ satisfy the same set of tasks. The final step is to understand that the local runs on the
left glue together to an accepting run of $\gaminf'(s)$. This was the goal.
\end{proof}

\subsection{The saturation of \texorpdfstring{$\sig$}{sigma}}\label{sec:satsio}
We continue with parity-NTAs
$B=(Q,\Sig,\del,\chi)$ and $B_H= (Q,\Sig \cup H,\del_H,\chi)$
 where, as above, $\del_H=\del\cup (Q\times H)$.

\begin{defi}\label{def:sat}
Let  $\sig:\cX\to 2^{T(\Sig\cup H)\sm H}$ be a \subst.
The  \emph{saturation of  $\sig$} (\wrt~$B$) is the \subst $\wh\sig:\cX\to
2^{T(\Sig\cup H)\sm H}$ defined by
\[\wh\sig(x) = \set{t'\in T(\Sig \cup \cX)\setminus H}{\exists t\in \sig(x): t'\equiv_B t}.\]
\end{defi}
Cleary, $\sig\leq \wh \sig$ %(for the natural partial order by set inclusion for components)
 and $\sig(x)\neq \es\iff\wh \sig(x)\neq \es$.
%The next statement is asymmetric in the NTAs $B_H$ and $B$.
\begin{prop}\label{prop:satsig}
Let $q_0\in Q$ be a state of the NTA $B$ and $L\sse T(\Sig\cup\cX)$ be any subset. Then $\wh\sig(x)$
is regular for all $x\in \cX$. Moreover,
$\sio(L) \sse L(B_H,q_0)\iff
\whsio(L) \sse L(B,q_0)$.
\end{prop}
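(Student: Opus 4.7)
My plan splits into (i) proving regularity of $\wh\sig(x)$ and (ii) handling the two directions of the equivalence separately. For (i), I would observe that $\wh\sig(x)$ is a finite union of $\equiv_B$-classes: it is $\equiv_B$-closed by construction, and $\equiv_B$ has only finitely many classes (bounded by $|\Pro_B|$ as in \prref{eq:numprof}). Hence $\wh\sig(x)$ equals the union, over those profiles $\pi$ realized by some $t\in\sig(x)$, of $\set{t'\in T(\Sig\cup H)\setminus H}{\pi(t')=\pi}$. By \prref{cor:pireg} each such $\pi$-class is effectively regular, and since regular tree languages form an effective Boolean algebra containing $T(\Sig\cup H)\setminus H$, $\wh\sig(x)$ is effectively regular.

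For the easy direction $\whsio(L)\sse L(B,q_0)\Rightarrow\sio(L)\sse L(B_H,q_0)$, my plan is to establish $\sio(L)\sse\whsio(L)$ by monotonicity; combined with $L(B,q_0)\sse L(B_H,q_0)$, this yields the implication. The monotonicity is immediate from $\Gam(\sig,s)\sse\Gam(\wh\sig,s)$: for any $u\in\Pos(s)$ with $s(u)=x$, either $\sig(x)\neq\es$ (equivalently $\wh\sig(x)\neq\es$) and $\gam(u)\in\sig(x)\sse\wh\sig(x)$, or both are empty and $\gam(u)=\bot$, so every $\gam\in\Gam(\sig,s)$ remains admissible for $\wh\sig$.

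For the nontrivial direction $\sio(L)\sse L(B_H,q_0)\Rightarrow\whsio(L)\sse L(B,q_0)$, I would pick $t=\gam'_\infty(s)\in\whsio(L)$ with $s\in L$, $\gam'\in\Gam(\wh\sig,s)$, and $t\neq\bot$, and then build $\gam\in\Gam(\sig,s)$ pointwise $\equiv_B$-equivalent to $\gam'$: at each $u$ with $\sig(s(u))\neq\es$, select $\gam(u)\in\sig(s(u))$ with $\gam(u)\equiv_B\gam'(u)$ (available since $\gam'(u)\in\wh\sig(s(u))$); at positions where $\sig(s(u))=\es$, set $\gam(u)=\bot=\gam'(u)$. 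Then \prref{prop:equivtasl} delivers $\gam_\infty(s)\in L(B_H,q_0)\iff t\in L(B_H,q_0)$. Once $\gam_\infty(s)\neq\bot$ is secured, $\gam_\infty(s)\in\sio(s)\sse\sio(L)\sse L(B_H,q_0)$, and since $\gam_\infty(s)\in T(\Sig)$ this gives $\gam_\infty(s)\in L(B,q_0)$; the equivalence above then forces $t\in L(B,q_0)$, as required.

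The hard part will be guaranteeing $\gam_\infty(s)\neq\bot$: the hole-structures of $\gam(u)$ and $\gam'(u)$ can a priori diverge even though $\gam(u)\equiv_B\gam'(u)$, so $\bot$ could propagate to the root of $\gam_\infty(s)$ without doing so in $\gam'_\infty(s)$. To overcome this I plan to refine the pointwise selection by an induction mirroring the construction of $\gam'_\infty(s)$: at each step I choose $\gam(u)\in\sig(s(u))$ not just $\equiv_B$-equivalent to $\gam'(u)$ but also matching the hole pattern that the surrounding choice function actually exercises, relying on the fine-grained profile information recorded by tasks — specifically, which hole leaves each run is able to reach — to show that such a matching tree always exists within $\sig(s(u))$. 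This is the step where the full strength of \prref{def:runmodtask} and \prref{prop:equivtasl} is invoked.
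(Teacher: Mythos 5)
Your decomposition and your first two parts coincide with the paper's proof: regularity of $\wh\sig(x)$ as a finite union of profile classes via \prref{cor:pireg}, and the easy direction via $\sig\leq\wh\sig$ (hence $\sio(L)\sse\whsio(L)$) combined with $L(B,q_0)\sse L(B_H,q_0)$. For the hard direction the paper does exactly what you do: take $\gam'\in\Gam(\wh\sig,s)$, pick $\gam\in\Gam(\sig,s)$ with $\gam(u)\equiv_B\gam'(u)$ pointwise, infer $\gam_\infty(s)\in L(B,q_0)$ from $\sio(s)\sse L(B,q_0)$, and transfer to $\gam'_\infty(s)$ by \prref{prop:equivtasl}. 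The paper stops there; it never discusses the possibility $\gam_\infty(s)=\bot$, so the issue you isolate in your last paragraph is one the published argument passes over in silence.

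Your proposed repair, however, is not the right one. In the generic case no repair is needed: since $\sig(x)=\es\iff\wh\sig(x)=\es$, the positions carrying $\bot$ are the same for $\gam$ and $\gam'$, and whenever $\pi(\gam(u))=\pi(\gam'(u))\neq\es$ the two trees have the same set of occurring holes (the observation recorded at the start of \prref{sec:specsio}); hence the positions from which a $\bot$ can percolate to the root coincide for the two choice functions and $\gam_\infty(s)=\bot\iff\gam'_\infty(s)=\bot$ with no extra selection at all. The problem is confined to the empty profile, and there your ``matching tree'' need not exist in $\sig(s(u))$: take $\Sig=\os{g,c}$ with $g$ of rank one admitting no transition in $B$, $\sig(x)=\os{g(1)}$, $\sig(y)=\es$, $s=x(y)$. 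Then $\pi(g(1))=\pi(g(c))=\es$, so the hole-free dead tree $g(c)$ lies in $\wh\sig(x)$; one gets $\sio(s)=\es$ while $g(c)\in\whsio(s)$ and $g(c)\notin L(B,q_0)$, so the stated equivalence itself fails here and no selection argument can close the gap. The honest fix is to refine $\equiv_B$ so that equivalent trees also have the same set $H(t)$ of occurring holes (a regular condition that costs nothing elsewhere in the development), or to treat the empty profile separately; you should say this explicitly rather than promise a matching tree that may not exist.
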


\begin{proof}
We have $\wh\sig(x) = \bigcup_{\pi\in\set{\pi(t)}{t\in \sig(x)}}\set{t'\in T(\Sig \cup H)}{\pi(t')=\pi}$. This is a finite union. Thus, $\wh\sig(x)$ is regular by \prref{cor:pireg}.
Moreover, if $\whsio(L) \sse L(B,q_0)$, then $\sio(L) \sse \whsio(L) \sse L(B,q_0) \sse L(B_H,q_0)$. Hence, let $\sio(L) \sse L(B_H,q_0)$. Since $\sio(L) \sse T(\Sig)$, we may
suppose $\sio(L) \sse L(B,q_0)$. It suffices to show that, if $s\in T(\Sig\cup\cX)$ and $\sio(s)\sse L(B,q_0)$, then $\whsio(s)\in L(B,q_0)$, that is, $\gam'_\infty(s)\in L(B,q_0)$ for all $\gam'\in \Gam(s,\wh\sig)$.
Let $\gam'\in \Gam(s,\wh \sig)$. Then, by definition of $\wh\sig$, for every
 $u\in \Pos(s)$, where $\sig(s(u))\neq \es$, we can choose a tree $\gam(u)=t\in \sig(s(u))$ such that
 $\gam'(u)\equiv_B \gam(u)$. Thus, $\gam\in\Gam(s,\sig)$ and $\gam_\infty(s)\in L(B,q_0)$ because $\sio(s)\subseteq L(B,q_0)$. By \prref{prop:treg}, we have $\gam'_\infty(s)\in L(B,q_0)$ too.
 Hence, $\whsio(s) \sse L(B,q_0)$ as desired.
 \end{proof}
%The following corollary is another crucial step towards \prref{thm:s1s2main}.
\begin{cor}\label{cor:maxioreg}
Let $\sig_1,\sig_2:\cX \to 2^{T(\Sig\cup H)\sm H}$ be regular \subst{s} such that $\sig_1\leq \sig_2$,
$R\sse T(\Sig)$ be a regular tree language, and $L\sse T(\Sig\cup\cX)$ be any subset. Then there is an effectively computable finite set $S_2$ of regular \subst{s} such that the following property holds. For every $\sig:\cX\to 2^{T(\Sig\cup H)\sm H}$ satisfying both,  $\sig_1\leq \sig\leq \sig_2$ and $\sio(L)\sse  R$ (resp.~$\sio(L)= R$), there is some maximal \subst
$\sig' \in S_2$ such that $\sio'(L)\sse R$
(resp.~$\sio'(L)= R$) and $\sig\leq \sig'\leq \sig_2$.
\end{cor}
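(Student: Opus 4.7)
The plan is to exploit the saturation $\wh\sig$ from \prref{def:sat}, applied to a fixed parity-NTA $B$ with $R=L(B,q_0)$. By \prref{eq:numprof} the congruence $\equiv_B$ has finite index, and by \prref{cor:pireg} each equivalence class is an effectively regular tree language. Since $\wh\tau(x)$ is always a union of $\equiv_B$-classes intersected with $T(\Sig\cup[\rk(x)])\setminus H$, only finitely many saturated components can appear, and each is effectively regular.

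Call a substitution $\tau$ \emph{reduced} if $\tau(x)=\wh\tau(x)\cap\sig_2(x)$ for every $x\in\cX$. Every reduced $\tau\leq\sig_2$ is determined by a choice, per variable, of a subset of $\equiv_B$-classes, so the family of reduced substitutions below $\sig_2$ is finite and effectively enumerable, and each such $\tau$ is a regular substitution (regular languages being closed under intersection). Set
\[
\cU'=\{\tau \text{ reduced}:\sig_1\leq\tau\leq\sig_2\text{ and }\tau_\mio(L)\sse R\}
\]
(or with ``$=$'' in the equality variant), and let $S_2$ be its set of maximal elements under $\leq$. Since testing $\tau_\mio(L)\sse R$ for regular $\tau$ is decidable under the hypotheses underlying \prref{thm:s1s2main}, $S_2$ is effectively computable.

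Now take any $\sig$ satisfying $\sig_1\leq\sig\leq\sig_2$ and $\sio(L)\sse R$, and define $\tau(x):=\wh\sig(x)\cap\sig_2(x)$. Clearly $\sig\leq\tau\leq\sig_2$, and a short check---using the idempotence $\wh{\wh\sig}=\wh\sig$ together with monotonicity of the saturation operator---shows $\tau=\wh\tau\cap\sig_2$, so $\tau$ is reduced. By \prref{prop:satsig} we have $\whsio(L)\sse R$; since $\tau\leq\wh\sig$ and $\sio$ is monotone in $\sig$ (a direct consequence of \prref{def:Gamsio}, using that unreachable positions do not affect $\gaminf$), it follows that $\tau_\mio(L)\sse R$. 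Hence $\tau\in\cU'$, and some maximal $\sig'\in S_2$ satisfies $\sig\leq\tau\leq\sig'\leq\sig_2$. For the equality variant, the chain $R=\sio(L)\sse\sig'_\mio(L)\sse R$ forces $\sig'_\mio(L)=R$.

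The principal obstacles are essentially bookkeeping: verifying $\wh{\wh\sig}=\wh\sig$, verifying that the componentwise intersection $\wh\sig\cap\sig_2$ is reduced, and establishing monotonicity of $\sio$ on infinite trees---all of which follow routinely from the description of $\wh\tau$ as the union of $\equiv_B$-classes meeting $\tau(x)$ and from the choice-function semantics of $\sio$.
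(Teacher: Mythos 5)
Your construction of the candidate set and the existence argument follow the paper's route (saturation, idempotence $\wh{\wh\sig}=\wh\sig$, intersection with $\sig_2$, monotonicity of $\sio$ in $\sig$), but there is a genuine gap in the effectiveness claim. You define $S_2$ as the set of maximal elements of $\cU'=\set{\tau \text{ reduced}}{\sig_1\leq\tau\leq\sig_2 \wedge \tau_\mio(L)\sse R}$ and assert that $S_2$ is computable because ``testing $\tau_\mio(L)\sse R$ for regular $\tau$ is decidable under the hypotheses underlying \prref{thm:s1s2main}.'' But \prref{cor:maxioreg} makes no such hypothesis: $L$ is an \emph{arbitrary} subset of $T(\Sig\cup\cX)$, not even given in any effective way, so the test $\tau_\mio(L)\sse R$ is not an algorithmic question at all, and your $S_2$ is not effectively computable under the stated assumptions. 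Even in the intended application where $L\in\cC$ with INREG, deciding $\tau_\mio(L)\sse R$ amounts to \prref{lem:SigProf} (regularity of $\tau_\mio^{-1}(R)$, proved only later via alternating automata and parity games) combined with INREG; invoking it here makes the corollary circular within the paper's architecture, since \prref{thm:s1s2main} computes $S_2$ \emph{first} and only afterwards filters it using \prref{lem:SigProf}.

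The fix is exactly what the paper does and costs nothing: do not filter by the inclusion condition. Take $S_2$ to be the full set of reduced substitutions $\wh\sig\cap\sig_2$ with $\sig_1\leq\wh\sig$ (your set of ``reduced $\tau$ with $\sig_1\leq\tau\leq\sig_2$,'' without the clause $\tau_\mio(L)\sse R$). This set is finite and computable from $B$, $\sig_1$, $\sig_2$ alone. Given any solution $\sig$, the element $\tau=\wh\sig\cap\sig_2$ lies in this $S_2$ and is again a solution by your monotonicity argument; since $S_2$ is finite, $\tau$ is dominated by some element of $S_2$ that is maximal \emph{among the solutions in $S_2$}, which is all the corollary asserts (cf.\ \prref{rem:mainres}: the corollary deliberately does not claim one can decide which elements of $S_2$ are solutions). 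The remainder of your argument --- that $\wh\sig\cap\sig_2$ is reduced, and the chain $R=\sio(L)\sse\sig'_\mio(L)\sse R$ for the equality variant --- is fine.
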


\begin{proof}We may assume that  $R=L(B,q_0)$. Throughout the proof,
given any \subst $\sig$, the notation $\wh\sig$ refers to its saturation according to \prref{def:sat}. Moreover, by \prref{prop:satsig}, every  $\wh\sig$ is a regular \subst and
if $\sio(L)\sse R$, then $\sio(L)\sse\whsio(L)\sse R$. Based on these facts, we define in a first step the following set of \subst{s}.
\begin{align}\label{eq:S_0fini}
S_0=\{\wh\sig\mid \sig:\cX\to 2^{T(\Sig\cup H)\sm H}
\text{ is a \subst}\}.
\end{align}
The set $S_0$ is  finite and effectively computable. Its cardinality is bounded by $2^{|\cP_B\times \cX|}$. Since $\sig_1$ and $\wh \sig$ are regular for every
$\wh\sig\in S_0$, we can effectively compute the following subset $S_1$ of $S_0$, too.
\begin{align}\label{eq:S_1eff}
S_1=\{\wh \sig\in S_0\mid \sig:\cX\to 2^{T(\Sig\cup H)\sm H} \wedge \sig_1\leq \sig\}
= \set{\wh \sig\in S_0}{\sig_1\leq \wh \sig}.
\end{align}
The second equation in (\ref{eq:S_1eff})  uses $\wh{\wh\sig}= \wh \sig$. Now, for every  \subst $\sig:\cX\to 2^{T(\Sig\cup H)\sm H}$ satisfying  $\sig_1\leq \sig\leq \sig_2$  we have $\wh\sig\in S_1$. But $\wh \sig\leq \sig_2$ might fail.
However, since $\sig_2$ is regular, we can  calculate for each $\wh\sig\in S_1$ the \subst $\sig'$ such that $\sig'(x) = \wh \sig(x)\cap \sig_2(x)$ for all $x\in \cX$. Hence, the following finite set is effectively computable:
\begin{align}\label{eq:S_2eff}
S_2=\{\sig':\cX\to 2^{T(\Sig\cup H)\sm H}\mid \exists \,\wh\sig \in S_1 \,\forall x\in \cX: \sig'(x)=\wh\sig(x) \cap \sig_2(x)\}.
\end{align}
An easy reflection shows that for all $\sig$ such that $\sig_1\leq \sig\leq \sig_2$ and $\sio(L)\sse  R$ there is some \subst $\sig'\in S_2$ such that $\sig\leq \sig'\leq \sig_2$ and $\sio'(L)\sse  R$.
Since $S_2$ is finite, there is a maximal element $\sig'$ in $S_2$
such that $\sig\leq \sig'\leq \sig_2$ and $\sio'(L)\sse  R$.
The assertion of \prref{cor:maxioreg} follows.
\end{proof}

\begin{rem}\label{rem:mainres}
Note that \prref{cor:maxioreg} does not tell us how to decide
whether there exists any \subst $\sig$ such that $\sio(L)\sse R$ or $\sio(L)=R$.
However, it tells us that if there exists such a \subst $\sig$, then there there exists such a \subst in an effectively computable and finite set of regular \subst{s}. If this set is empty, then $\sio(L)\sse R$ does not hold.
\qed \end{rem}

\subsection{The specialization of \texorpdfstring{$\sig$}{sigma}}\label{sec:specsio}
The purpose of this section is to add for each profile $\pi=\pi(t)$ a specific
finite tree $t_\pi$ which satisfies the same profile and which depends on $\pi$, only.
The roadmap for that is as follows. We  enlarge the set $\Sig$ by various new function symbols in order to obtain a larger ranked alphabet $\Sig_\cP$. \Ip for each $\pi\in \cP_B$ there is a function symbol $f_\pi$ and a  finite tree $t_\pi$ with $t_\pi(\eps)=f_\pi$. The tree $t_\pi$ is either a constant or it has height  $2$.
Simultaneously, we define an extension $B_\cP$ of $B_H$ such that for each $t\in T(\Sig\cup H)$ with $\pi=\pi(t)$ (\wrt the NTA $B_H$) we have
$\pi= \pi(t)=\pi(t_\pi)$ where $\pi(t)$ and $\pi(t_\pi)$ are defined with respect to $B_\cP$.

The application is for deciding whether a regular \subst $\sig: \cX\to 2^{T(\Sig\cup H)\sm H}$ satisfies $\sio(L)\sse L(B,q_0)$.
In the decision procedure, we compute, in a preprocessing phase, for each $x$ the set of profiles
$\Pi_x=\set{\pi(x)}{t\in \sig(x)}$. The set $\Pi_x$ is defined \wrt  $B_H$, but this is the same as being defined \wrt  $B_\cP$. Hence, we can also write
$\Pi_x=\set{\pi(x)}{t\in \check\sig(x)}$ where $\check\sig(x)$ is a finite set of finite trees $t_\pi$ of height at most $2$. The \subst $\check\sig$ yields the
``specialization'' of~$\sig$; and it is enough to decide $\check\sig_\mio(L)\sse L(B,q_0)$.

The formal definitions of $f_\pi$, $\Sig_\cP$ and $B_\cP$ are in \prref{def:trip} and \prref{def:alpProf}. Since every symbol $f_\pi$ has a rank which only depends on $\pi$, but which will be defined through
a tree $t$ where $\pi=\pi(t)$, it is important to note that
$\es\neq \pi(t)=\pi(t')$ implies $\leaf_i(t)\neq \es \iff \leaf_i(t')\neq \es$ for all holes $i\in H$. This follows because
if  $\es\neq \pi(t)$, then $\leaf_i(t)\neq \es$ \IFF for all tasks
$(p,\psi_1\lds \psi_{|H|})\in \pi$ there is some $q\in Q$ such that $\psi_i(q)>0$. Indeed, let $\rho\in \RUN(t,p)$ such that
$\rho \models \tau$. Then for all $i_j\in \leaf_i(t)$ the run $\rho$ yields some state $\rho(i_j)\in Q$. Since $\chi\rho(i_j)>0$
we must have $\psi_i(\rho(i_j))>0$.
\Ip $\es\neq \pi(t)=\pi(t')$ implies $H(t)=H(t')$. Thus, for
$\pi\in \cP$: if $\pi = \pi(t)$, then
$H(t)$ depends on $\pi$ but not on the chosen $t$.
\begin{defi}\label{def:trip}
\begin{enumerate}
\item For each $t\in T(\Sig\cup H)$ we denote by $H(t)$ the subset of holes
$i\in H$ such that $\leaf_i(t)\neq \es$.
\item The alphabet $\Sig_{\Pro}$ contains $\Sig$ and it  contains, in addition,
first a new function symbol $\$_i$ of rank~$1$ for each  $i\in H$, and second, for each   $t\in T(\Sig\cup H)$ and each profile
$\es \neq \pi=\pi(t)\in \cP$ there is a new function symbol $f_\pi$ of rank $\abs{Q\times H(t)}$.
\Ip if $t$ is without holes, then $f_\pi$ is a constant.
We also include a new constant $f_\es$ in case $\es\in \cP_B$ (but we will make sure that
no run at any state of $B_\cP$ accepts $f_\es$).
\item For each $\es\neq \pi\in \cP$ we choose any tree $t\in T(\Sig \cup H)$ such that $\pi=\pi(t)$. Then
we define the finite tree
$t_{\pi}$ which has the following structure. Its root is labeled by the symbol $f_\pi$,
If~$t$ has no holes, then~$f_\pi$ is a constant. In the other case it has height $2$. The children of the root are labeled by $\$_i$ where $i\in H(t)$; and if a vertex is labeled
by  $\$_i$, then its children are holes $i_j$ (and thus labeled by $i\in H(t)$).
There are $k= \abs{Q\times H(t)}$ holes, all of them belong to
$H(t)$, and each hole $i\in H(t)$ appears exactly $\abs{Q}$ times (in some fixed order). The corresponding tree is depicted in \prref{fig:RunProf}. For $H(t)\neq \es$ we have $k\geq 1$. For $H(t)= \es$ we have $k=0$ and then the figure shows the constant $f_\pi$.
\end{enumerate}
\end{defi}

\noindent
We are ready to define an extension $B_\cP$ of the automaton $B_H$
such that for all $t\in T(\Sig \cup H)$ the finite tree $t_\pi$ satisfies $t\equiv_{B_\cP} t_\pi$ \IFF $\pi(t) =\pi$.
\begin{defi}\label{def:alpProf}
We let $B_{\Pro}= (Q',\Sig_{\Pro}\cup H, \del_{\Pro},\chi')$ denote the following parity-NTA\@. It is an extension of  $B_{H}$.
 The set of states is $Q'= Q\cup (C\times Q \times H)$.
 The coloring $\chi$ is extended to states in $C\times Q \times H$ by $\chi'((c,q,h)) = c$.

The set of \tra{s} $\del_\cP$ of $B_{\Pro}$ contains all \tra{s} from $B_H$ (that is: $\del_{H}= \del \cup Q\times H$) and,  in addition, {for each $\es \neq \pi(t)\in \cP_B$} the following set of tuples.
\begin{itemize}
\item Transitions $\big(p,f_{\pi},(p_{s,j})_{(s,j)\in Q \times H(t)}\big)$
for $\tau= (p,\psi_{1}\lds \psi_{|H|})\in \pi$  such that
\begin{align}\label{eq:psj}
\set{p_{s,j}}{(s,j)\in Q \times H(t)}= \set{(\psi_i(q),q,i)}{q\in Q, i \in H(t), \psi_i(q)\neq 0}.
\end{align}
The sequence $(p_{s,j})_{(s,j)\in Q \times H(t)}$ is a tuple of length $\rk(f_\pi)$ and each state $p_{s,j}$ is a triple such that \prref{eq:psj} holds. The tuple allows repetitions of
entries $p_{s,j}$.
 \item Transitions $\big((c,q,i),\$_i,q\big)$ for all $(c,q,i)\in C \times Q \times H(t)$.
\end{itemize}
\end{defi}
\begin{figure}[h]
\centerline{\begin{tikzpicture}[node distance=8mm]
		\node[] (tpi) at (-3,-0.6) {$t_{\pi}$};
		\node[] (tfpi) at (-2.5,-0.6) {$=$};
		\node[] (fpi) at (0,-0.6) {$f_{\pi}$};
		%\node[] (gtau) at (0,-1) {$g_{\tau}$};
		\node[] (v1) at (-2,-1.5)
		{$\$_{h_1}$}; %{$c_{(\psi_{h_{1}}(q_{1}),q_{1},h_{1})}$};
		\node[] (v) at (0,-1.5) {$\cdots$};
		\node[] (vk) at (2,-1.5)
		{$\$_{h_k}$}; %{$c_{(\psi_{h_{k}}(q_{k}),q_{k},h_{k})}$};
		\node[] (i1) at (-2,-2.5) {\qquad $h_{1}\in H(t)$};
		\node[] (ik) at (2,-2.5) {\qquad $h_{k}\in H(t)$};
		\node[] (w) at (0,-2.4) {$\cdots$};
		%\draw [thick,-, >=latex] (fpi) to (gtau);
		\draw [thick,-, >=latex] (fpi) to (v1);
		\draw [thick,-, >=latex] (fpi) to (vk);
		\draw [thick,-, >=latex] (v1) to (i1);
		\draw [thick,-, >=latex] (vk) to (ik);
\end{tikzpicture}
\qquad
\begin{tikzpicture}[node distance=8mm]
		\node[] (fpi) at (0,-0.6) {$p$};
		\node[right=3pt] at (fpi) {$\in Q$};
		%\node[] (gtau) at (0,-1) {$\tau$};
		\node[] (v1) at (-2,-1.5)
		{$(\psi_{h_{1}}(q_{1}),q_{1},h_{1})$};
		\node[] (v) at (0,-1.5) {$\cdots$};
		\node[] (vk) at (2,-1.5)
		{$(\psi_{h_{k}}(q_{k}),q_{k},h_{k})$};
		\node[] (i1) at (-2,-2.5) {\qquad $q_1\in Q$};
		\node[] (ik) at (2,-2.5) {\qquad $q_k\in Q$};
		\node[] (w) at (0,-2.4) {$\cdots$};
		\draw [thick,-, >=latex] (fpi) to (v1);
		\draw [thick,-, >=latex] (fpi) to (vk);
		\draw [thick,-, >=latex] (v1) to (i1);
		\draw [thick,-, >=latex] (vk) to (ik);
\end{tikzpicture}
}
	\caption{The tree $t_{\pi}$ where $k=\rk(f_\pi)$ and an accepting run $\rho$ on  $t_{\pi}$
	for $\tau= (p,\psi_{1}\lds \psi_{|H|})\in \pi$.}\label{fig:RunProf}
\end{figure}
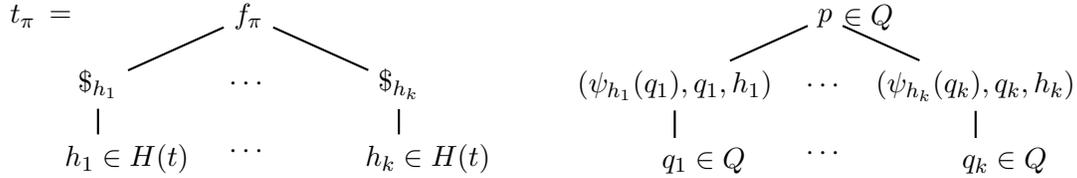
%%%%%%%%%%%%

\noindent
The cardinality of the set $\set{(\psi_i(q),q,i)}{q\in Q, i \in H(t), \psi_i(q)\neq 0}$ in (\ref{eq:psj}) varies.  It can be any number
between $\abs{H(t)}$ and $\rk(f_\pi)$; and it is not necessarily the same for all $\tau\in \pi$. This is the reason, why
triples $(\psi_i(q),q,i)$ may occur several times. Note that for every $\tau\in\pi$ there exists some \tra
$\big(p,f_{\pi},(p_{s,j})_{(s,j)\in Q \times H(t)}\big)$ in $\del_\cP$.
 \begin{lem}\label{lem:BProfPi}
Let $t\in T(\Sig \cup H)$,
$\pi=\pi(t)\in \cP_B$, and $t_\pi$ as defined above. Then
for all tasks $\tau\in \cT_{B_\cP}$ we have
$t_{\pi}\models \tau \iff \tau \in \pi$. \Ip
$t\equiv_{B_\cP} t_\pi$ and $\pi(t_\pi)=\pi$.
\end{lem}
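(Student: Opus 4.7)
My plan is to prove the stated equivalence directly from the shape of the transitions of $B_{\Pro}$ around $f_\pi$ and $\$_i$, after disposing of two degenerate situations. If $\pi=\es$, then $t_\pi=f_\es$ admits no accepting run by construction, and both sides fail. If $H(t)=\es$, then $t_\pi=f_\pi$ is a constant and the hole-leaf condition is vacuous on both $t$ and $t_\pi$, so the iff reduces to the observation that $(p,f_\pi,\eps)\in\del_{\Pro}$ iff $p$ is the root state of some task in $\pi$. Tasks $\tau\in\cT_{B_{\Pro}}$ with root state in $Q'\sm Q$ need no treatment, since the only transitions of $B_{\Pro}$ leaving such a state sit over $\$_i$, never over $f_\pi$. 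I therefore focus on $\pi\neq\es$, $H(t)\neq\es$, and root state $p\in Q$, identifying $B$-tasks with $B_{\Pro}$-tasks by extending $\psi_i$ by $0$ on $Q'\sm Q$; this extension has no effect on $t_\pi\models\tau$ because all leaves of $t_\pi$ are labeled by states in $Q$.

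For the forward direction, given $\tau=(p,\psi_1\lds\psi_{|H|})\in\pi$, my first move would be to pick a run $\rho_0\in\RUN_{B_H}(t,p)$ with $\rho_0\models\tau$ and replace $\tau$ by its minimal task $\tau_{\rho_0}=(p,\psi^*_1\lds\psi^*_{|H|})$ defined via equation~(\ref{eq:mintaurp}); then $\tau_{\rho_0}\in\pi$ and $\psi^*_i(q)\prebest\psi_i(q)$ for every $(q,i)$ by minimality. I would use the root transition of $B_{\Pro}$ indexed by $\tau_{\rho_0}$ to assign states $(\psi^*_i(q),q,i)$ to the $|Q|$ children of the root labeled $\$_i$ in $t_\pi$ (with repetitions, so that every triple with $\psi^*_i(q)\neq 0$ is covered), and the state $q$ to each leaf below a child carrying state $(\psi^*_i(q),q,i)$. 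The resulting run is accepting since $t_\pi$ is finite. The decisive observation is that whenever $\psi^*_i(q)\neq 0$, it is attained as $c_{\rho_0}(i_{j_0})$ for some $i_{j_0}\in\leaf_i(t)$, hence $\psi^*_i(q)\leq\chi(p)$ and $\psi^*_i(q)\leq\chi(q)$ in the natural order. Consequently the natural minimum $\min\{\chi(p),\psi^*_i(q),\chi(q)\}$ along the root-$\$_i$-leaf path in $t_\pi$ equals $\psi^*_i(q)$, which in turn is $\prebest\psi_i(q)$; hence $t_\pi\models\tau$.

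For the backward direction, starting from an accepting run $\rho$ of $B_{\Pro}$ on $t_\pi$ at $p\in Q$ with $\rho\models\tau$, I would read off the root transition to extract an index $\tau'=(p,\psi'_1\lds\psi'_{|H|})\in\pi$ and pick some $\rho_0\in\RUN_{B_H}(t,p)$ with $\rho_0\models\tau'$. The target is then to show $\rho_0\models\tau$, which yields $\tau\in\pi$. Fix a hole-leaf $i_j$ of $t$ with $\rho_0(i_j)=q$; the inequalities $c_{\rho_0}(i_j)\prebest\psi'_i(q)$ and $c_{\rho_0}(i_j)\leq\min\{\chi(p),\chi(q)\}$ (natural) are immediate, while $\rho\models\tau$ applied to the corresponding leaf of $t_\pi$ delivers $m:=\min\{\chi(p),\psi'_i(q),\chi(q)\}\prebest\psi_i(q)$. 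A short case split closes the matter: if $m=\psi'_i(q)$, transitivity chains $c_{\rho_0}(i_j)\prebest\psi'_i(q)=m\prebest\psi_i(q)$; if $m<\psi'_i(q)$ naturally, then $c_{\rho_0}(i_j)\leq m<\psi'_i(q)$ rules out the ``both odd'' branch of $c_{\rho_0}(i_j)\prebest\psi'_i(q)$, so $c_{\rho_0}(i_j)$ must be even and is then $\prebest m$ whatever the parity of $m$, after which transitivity again finishes the job. The pathological case $\psi_i(q)=0$ does not arise: it would force $m=0$, contradicting that $m$ is a natural minimum of colors in $C$.

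The ``in particular'' statements come for free. Applying the iff to every task yields $\pi(t_\pi)=\pi$. For $t\equiv_{B_{\Pro}}t_\pi$, note that $t\in T(\Sig\cup H)$ uses none of the fresh symbols $f_\pi$, $\$_i$, so the new transitions of $B_{\Pro}$ never fire on $t$ and $\RUN_{B_{\Pro}}(t,p)=\RUN_{B_H}(t,p)$ for every $p\in Q$; consequently $\pi_{B_{\Pro}}(t)=\pi_{B_H}(t)=\pi=\pi(t_\pi)$. The main obstacle of the proof is reconciling the natural minimum along the (at most length-two) root-to-leaf path in $t_\pi$ with the $\prebest$ order on colors; the forward direction circumvents this by substituting $\tau$ with the minimal task $\tau_{\rho_0}$, for which this minimum is automatically attained at $\psi^*_i(q)$, and the backward direction by the case split on whether $\psi'_i(q)$ realizes the natural minimum.
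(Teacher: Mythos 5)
Your proof is correct, and it follows the same skeleton as the paper's: for one direction build a run on $t_\pi$ from a root transition indexed by a task in $\pi$, for the other read the index of the root transition back off. But the paper's own proof is four sentences long and omits exactly the two points you work out, and at least the first of them is genuinely needed. For the forward direction the paper asserts that the run of \prref{fig:RunProf} indexed by $\tau$ itself witnesses $t_\pi\models \tau$; this is not literally true, because the natural minimum $\min\os{\chi(p),\psi_i(q),\chi(q)}$ along the length-two root-to-leaf path of $t_\pi$ can be an odd color strictly below $\psi_i(q)$ (for instance when $\psi_i(q)$ is unconstrained because no leaf of $t$ receives state $q$ under the witnessing run), in which case the $\prebest$-comparison fails even though $\tau\in\pi$. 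Your detour through the minimal task $\tau_{\rho_0}$ of (\ref{eq:mintaurp}) repairs this: since $\psi^*_i(q)\neq 0$ is attained as some $c_{\rho_0}(i_{j_0})$ and is therefore bounded above (in the natural order) by both $\chi(p)$ and $\chi(q)$, the natural minimum collapses to $\psi^*_i(q)$, which is $\prebest\psi_i(q)$ by minimality. The backward direction is dismissed in the paper with ``follows from the definition of $\del_\cP$''; your case split on whether $\psi'_i(q)$ realizes the natural minimum, together with the observation that $c_{\rho_0}(i_j)<\psi'_i(q)$ combined with $c_{\rho_0}(i_j)\prebest\psi'_i(q)$ forces $c_{\rho_0}(i_j)$ to be even, is the argument that is actually required, and it is sound. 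The degenerate cases ($\pi=\es$, $H(t)=\es$, root states in $Q'\sm Q$, extending $\psi_i$ by $0$ on the new states) are all handled correctly. In short: same route as the paper, but your version supplies details, in particular the passage to minimal tasks, that the published proof skips over and cannot do without.
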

\begin{proof}
For each $\tau= (p,\psi_{1}\lds \psi_{|H|}) \in \pi$ the NTA $B_\cP$ admits for $t_\pi$ a successful run at $p$. This is clear
for $H(t)=\es$ because then $t_\pi$ is a constant and
$(p,t_\pi) \in \del_\cP$. For $H(t)\neq \es$ the accepting run is depicted on the right of
 \prref{fig:RunProf}. It is accepting because $Q\times H \sse \del_H \sse \del_\cP$. Hence,
 $\tau \in \pi$
implies $t_{\pi}\models \tau$. The converse follows from the definition of $\del_\cP$. \Ip $t_{\pi}\models \tau$ implies
$\tau \in \cT_B$.
Since $\Sig \sse \Sig_\cP$ and $Q\sse Q'$, we have  $t\in T(\Sig_\cP\cup H)$ and $\cT_B\sse \cT_{B_\cP}$ and therefore,
$t\equiv_{B_\cP} t_\pi$.
\end{proof}
\prref{prop:redBcP} reduces the computation of $\sig_{\mathrm{io}}^{-1}(R)$ to the computation of  $\check\sig_{\mathrm{io}}^{-1}(R)$.
Here, $\check\sig$ is the specialization of $\sig$ according to
\prref{def:specsat} such that $\check\sig(x)$ is a subset of the finite
set of finite trees $\set{t_\pi}{\pi \in \cP_B}$.
\begin{defi}\label{def:specsat}
 Let $\sig:\cX\to 2^{T(\Sig \cup H)}$ be a \subst and $\pi\in \Pro$ be a profile. The \emph{specialization} $\check \sig:\cX \to 2^{T(\Sig_{\Pro} \cup H)}$ \wrt $B_{\Pro}$ is defined by the \subst
\begin{equation}\label{eq:spec}
\check \sig(x)=\set{t_{\pi} \in T(\Sig_{\Pro}\cup H)}{\exists t \in \sig(x):\, \pi=\pi(t)}.
\end{equation}
Here, $t_\pi$ is the tree defined according to \prref{def:trip}.
That is, if $\pi=\pi(t)$ for some $t$ without holes, then $t_\pi = f_\pi$. Otherwise $k\geq 1$ and $t_\pi$ has height two according
to \prref{fig:RunProf} on the left-side.
\end{defi}
Note that $\sig(x)=\es \iff \check\sig(x)=\es$ and $t_\es\in \check \sig(x) \iff \exists t \in \sig(x):\, t\models \es$.

\begin{prop}\label{prop:redBcP}
Let $R=L(B,q_{0})\sse T(\Sig)$ and let
$\Sig_\cP$ and $B_\cP=(Q',\Sig_\cP\cup H, \del_\cP, \chi')$ according to \prref{def:alpProf}.
Let
$\sig: \cX \to 2^{T(\Sig\cup H)\sm H}$ be any  \subst and
$\check \sig: \cX \to 2^{T(\Sig_\cP\cup H)}$ its specialization according to \prref{def:specsat}.
Then we have
\begin{equation}\label{eq:checksoi}
\set{s\in T(\Sig\cup \cX)}{\sio(s)\sse L(B,q_{0})}
=\set{s\in T(\Sig\cup \cX)}{\vsig_{\mathrm{io}}(s)\sse L(B_{\Pro},q_{0})}.
\end{equation}
\end{prop}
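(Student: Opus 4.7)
My plan is to establish both inclusions of \prref{eq:checksoi} simultaneously via a correspondence between choice functions in $\Gam(\sig,s)$ and in $\Gam(\vsig,s)$. Given $\gam\in\Gam(\sig,s)$, I would define $\gam'\in\Gam(\vsig,s)$ by $\gam'(u)=t_{\pi(\gam(u))}$ whenever $s(u)\in\cX$ and $\gam(u)\neq\bot$ (and $\gam'(u)=\bot$ otherwise); conversely, given $\gam'\in\Gam(\vsig,s)$ with $\gam'(u)=t_{\pi}$, I would pick some $t\in\sig(s(u))$ with $\pi(t)=\pi$ (which exists by \prref{def:specsat}) and set $\gam(u)=t$. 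In both cases \prref{lem:BProfPi} yields $\gam(u)\equiv_{B_\Pro}\gam'(u)$ at every $u$ where both are defined, and the common hole sets $H(\gam(u))=H(\gam'(u))$ imply that the recursive substitution meets $\bot$-traps at the same positions; therefore $\gam_\infty(s)=\bot$ if and only if $\gam'_\infty(s)=\bot$.

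The central technical step is the analogue of \prref{prop:equivtasl} for $B_\Pro$: whenever $\gam(u)\equiv_{B_\Pro}\gam'(u)$ for all $u\in\Pos(s)$, one has $\gam_\infty(s)\in L(B_\Pro,q_0)\iff\gam'_\infty(s)\in L(B_\Pro,q_0)$. The proof of \prref{prop:equivtasl} transfers almost verbatim once one observes that in $B_\Pro$ every transition involving a symbol of $\Sig_\Pro$ uses only states of $Q$, and the extra states $(c,q,i)\in Q'\sm Q$ are always forced to label positions carrying the symbol $\$_i$. To fit the ``base-NTA'' hypothesis of \prref{prop:equivtasl} literally, I would work with the hole-free kernel $(Q',\Sig_\Pro,\del_\Pro\sm(Q\times H),\chi')$ of $B_\Pro$; its hole-extension agrees with $B_\Pro$ on all runs of trees that can actually occur as values of $\gam$, $\gam'$, $\gam_\infty(s)$, or $\gam'_\infty(s)$.

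To conclude, I would transfer acceptance from $B_\Pro$ back to $B$. Because $B_\Pro$ introduces no new $\Sig$-transitions and every original $\Sig$-transition uses states in $Q$ only, any run of $B_\Pro$ on a tree in $T(\Sig)$ is in fact a run of $B$; hence $L(B_\Pro,q_0)\cap T(\Sig)=L(B,q_0)$. Since $\sio(s)\sse T(\Sig)$, we obtain $\sio(s)\sse L(B,q_0)\iff\sio(s)\sse L(B_\Pro,q_0)$. The ($\sse$) direction of \prref{eq:checksoi} then follows by starting from an arbitrary $\gam'\in\Gam(\vsig,s)$ with $\gam'_\infty(s)\neq\bot$, passing to the corresponding $\gam$, and applying the acceptance-preservation step together with $\gam_\infty(s)\in\sio(s)$; the ($\supseteq$) direction is symmetric. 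The main obstacle I foresee is this second step: verifying rigorously that the hole transitions present in $B_\Pro$ on top of $\del_H$ do not spoil the task-based argument of \prref{prop:equivtasl}. The hole-free kernel construction seems to be the cleanest way to reduce the situation to the already-proven statement.
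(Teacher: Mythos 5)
Your proof is correct and follows essentially the same route as the paper: both arguments ultimately rest on \prref{lem:BProfPi} (that $t\equiv_{B_\cP}t_{\pi(t)}$) combined with \prref{prop:equivtasl} applied to pointwise $\equiv_{B_\cP}$-equivalent choice functions, the only organizational difference being that the paper packages the choice-function correspondence inside the saturation $\wh\sig$ and \prref{prop:satsig}, whereas you build the correspondence between $\Gam(\sig,s)$ and $\Gam(\vsig,s)$ directly. Your hole-free-kernel remark addresses a point of rigour (the literal applicability of \prref{prop:equivtasl} to the already hole-extended automaton $B_\cP$) that the paper passes over silently.
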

\begin{proof}
The equality
\[\set{s\in T(\Sig\cup \cX)}{\sio(s)\sse L(B,q_{0})}= \set{s\in T(\Sig\cup \cX)}{\whsio(s)\sse L(B_{\Pro},q_{0})}\] follows directly
{}from the definition of $B_{\Pro}$. We apply
\prref{prop:satsig} twice, once to $B$ and once to~$B_{\Pro}$. We obtain
$\set{s\in T(\Sig\cup \cX)}{\sio(s)\sse L(B_{\Pro},q_{0})}=
\set{s\in T(\Sig\cup \cX)}{\whsio(s)\sse L(B_{\Pro},q_{0})}$ as well as
$\set{s\in T(\Sig\cup \cX)}{\wh\vsig_{\mathrm{io}}(s)\sse L(B_{\Pro},q_{0})}
=\set{s\in T(\Sig\cup \cX)}{\vsig_{\mathrm{io}}(s)\sse L(B_{\Pro},q_{0})}$.

Note that $\whsio(s)$ refers to $\wh\sig(x)= \set{t'\in T(\Sig_{\Pro}\cup H)}{\exists t \in \sig(x): t'\equiv_{B_{\Pro}}t}$. That is to
the saturation with respect to
terms in $T(\Sig_{\Pro}\cup H)$.
By \prref{lem:BProfPi}, we have $t_{\pi} \equiv_{B_\cP} t$ for every $t\in \sig(x)$ and  profile $\pi=\pi(t)$. Hence,
$\set{s\in T(\Sig\cup \cX)}{\whsio(s)\sse L(B_{\Pro},q_{0})}
=\set{s\in T(\Sig\cup \cX)}{\wh\vsig_{\mathrm{io}}(s)\sse L(B_{\Pro},q_{0})}$. Putting everything together we see
\begin{align*}
\set{s\in T(\Sig\cup \cX)}{\sio(s)\sse L(B,q_{0})}&
=\set{s\in T(\Sig\cup \cX)}{\whsio(s)\sse L(B_{\Pro},q_{0})}\\
=\set{s\in T(\Sig\cup \cX)}{\wh\vsig_{\mathrm{io}}(s)\sse L(B_{\Pro},q_{0})}
&=\set{s\in T(\Sig\cup \cX)}{\vsig_{\mathrm{io}}(s)\sse L(B_{\Pro},q_{0})}.
\end{align*}
The last line uses \prref{prop:equivtasl}. The result follows.
\end{proof}

\section{Parity games}\label{sec:pg}
A \emph{directed (multi-)graph} is a tuple $G=(V,E,\sr,\tr)$ where $V$ is the set of vertices and
the sets $\sr$ and $\tr$ are functions from $E$ to $V$. Here, $\sr(e)$ denotes the source of $e$ and $\tr(e)$ denotes the target of $e$.
In our paper an \emph{arena} is any tuple $A=(V_0,V_1,E,\sr,\tr)$ such that $(V,E,\sr,\tr)$ is a directed graph where  $V=V_0\cup V_1$ and $V_0,V_1$ are disjoint.

Phrased differently, every vertex is in exactly one subset $V_i$ of $V$ and we allow multiple edges between vertices.
The maps $\sr$ and $\tr$ are extended to all paths of the graph $A$ in the natural way:
if $w= p_0,e_{1},p_{1}, \ldots,e_{m},p_{m}$ is a path of $A$ where $m\geq 0$ and $p_{i-1}=\sr(e_i)$, $p_{i}=\tr(e_i)$ for all $1\leq i \leq m$, then we let $\sr(w) = p_0$ and $\tr(w) = p_m$.

A \emph{(board of a) parity game}  is defined by a pair $(A,\chi)$ where $A$ is an arena and
$\chi:V\to C$ is a mapping to a set of \emph{colors} $C$.

Without restriction, we always assume that
$C=\os{1\lds |C|}$ and that $\abs C$ is odd. Let $p_0\in V$. {A \emph{game at $p_0$} is a finite or infinite sequence
$p_0,e_1,p_1,e_2 \ldots$ such that
$p_{i-1}=\sr(e_1)$, $p_{i}=\tr(e_i)$ and $e_i\in E$ for $i\geq 1$. }Moreover,  we require that a game is infinite unless it ends in a sink. A \emph{sink} is a vertex without outgoing edges\footnote{Some papers require that an arena has no sinks.}.
There are two players: $P_0$ (\emph{Prover}) and $P_1$ (\emph{Spoiler}). The rules of the game are as follows. It starts in the vertex $p_0$. Let  $m\geq 0$ such that a path $p_0,e_{1},p_{1}, \ldots,e_{m},p_{m}$ is already defined with $p_m\in V_i$. If $p_m$ is a sink, then player $P_i$ loses. In the other case
 player $P_i$ chooses an edge  $e_{m+1} \in E$ such that
 $p_{m}= \sr(e_{m+1})$. Let $p_{m+1}=\tr(e_{m+1})$, then the game continues with
 $p_0, \ldots,p_m,e_{m+1},p_{m+1}$.

If the game does not end in a sink, then the mutual choices define an infinite sequence. Prover $P_0$ wins an infinite game if the least color which appears infinitely often in $\chi(p_0),\chi(p_1), \ldots$  is even. Otherwise, Spoiler $P_1$ wins that game.

A  \emph{positional strategy} for player $P_i$ is a subset $E_i\sse E$ such that for each $u\in V_i$ there is at most one edge $e\in E_i$ with $\sr(e)=u$.
Hence, every  $e\in E_i$ is an edge of the arena. However, the property that $E_i$ is
\emph{winning} depends on the pair
$(\sr(e),\tr(e))$, only.  Therefore,  we can assume $E_i\sse V_i\times V$.
Each positional strategy defines a subarena $A_i= (V, E_i\cup \set{e\in E}{\sr(e)\in V_{1-i}})$.
In the arena $A_i$ player  $P_{1-i}$ wins at a vertex $p$ \IFF there exists some path starting at $p$ which satisfies his winning condition defined above.
Note that $W_i$ depends on the positional strategy $E_i$.
The set $W_i$ contains all vertices where player~$P_i$ wins positional (also called \emph{memoryless}) by choosing $E_i$. The set $W_i$ is a set of \emph{winning positions} for $P_i$ in the original arena $A$,
because player $P_i$ is able to win, no matter how $P_{1-i}$ decides for $p_{m} \in V_{1-i}$ on the next outgoing edge $e \in E$ with $\sr(e)=p_{m}$.

\begin{thmC}[\cite{GurevichH82stoc}]\label{thm:GH82}
There exist positional strategies $E_i= V_i\times V\sse E$ for both players $P_i$ such that their sets
of winning positions $W_i\sse V$  (\wrt the subarenas $A_i$)
satisfy $W_{1-i}=V \sm W_{i}$. That is, $V$ is the disjoint union of $W_{1-i}$ and $W_{i}$.
\end{thmC}
The theorem implies that for parity games there is no better strategy than a positional one. \prref{thm:GH82} is due to Gurevich and Harrington. Simplified proofs are \eg in~\cite{LNCS2500automata,zielonka98tcs}.

\subsection{Alternating tree automata}%
\label{sec:rtlata}
Let $\Del$ be a finite ranked alphabet with the rank function $\rk:\Del\to \N$. Nondeterministic tree automata are special instances of alternating tree automata. ATAs were introduced in~\cite{MullerSchupp87tcs} and further investigated in~\cite{MullerSchupp95tcs}.
A \emph{parity-ATA} for $\Del$ is a tuple $A=(Q,\Del,\del,\chi)$ where
$Q$ is a finite set of states,
$\del$ is the \emph{\tra function}, and
$\chi:Q\to C$ is a coloring with $C=\os{1,\lds \abs C}$ and where, without restriction, $\abs C$ is odd. For each $(q,f)\in Q\times \Del$ there is exactly one \tra which has the form $\big(q,f,\Phi)$ where
$\Phi$ is a positive Boolean formula over the set $[\rk(f)]\times Q$ and, moreover, $\Phi$  is
written in disjunctive normal form\footnote{Let $S$
be any (finite) set. Then the set of \emph{positive Boolean formulae} $\B_+(S)$ is defined inductively. 1.~The symbols $\bot$ and $\top$ and
all $s\in S$ belongs to $\B_+(S)$. (The elements of $S$ are the \emph{atomic propositions}.) 2.~If $\phi,\psi\in \B_+(S)$, then $(\phi\vee\psi)\in \B_+(S)$
and $(\phi\wedge\psi)\in \B_+(S)$.}.
This means that $\big(q,f,\Phi)\in \del$ is written as
\begin{align}\label{eq:ataPhi}
\big(q,f,\bigvee_{j\in J}\;\bigwedge_{k\in K_{j}} (d_{k},p_k)\big)
\end{align}
where $J$ and $K_{j}$ are finite  sets and $(d_{k},p_k)\in [\rk(x)]\times Q$. The first component $d_k\in [\rk(f)]$ is also called a \emph{direction}. With each $j\in J$ there is an associated finite set of indices $K_j$.
We use a syntax for Boolean formulae defined by a context-free grammar, but in the notation redundant brackets are typically removed as we did in (\ref{eq:ataPhi}).
The syntax still allows repetitions of pairs $(d_{k},p_k)$. For example, we may have $K_j=\os{k,\ell}$ with a conjunction
$\big((d_{k},p_k) \wedge (d_{\ell},p_\ell)\big)$ where $(d_{k},p_k) =(d_{\ell},p_\ell)$. This will have no influence on the semantics which we define next, but it will introduce
``multiple edges''  in the corresponding game-arena,
which will make the set of choices for Spoiler larger.\footnote{This larger set of choices is explicitly used in the definition of $w_g$ in the proof of Lemma~\ref{lem:parthom}.}

By definition, let $p\in Q$, then $L(A,p)$ is the set of trees $s\in T(\Del)$ such that Prover $P_{0}$ has a winning strategy
for the following parity-game  in the arena $G(A,s)$ at vertex $(\eps,p)$.

The  set of vertices belonging to the Prover of the arena $G(A,s)$ is $V_0=\Pos(s)\times Q$.
The color of $(u,q)$ is $\chi(q)$.
Next we define the set $E_0$ of outgoing edges at $(u,q)\in V_0$ by the set of all
$(u,q,j)$ with $j\in J$ and where $J$ appears under the disjunction in (\ref{eq:ataPhi}) of the unique \tra $\big(q,s(u),\bigvee_{j\in J}\;\bigwedge_{k\in K_{j}} (d_{k},p_k)\big)\in \del$.
We let $\sr(u,q,j)=(u,q)$ and we let $\tr(u,q,j)=(u,q,j)$.
We define the set of positions belonging to
Spoiler by  $V_1=\tr(E_0)$ and we let $\chi(u,q,j)=\chi(q)$. Thus,  the target function $\tr:E_0 \to V_1$ is surjective by definition and every vertex in $V_1$ has an incoming edge.
 Note that for $(u,q)\in V_0$ and any $v\in V_1$ there is at most one edge $e\in E$ with $\sr(e)=(u,q)$ and $\tr(e)= v$. No multiple edges occur here. This changes when we define the outgoing edges
for vertices  $(u,q,j)\in V_1$. The outgoing edges
are the quadruples $(u,q,j,k)$ where $k\in K_j$ for index sets under the conjunction
in the \tra $\big(q,s(u),\bigvee_{j\in J}\;\bigwedge_{k\in K_{j}} (d_{k},p_k)\big)\in \del$. We define $\sr(u,q,j,k)=(u,q,j)$.
 The index $k$ defines a pair
$(d_k,p_k)$ and then we let $\tr(u,q,j,k)=(u.{d_k},p_k)$. Thus, there can be many edges with the
source $(u,q,j)\in V_1$ and target $(u.{d_k},p_k)$. Thanks to the condition $(d_{k},p_k)\in [\rk(x)]\times Q$ we are sure that the
position $u.{d_k}$ exists as soon as  $K_j\neq \es$.

Let us phrase the definition of the arena from the perspective of the players if the game starts at some vertex $(u,q)\in V_0$.
Then Prover chooses, if possible,  an index $j\in J$ according to set $E_0$.
If $J$ is empty, then Prover loses. (An empty disjunction is ``false''.) In the other case,  it is the turn of Spoiler~$P_1$, and the game continues at the vertex $(u,q,j)$ belonging to $P_{1}$ with color $\chi(q)$. If $K_{j}$ is empty, then Spoiler loses. (An empty conjunction is ``true''.) Otherwise, Spoiler chooses an index $k\in K_{j}$ and the game continues at the vertex $(u.d_{k},p_{k}) \in V_0= \Pos(s)\times Q$ (which exists). That is, the game continues at the position of the $d_{k}$-th child of $u$ at state $p_{k}$. Prover wins an infinite game \IFF the least color occurring infinitely often is even.

A \emph{parity-NTA} $A$ is a special instance of an alternating automaton, where every \tra as in \prref{eq:ataPhi} has the special form $\big(q,f,\bigvee_{j\in J}\;\bigwedge_{i\in [\rk(f)]} (i,p_i)\big)$.
For convenience we use only the traditional syntax that $\del$ is a set of tuples
\begin{equation}\label{eq:nta}
\big(q,f,p_{1}\lds p_{\rk(f)}).
\end{equation}

The main results of alternating tree automata can be formulated as follows.

\begin{thmC}[\cite{MullerSchupp87tcs,MullerSchupp95tcs}]\label{thm:MS8795}
Let $A$ be a parity-NTA as defined in  \prref{sec:parnta} and $p$ be a state. Then the following assertions hold.
\begin{enumerate}
\item Viewing $A$ as a special instance of a parity-ATA using
(\ref{eq:ataPhi})
or using the traditional syntax (\ref{eq:nta}) and its semantics according to \prref{def:parityacc} yields the same set $L(A,p)$.
\item Parity-ATAs characterize the class of regular tree languages: if $L(A,p)$ is defined by a parity-ATA $A$ at a state $p$, then  we can construct effectively a
parity-NTA $B$ and a state $q$ such that
$L(A,p)= L(B,q)$.
\end{enumerate}
\end{thmC}

\subsection{Alternating tree automata for languages of type \texorpdfstring{$\oi \sio(R)$}{sigma-io\^-1(R)}}\label{sec:altfin}
Thanks to \prref{prop:redBcP} we know that $\sio(L)\sse L(B,q_0) \iff \check\sig_\mio(L)\sse L(B,q_0)$ where $\check\sig_\mio$ is the specialization of $\sig$. Since
$\check\sig_\mio(x)$ is, by construction, a finite set of finite trees,
it is enough to consider the concept of alternating tree automata in the setting where there is a finite set $T\sse \Tfin(\Sig\cup H)\sm H$ of finite trees
such that $\bigcup\set{\sig(x)}{x\in \cX}\sse T$. This is not essential but allows to use the traditional  definition in
(\ref{eq:ataPhi}) for \tra{s} where the index sets $J$ and $K_j$ are assumed to be finite.

\begin{rem}\label{rem:homchoice}
Recall that a choice function is defined for a \subst and yields an element in the cartesian product
$\prod_{s\in T(\Sig \cup \cX)}T_\bot(\Sig \cup H)^{\Pos(s)}$.
Let us show that the set of  mappings $\cX\to T(\Sig \cup H)$ can be viewed as a subset of $\prod_{s\in T(\Sig \cup \cX)}T_\bot(\Sig \cup H)^{\Pos(s)}$ since there is a natural embedding
\begin{equation}\label{eq:homchoice}
\iota: T(\Sig \cup H)^\cX \to
\prod_{s\in T(\Sig \cup \cX)}T(\Sig \cup H)^{\Pos(s)}
\end{equation}
Indeed, if $\phi\in T(\Sig \cup H)^\cX$ and $x=s(u)$, then $\iota(\phi)(x)$
is defined canonically by $\iota(\phi)(s(u))= \phi(x)$ for
$x\in \cX$ and by $\iota(\phi)(s(u))= f(1\lds \rk(f))$ for
$x=f\in \Sig\sm \cX$. It is also clear that
$\iota$ is an embedding since $x(1\lds \rk(x))\in T(\Sig \cup \cX)$.
Moreover, if  $\phi\in T(\Sig \cup H)^\cX$ is a \hom and
$\gam=\iota(\phi)$, then $\phi_\mio(s)= \os{\gaminf(s)}$ is a singleton.
\qed
\end{rem}
The proof of \prref{lem:parthom} is rather involved, but it does not use any result which was not stated or proved above. The lemma is the cornerstone to understand and to show (via \prref{lem:SigProf}) our main result: \prref{thm:s1s2main}.
\begin{lem}\label{lem:parthom}
Let $R\sse T(\Sig)$ be a regular tree language and $\phi: \cX\to {\Tfin(\Sig\cup H)\sm H}$ be a \hom to the set of finite trees.
Let $\gam=\iota(\phi)$ be its canonically associated choice function as defined
in (\ref{eq:homchoice}).
Then
the set of trees
\[\phi_{\mathrm{io}}^{-1}(R)=\set{s\in T(\Sig\cup \cX)}{\gaminf(s)\in R}\]
is effectively regular.
\end{lem}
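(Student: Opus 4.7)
My plan is to build a parity-ATA $A$ over $\Sig\cup\cX$ such that $L(A,q_{\mathrm{init}})=\phi_\mio^{-1}(R)$, and then invoke \prref{thm:MS8795} to convert $A$ to an equivalent parity-NTA, yielding effective regularity. Writing $R=L(B,q_0)$ for some parity-NTA $B=(Q,\Sig,\del,\chi)$ via \prref{prop:regispnta}, the ATA should simulate $B$ on the implicit tree $\gaminf(s)$: at a $\Sig$-labeled position of $s$, $A$ performs one step of $B$; at an $\cX$-labeled position $u$ with $s(u)=x$, $A$ compresses an entire run of $B_H$ on the finite tree $\phi(x)$ into one disjunct.

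Concretely, I take state set $Q_A=Q\times C$ with coloring $\chi_A(q,c)=c$ and initial state $q_{\mathrm{init}}=(q_0,\chi(q_0))$. For $f\in\Sig$ let
\begin{align*}
\del_A\bigl((q,c),f\bigr)\;=\;\bigvee_{(q,f,p_1\lds p_{\rk(f)})\in\del}\;\bigwedge_{i=1}^{\rk(f)}\bigl(i,(p_i,\chi(p_i))\bigr),
\end{align*}
and for $x\in\cX$ let
\begin{align*}
\del_A\bigl((q,c),x\bigr)\;=\;\bigvee_{\rho\in\RUN_{B_H}(\phi(x),q)}\;\bigwedge_{i\in H(\phi(x))}\;\bigwedge_{i_j\in\leaf_i(\phi(x))}\bigl(i,(\rho(i_j),c_\rho(i_j))\bigr),
\end{align*}
where $c_\rho(i_j)$ is the minimum colour on the root-to-$i_j$ path in $\phi(x)$ under $\rho$. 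Both formulae are finite DNFs because $\phi(x)$ is a finite tree. The inner conjunction over $i_j\in\leaf_i(\phi(x))$ encodes the IO-constraint: every hole-$i$-leaf of $\phi(x)$ forces the $i$-th child subtree of $s$ at $u$ to be simultaneously accepted by $B$ at the state $\rho(i_j)$ that $\rho$ reaches there.

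For correctness I would translate between accepting $B$-runs $\rho^*$ on $\gaminf(s)$ at $q_0$ and winning Prover strategies in $G(A,s)$ at $(\eps,q_{\mathrm{init}})$. Given $\rho^*$: at each Prover position $(u,(q,c))$ with $s(u)=x$, the restriction of $\rho^*$ to the $\phi(x)$-copy inside $\gaminf(s)$ rooted at $u$ is a run $\rho$ in $\RUN_{B_H}(\phi(x),q)$; Prover picks that disjunct, so that any Spoiler conjunct $(i,(\rho(i_j),c_\rho(i_j)))$ lands on a child position whose continuation of $\rho^*$ is accepting at $\rho(i_j)$. The $\Sig$-case is the direct one-step simulation. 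Conversely, a winning Prover strategy picks a local run $\rho_u\in\RUN_{B_H}(\phi(s(u)),q)$ at every $\cX$-position $u$ (with $q$ the current $Q$-component of the ATA state), and the IO-conjuncts ensure these local runs glue together into a global $B$-run on $\gaminf(s)$.

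The main obstacle is the parity correspondence on infinite plays. An infinite play of $G(A,s)$ projects to an infinite downward path $\pi$ in $s$, which lifts through the $\phi$-expansions to an infinite path $\hat\pi$ in $\gaminf(s)$; the $B$-colours along $\hat\pi$ form a concatenation of finite blocks (one per position on $\pi$), while the ATA play sees exactly one colour per position of $\pi$, namely $\chi(p_i)$ at $\Sig$-steps or the block-minimum $c_\rho(i_j)$ at $\cX$-steps. Since $C$ is finite, the smallest colour appearing infinitely often in the concatenated stream equals the smallest block-minimum appearing infinitely often, so $A$'s parity condition on the play matches $B$'s parity condition on $\hat\pi$. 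Once this is established, \prref{thm:MS8795} yields an equivalent parity-NTA, effectively computable from $B$ and $\phi$, completing the proof.
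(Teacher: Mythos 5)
Your construction is essentially the paper's proof: the same parity-ATA over $Q\times C$ colored by the second component (the paper writes the $\Sig$- and $\cX$-cases uniformly via $\gam(f)=f(1\lds \rk(f))$, and your variant of recording $\chi(p_i)$ instead of $\min\os{c,\chi(p_i)}$ at $\Sig$-steps is harmless for the $\liminf$), the same block-minimum argument matching the two parity conditions, and the same appeal to \prref{thm:MS8795}. The only point where the paper does substantially more work is the converse direction: it invokes positional determinacy and carries out in detail the gluing of the local runs $\rho(u,q,c)$ into a limit run on $\gaminf(s)$ --- in particular checking that plays arriving at the same $u\in\Pos(s)$ in different states land at incomparable positions of $\gaminf(s)$, so the possibly different local runs do not conflict --- a step you assert but do not spell out.
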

\begin{proof}
We have $R =L(B,q_0)$ for some parity-NTA $B=(Q,\Sig,\del,\chi)$ and $q_0\in Q$.
Thus, it is enough to show that  the set $\set{s\in T(\Sig\cup\cX)}{\gaminf(s)\in L(B,p)}$
is effectively regular for all $p\in Q$.
As usual, we assume that $\chi:Q\to \os{1\lds|C|}$ where $|C|$ is odd. Let us define another \pATA $A=(Q\times C,\Sig\cup\cX,\del_A,\chi_A)$ (where
$\chi_A(q,c)=\pr2(q,c)=c$) such that
for all $p\in Q$ we have
\begin{align}\label{eq:AltB}
\set{s\in T(\Sig\cup\cX)}{\gaminf(s)\in L(B,p)}=L(A,(p,\chi(p))).
\end{align}
The set  $\del_A$ is defined by the following \tras
\begin{align}\label{eq:ataph}
\big((p,c),x,\bigvee_{\rho\in \RUN_{B_H}(\gam(x),p)}\;
\bigwedge_{i_j\in K_{\rho}}
\big(i,(\rho(i_j),c_\rho(i_j))\big)\big)
\end{align}
The notation in (\ref{eq:ataph}) is as follows: $x\in \Sig \cup \cX$ with  $\gam(x)=x(1\lds\rk(x))$ and
\begin{align}\label{eq:Krho}
K_\rho= \{i_j\in \Pos(\gam(x))\mid \exists \, i\in H:\, i_j\in \leaf_i(\gam(x))\}
\end{align}
Recall that for a tree $t\in T(\Sig \cup H)$ and a run
$\rho\in \RUN_{B_H}(t,p)$,
the color $c_\rho(i_j)$ denotes the minimal color in the tree
$\rho(t)$ which appears on the unique path from the root $\eps$ to the leaf~$i_j$. If $\RUN_{B_H}(\gam(x),p)=\es$, then
the disjunction is over the empty set, hence
the \tra in (\ref{eq:ataph}) becomes $(p,x,\textrm{false})$.
If $\RUN_{B_H}(\gam(x),p)\neq\es$ but $\gam(x)$ is a tree without holes, then the \tra in (\ref{eq:ataph}) becomes $(p,x,\textrm{true})$.

Using \prref{thm:MS8795} it is enough to show that for all $p\in Q$ we have
$\gaminf(s)\in L(B,p)$ \IFF
Prover $P_0$ has a winning strategy in the associated
parity game for the arena $G(A,s)$ at vertex $(\eps,(p,\chi(p)))$ where $A$ is the \pATA above.
Note that the index set $K_\rho$ allows Spoiler to pick any hole which appears in $\gam(x)$.

One direction is easy. If $\gaminf(s)\in L(B,p)$, then there exists an accepting run $\rho_0\in \RUN_{B}(\gaminf(s),p)$ and Prover can react to all choices of Spoiler
by choosing  $\rho=\rho_{x,p}$ in the disjunction in (\ref{eq:ataph}) as parts of the global run $\rho$.
Note that due to the second component $c_\rho(i_j)$ in the
states, every infinite game $\alp$ corresponds to a unique infinite directed path
in $\rho_0(\gaminf(s))$ such that
the minimal color occurring infinitely often on that infinite path is the same color as the minimal color occurring infinitely often in game $\alp$. Prover's winning strategy can also be explained by looking at the right tree depicted in \prref{fig:runprime}. Prover begins by choosing the partial run of $\rho_0$ on the top triangle. Spoiler has no other option than to choose some leaf $i_j$. This leads to the second triangle, and Prover chooses again the partial run according to $\rho_0$ etc. Since $\gaminf(s)\in L(B,p)$, Spoiler cannot win.

For the other direction, let us assume that Prover has a winning strategy in the arena $G(A,s)$ at vertex $(\eps,(p,\chi(p)))$.
Then Prover~$P_0$ has a positional winning strategy by~\cite{GurevichH82stoc}.

We will show that this winning strategy of $P_0$ defines an accepting run
\[
    \rho_0\in \RUN_{B}(\gaminf(s),p).
\]
The positional strategy chosen by Prover $P_0$
defines for each vertex belonging to $P_0$ at most one
outgoing edge, all other outgoing edges at that vertex are deleted from the arena. After the deletion of edges, Prover is released and without any further interaction in the game. The game becomes a solitaire game where the outcome  depends only on the choices of Spoiler.

Henceforth, all games are defined in a subarena $G'(A,s)$ where
all games at vertex $(\eps,(p,\chi(p)))$ are won by Prover. Moreover, we may assume that
all vertices in $G'(A,s)$ are reachable from the vertex $(\eps,(p,\chi(p)))$.
 Thus, without restriction, every vertex in $G'(A,s)$
belonging to Prover has exactly one outgoing edge. \Ip since all nodes are reachable from $(\eps,(p,\chi(p)))$, all finite games starting at any vertex in $G'(A,s)$ are won by Prover, and if $\alp$  is any directed infinite path starting at any vertex in $G'(A,s)$, then the minimal color occurring infinitely often at vertices of $\alp$ is even.

Whenever a game reaches a vertex $(u,(q,c))$  via a finite directed path in $G'(A,s)$ starting at $(\eps,(p,\chi(p)))$, then $\RUN_{B_H}(\gam(s(u)),q)\neq \es$, and the unique outgoing edge defines a run $\rho\in \RUN_{B_H}(\gam(s(u)),q)\neq \es$ which depends on $(u,q,c)$, only. For better reading we denote the run~$\rho$ as $\rho(u,q,c)$, too\footnote{Thus, the same letter $\rho$ denotes a function or the specific value of that function. The context makes clear what we mean.}.
Note that, if $(q,c)= (\rho(i_j),c_\rho(i_j))$, then the run $\rho(u,q,c)$ does not reveal the position of the leaf $i_j$, in general. That implies that $P_0$ choses $\rho(u,q,c)$ without knowing $i$ or $i_j$, in general.
The outgoing edge (which defines $\rho$) ends at the vertex $(u,(q,c),\rho)$ belonging to spoiler.

In the following we assume without restriction that all games $\alp$ start at vertices belonging to $P_0$. Moreover, if $\alp$ is finite, then  $\alp$ ends in a vertex belonging to Spoiler.

Our aim is construct an accepting run of $\gaminfs$. The run is constructed top-down by induction.
We use the following notation.
Let $u\in \Pos(s)$. We say that
$\rho\in \RUN_{B_H}(\gam(s(u)),q)$ is a \emph{partial run} of the
subtree $\gaminf(s|_u)$ if the run labels a prefix closed subset
of positions in $\gaminf(s|_u)$ with states. The partial run labels the root with $q$. (Recall that $(s|_u)$ denotes the subtree of $s$ rooted at $u$.)
A \emph{decision sequence} for Spoiler is a finite or infinite sequence
$g(\alp)= (g_1,g_2,\ldots)$ where each $g_\ell$ is of the form
$g_\ell= ({h_\ell})_{j_\ell}$ with $h_\ell\in H$ for all $\ell \geq 1$ such that  $g(\alp)$ records all choices of Spoiler in the game $\alp$. The full information about the solitaire game $\alp$ in $G'(A,s)$ starting at $(\eps,(p,\chi(p)))$ is encoded in the sequence $g(\alp)$.

Let $\cD(s)$ be the set of all decision sequences of Spoiler.
In the next step, we define four items for every finite prefix $g=(g_1,\ldots, g_k)$ of a decision sequence $g(\alp)$ in $\cD(s)$.
\begin{itemize}
\item A path $w_g$ in $\gaminf(s)$.
The other items are defined through the path $w_g$.
\item A position $v(w_g)$ in $\gaminf(s)$.
\item A ``terminal'' position $\tr(w_g)=(u,(q,c))$ in $V_0=\Pos(s)\times (Q\times C)$ such that the tree $\gam(s(u))$ appears
as an IO-prefix of the subtree $\gaminf(s|_u)$.
\item For $\tr(w_g)=(u,(q,c))$ a partial run $\rho_g\in \RUN_{B_H}(\gam(s(u)),q)$ of the subtree $\gaminf(s|_u)$.
\end{itemize}
If $g$ is the empty sequence, then no choice of Spoiler has been recorded. We let $w_g$ be the empty path, $v(w_g)=\eps \in \Pos(\gaminf(s))$, and $\tr(w_g)= (\eps,(p,\chi(p)))\in V_0$.
The game $\alp$ starts at that vertex $(\eps,(p,\chi(p)))$ belonging to $P_0$. The game defines a unique run
$\rho_1\in \RUN_{B_H}(\gam(s(\eps)),p)$.
If no hole appears in $\gam(s(\eps))$, then $\gaminf(s)$ is equal to $\gam(s(\eps))$.
We are done in this case: knowing $\rho_\eps=\rho_1$, all four items are defined if $g$ is the empty sequence.

In the other case there exist some $i\in H$ and $i_j\in \leaf_i(\gam(s(u)))$. Thus, the game $\alp$ is not finished; and  Spoiler decides on some leaf labeled by a hole.
Let $g'=(g_1,\ldots, g_{k+1})$ be the corresponding prefix.
By induction,  $w_g$ is defined for  $g= (g_1,\ldots, g_k)$. Let $\tr(w_g)=(u,(q,c))$, and $x=s(u)$.
Hence, there is a tree $t_u=\gam(x)$, and, by induction, $t_u$ is an IO-prefix of the subtree $\gaminf(s)|_v$ of $\gaminf(s)$ rooted at
the position  $v=v(w_g)$ in $\gaminf(s)$. Note that $u\in \Pos(s)$ and $v\in \Pos(\gaminf(s))$. \Ip since $t_u$ is an IO-prefix, every position $v$ in $t_u$ can be identified with a unique position $\pos(v)\in \Pos(\gaminf(s))$ by the endpoint of the path $w_g.w_u(v)$. Here, $w_u(v)$ denotes the unique directed path in $t_u$ from the root to the position $v$ in $t_u$.
The choice $g_{k+1}$ of spoiler defines a position $i_j\in \leaf_i(t_u)$ for some $i\in [\rk(x)]$. We let $w_{g'}$ be the unique path which starts at  the root, has $w_g$ as prefix and stops in $\Pos(i_j)\in \Pos(\gaminf(s))$. Then we define
$v(w_{g'}) = \Pos(i_j)$. It is the endpoint of the path $w_{g'}$.
The unique outgoing edge at $\tr(w_g)=(u,(q,c))$ defines a run $\rho_{g'}\in
\RUN_{B_H}(\gam(s(u)),q)$. Since $g_{k+1}$ is defined, the position
$u.i\in \Pos(s)$ exists, and we let $\tr(w_{g'})= (u.i,(\rho'(i_j),c_{\rho'}(i_j)))$ where $\rho'=\rho_{g'}$.
Note that the tree
$\gam(s(u.i))$ appears as an IO-prefix of the subtree $\gaminf(s|_{u.i})$. Thus, all desired items are defined for the sequence $g'$.
%Moreover, the root position is labeled by some element in $\Sig$.

See \prref{fig:pgnps} how a finite directed path $w$ in $\gam^s_{\infty}(s)$ starting at the root defines a corresponding path in $s$. The figure also shows
 that $w$ defines a unique position in $v(w)\in \Pos(\gam_{\infty}(s))$ via the arrows which start in $\gam_{\infty}(s)$, leave the  tree on the left, and make a clockwise turn upwards to enter the  tree from the right.
 This visualizes the formal definition of a path $w_g$ and its position $v(w_g)$ in $\gaminfs$.

Suppose that $\tr(w)= (u,(q,c))$ and $\tr(w')=(u,(q',c'))$ where
$w=w_g$ and $w'=w_{g'}$ are finite prefixes of decision sequences for Spoiler. We claim that if $(q,c)\neq (q',c')$, then the positions of $v(w)$ and $v(w')$ are incomparable. That is, there is no directed path between them.
Indeed, if $(q,c)= (\rho(i_j),c_\rho(i_j))$ and  $(q',c')= (\rho(i'_{j'}),c_\rho(i'_{j'}))$, then we must have $i'_{j'}\neq i_{j}$.
The claim follows by induction on $\abs w$. Details are left to the reader.
Thus, the runs $\rho(u,q,c)\in \RUN_{B_H}(\gam(s(u)),q)$ and
$\rho(u,q',c')\in \RUN_{B_H}(\gam(s(u)),q)$ (chosen by $P_0$ in his positional strategy) might be different, for example because $c\neq c'$. Since the positions $v(w)$ and $v(w')$ are incomparable, we can use $\rho(u,q,c)$ to define a partial run of the subtree rooted at
$v(w)\in \Pos(\gaminfs)$ and we can use $\rho(u,q',c')$ to define a partial run of the subtree rooted at $v(w')$.

Let $n\in \N$. Consider all finite directed paths $w$ in $\gaminfs$ beginning at the root such that $\abs w\leq n$, and among them those which are induced by finite prefixes of decision sequences $g$ of Spoiler. Let's call them $w_g$. Then the runs $\rho_g$ constructed above for $w_g$ yield  partial runs $\wt\rho_{n}$ of $\gaminf(s)$.
Note that these partial runs are compatible: for every common positions of two such partial runs, the labels (being states) are the same.
{This follows because $w_g= w_{g'}$ implies $\rho_g(v(w_g))=\rho_{g'}(v(w_{g'}))$.}
Moreover, the sequence $(\wt\rho_{n})_{n\in \N}$ has a well-defined limit
$\rho_0= \lim_{n\to \infty}\wt \rho_n$. Indeed, for every
$k\geq 0$ there is some $n_k$ such that $\wt \rho_{n}(v)=\wt \rho_{n_k}(v)$ for all $n\geq n_k$ and all positions $v\in \Pos(\gaminf(s))$ having a distance  less than $k$ to the root.

Therefore, the set of all directed paths in
the arena $G'(A,s)$ starting at $(\eps,(p,\chi(p)))$ yields $\rho_0$ as a (totally defined) run
$\rho_0:\Pos(\gaminf)\to Q$ such that $\rho_0(\eps) =p$.
(We write
$\rho_0$ because its definition depends on the positional winning strategy
of~$P_0$.)

We have to show that the run $\rho_0$ is accepting. For that we have to consider all directed paths in the tree defined by $\rho_0$. If a path ends at a leaf, then it is accepting because $\rho_0$ is a run.
Every infinite path in $\rho_0$ is due to some infinite game $\alp$ starting at $(\eps,(p,\chi(p)))$. The game  defines a decision sequence $g(\alp)$ and the run $\rho_0$ labels all vertices on the unique infinite directed path $\bet$ which visits all positions
$v(w_{g_k})$ where $g_k$ runs over all finite prefixes of $g$.  By construction, the minimal color $c(\alp)$ appearing infinitely often in the game $\alp$ is the same color as occurring infinitely often on $\bet$. Since Prover wins all games, $c(\alp)$ is even. Hence, the run
$\rho_0$ is  accepting.
Thus, the assertion we were looking for.
\end{proof}

\section{Main results: The inclusion problem into regular sets}\label{sec:mainres}
Henceforth, if $\sig: \cX\to 2^{T(\Sig\cup H)\sm H}$ is a \subst
and $R\sse T(\Sig)$ is a subset, then we let
\begin{align}\label{eq:defsR}
\sig_{\mathrm{io}}^{-1}(R)=\set{s\in T(\Sig\cup \cX)}{\sio(s)\subseteq R}.
\end{align}
\prref{eq:defsR} extends the earlier notation used for \hom{s};
and the following lemma generalizes \prref{lem:parthom} by switching from a \hom $\gam$  to a regular \subst $\sig$ to sets of finite and infinite trees.
%%%%%%%%%%%%%%%%%%%%%%%%%%%%%%
\begin{lem}\label{lem:SigProf}
Let  $R\sse T(\Sig)$ be regular and $\sig: \cX\to 2^{T(\Sig\cup H)\sm H}$ be a regular \subst. Then the set
$\sig_{\mathrm{io}}^{-1}(R)=\set{s\in T(\Sig\cup \cX)}{\sio(s)\subseteq R}$
is effectively regular.
\end{lem}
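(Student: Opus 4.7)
The plan is to combine the specialization of \prref{prop:redBcP} with a generalization of the alternating-automaton construction in \prref{lem:parthom} from homomorphisms to substitutions. I would fix a parity-NTA $B=(Q,\Sig,\del,\chi)$ with $R=L(B,q_0)$ and form $B_{\Pro}$ as in \prref{def:alpProf}. By \prref{prop:redBcP}, it then suffices to show that $\set{s\in T(\Sig\cup \cX)}{\vsio(s)\subseteq L(B_{\Pro},q_0)}$ is effectively regular, where $\vsig$ is the specialization of $\sig$ with respect to $B_{\Pro}$. Since $\sig$ is regular and $\set{t}{\pi(t)=\pi}$ is effectively regular for each profile $\pi\in\Pro_B$ by \prref{cor:pireg}, emptiness of $\sig(x)\cap\set{t}{\pi(t)=\pi}$ is decidable; hence the finite set $\vsig(x)=\set{t_\pi}{\exists\, t\in\sig(x):\pi(t)=\pi}$ of finite trees of height at most~$2$ is effectively computable for every $x\in\cX$.

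Next, I would construct a \pATA $A=(Q'\times C',\Sig\cup\cX,\del_A,\chi_A)$ that mimics the automaton of \prref{lem:parthom} but encodes, as an adversarial choice made by Spoiler, the selection of a tree $t\in\vsig(x)$ at each position. Writing $B_{\Pro,H}$ for the extension of $B_{\Pro}$ that accepts every hole at every state, and adopting the convention $\vsig(f)=\{f(1\lds\rk(f))\}$ for $f\in\Sig\sm\cX$, the transition at state $(p,c)$ on symbol $x\in\Sig\cup\cX$ would be specified by the positive Boolean formula
\begin{align*}
\bigwedge_{t\in\vsig(x)}\;\bigvee_{\rho\in\RUN_{B_{\Pro,H}}(t,p)}\;\bigwedge_{i_j\in K_\rho}\;\big(i,(\rho(i_j),c_\rho(i_j))\big),
\end{align*}
with $K_\rho$ defined as in \prref{eq:Krho} (now relative to $t$ in place of $\gam(x)$). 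Distributing $\bigwedge_{t}$ across the $\bigvee_{\rho}$ converts this into disjunctive normal form and yields a legitimate \pATA in the sense of \prref{sec:rtlata}; the coloring is inherited by $\chi_A(q,c)=c$.

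The central claim is that $\vsio(s)\sse L(B_{\Pro},q_0)$ iff Prover has a winning strategy in the parity game on $G(A,s)$ at vertex $(\eps,(q_0,\chi'(q_0)))$; applying \prref{thm:MS8795} then gives the desired effective regularity. In the resulting interleaved game, Spoiler's successive choices of a tree $t\in\vsig(x)$ at each visited position together with a subsequent hole leaf $i_j$ encode precisely a choice function $\gam\in\Gam(\vsig,s)$ and an infinite directed branch of $\gaminf(s)$. Given $\vsio(s)\sse L(B_{\Pro},q_0)$, every such $\gam$ satisfies $\gaminf(s)\in L(B_{\Pro},q_0)$, so an accepting run of $B_{\Pro}$ on $\gaminf(s)$ furnishes Prover's responses exactly as in the first direction of the proof of \prref{lem:parthom}. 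Conversely, by positional determinacy (\prref{thm:GH82}) I would fix a positional winning strategy for Prover; then for any $\gam\in\Gam(\vsig,s)$, the Spoiler strategy that plays $t=\gam(u)$ at position $u$ produces local runs which glue into a global accepting run of $B_{\Pro}$ on $\gaminf(s)$, exactly as in the second direction of the proof of \prref{lem:parthom}.

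The hard part will be to verify that the DNF encoding of the $\forall t\,\exists\rho\,\forall i_j$ alternation remains faithful to the intended semantics after distribution\,---\,i.e.\ that Prover's ``commit first to a family $(\rho_t)_{t\in\vsig(x)}$'' behaviour is equivalent to the truly interleaved game\,---\,and that the gluing of local runs across the hand-off between tree positions still produces an accepting run on the possibly infinite tree $\gaminf(s)$ in the presence of Spoiler's extra degree of freedom of choosing $t\in\vsig(x)$ at every node. Both points should go through by a position-by-position refinement of the arguments in \prref{lem:parthom}, exploiting the fact that $\vsig(x)$ is finite, but the bookkeeping\,---\,especially the compatibility of partial runs chosen for distinct $t$'s at the same position when building the limit run\,---\,is where the proof requires the most care.
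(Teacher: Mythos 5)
Your reduction to the specialization $\vsig$ via \prref{prop:redBcP} is fine, but the alternating automaton you then build is unsound, and the problem is not the DNF bookkeeping you flag at the end but the order of quantifiers \emph{across levels} of $s$. In your game, Prover must commit to a run of the tree $t$ chosen by Spoiler at a position $u$ \emph{before} Spoiler reveals which trees will be chosen at descendants of $u$; yet $\vsio(s)\sse L(B_{\Pro},q_0)$ only guarantees that for each \emph{complete} choice function $\gam$ there exists an accepting run of $\gaminf(s)$, and that run may genuinely depend on choices made arbitrarily far below $u$. Concretely, take $\rk(x)=1$, $\rk(y)=0$, $s=x(y)$, $\sig(x)=\os{g(1)}$, $\sig(y)=\os{a,b}$, and let $B$ have states $q_0,q_a,q_b$ with transitions $(q_0,g,q_a)$, $(q_0,g,q_b)$, $(q_a,a)$, $(q_b,b)$, so that $R=L(B,q_0)=\os{g(a),g(b)}$. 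Then $\sio(s)=\os{g(a),g(b)}\sse R$, but in your game Prover must fix at the root a run of $g(1)$ sending the unique hole either to $q_a$ or to $q_b$, after which Spoiler substitutes the constant not accepted at that state and wins; one checks that the same happens after specialization, since $a$ and $b$ have distinct profiles and every transition of $B_{\Pro}$ at $f_{\pi(g(1))}$ commits the hole-leaves of $t_{\pi(g(1))}$ to a fixed set of states. So your automaton accepts a proper subset of $\sio^{-1}(R)$. This issue is invisible in \prref{lem:parthom} precisely because a homomorphism admits exactly one choice function, so the global accepting run can be fixed before the game starts.

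The paper circumvents this by \emph{not} making the choice of $t\in\vsig(x)$ a move in the game: it enlarges the variable alphabet to $\cX'=\set{(x,\pi)}{\exists t\in\sig(x):\pi=\pi(t)}$, so that a tree $s'\in T(\cX')$ carries an entire choice function in its labels and $\vsig$ becomes the homomorphism $\gam(x,\pi)=t_\pi$, to which \prref{lem:parthom} applies verbatim. The universal quantification over choice functions then reads $\psi_\infty^{-1}(s)\sse\gam_\infty^{-1}(\wt R)$ for the label-projection $\psi_\infty:T(\cX')\to T(\Sig\cup\cX)$ and $\wt R=L(B_{\Pro},q_0)$, and it is discharged by complementation plus projection: $T(\Sig\cup\cX)\sm\sio^{-1}(R)=\psi_\infty\big(T(\cX')\sm\gam_\infty^{-1}(\wt R)\big)$ is regular because the image of a regular tree language under a rank-preserving relabeling is regular (this is the explicit construction in (\ref{eq:Aprime})). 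The paper also first adjoins a fresh constant to force $\sig(x)\neq\es$ for all $x$, so that $\psi_\infty$ is surjective. If you want to salvage a direct game-theoretic construction, you must give Prover access to Spoiler's future choices, which is exactly what moving those choices into the input alphabet accomplishes.
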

\begin{proof}
 Let us begin to prove the lemma with the notation
$\Sig'$, $R'$, and  $\sig'$.
Thus, formally, we begin with $R'\sse T(\Sig')$ and $\sig': \cX\to 2^{T(\Sig'\cup H)\sm H}$. The aim is to show that ${\sig'}\phantom{}_{\mathrm{io}}^{-1}(R')=\set{s\in T(\Sig'\cup \cX)}{\sig'_\mio(s)\subseteq R'}$ is effectively regular. For that we introduce a new constant~$a$ and we define a larger alphabet than $\Sig'$ by letting $\Sig$ to be the disjoint union of
$\Sig'$ and $\os a$. We have $R' \cup\sig'(x) \sse T(\Sig'\cup H)\sm H$ for all $x\in \cX$.
Define $\sig(x) = \sig'(x)\cup (T(\Sig)\sm T(\Sig'))$ for all $x\in \cX$ and $R= R'\cup (T(\Sig)\sm T(\Sig')) \sse
T(\Sig)$. Then $\sig$ and $R$ are regular. Moreover,
$R$ contains all trees which have a leaf labeled by $a$. For all $s\in T(\Sig')$ we have
\begin{align*}
\sio'(s) \sse R'\iff \sio(s) \sse R\iff s\in \sig_{\mathrm{io}}^{-1}(R).
\end{align*}
 Moreover, if $\sig_{\mathrm{io}}^{-1}(R)$ is regular, then
$\sig_{\mathrm{io}}^{-1}(R) \cap T(\Sig')$ is regular, too.
Thus,
it is enough to prove the lemma
under the assumptions that first, $\sig(x)\neq \es$ for all $x\in \cX$
and second, there is a constant $a\in\Sig$ such that $T(\Sig)\sm T(\Sig\sm \os a)\sse R$.
Let $R=L(B,q_0)$ for some \pNTA $B$. According to \prref{def:alpProf} in \prref{sec:specsio} we embed the NTA $B$ first into $B_H$ and then $B_H$ into the \pNTA $B_\cP$. We also embed  $T(\Sig)$ into
$T(\Sig_\cP)$ such that both, $R\sse L(B_\cP,q_0)$ and for every profile $\pi=\pi(t)\in \cP_B$ there is some
finite tree $t_\pi\in \Tfin(\Sig_\cP)$ satisfying $\pi(t_\pi)=\pi$. Since every tree with a leaf labeled by $a$ belongs to $R$,
we may assume that $a \in L(B,p)$ for all states $p$. Hence, if  $t\models \tau \in\cT_B$, then
$t[i_j\la a]\models \tau \in\cT_B$, too.

In the following let $\wt R= L(B_\cP,q_0)$. Since $\sig(x)$ is regular for all $x\in \cX$ we can compute
the specialization $\vsig$ of $\sig$ as defined in \prref{def:specsat}. Using \prref{eq:checksoi} in
\prref{prop:redBcP} we can state
\begin{align}\label{eq:RBcP}
\forall s\in T(\ScX):\; \sio(s)\sse R \iff \sio(s)\sse \wt R\iff \check\sig_{\mathrm{io}}(s)\sse \wt R
\end{align}
%Then, using \prref{prop:equivtasl}, the line (\ref{eq:RBcP}) implies
%\begin{align}\label{eq:wtR}
%\forall s\in T(\ScX):\; \sio(s)\sse R \iff \vsig_{\mathrm{io}}(s)\sse \wt R
%\end{align}
Define a new set of variables by
\begin{align}\label{eq:largeX}
\cX'&=\set{(x,\pi)\in (\Sig\cup \cX)\times \cP_B}
{\exists t\in \sig(x):\; \pi=\pi(t)} \text{ with }
\rk(x,\pi)=\rk(x).
\end{align}
Note that for all $f\in \Sig$ there exists a variable $\big(f,\pi(f(1\lds \rk(f)))\big)\in \cX'$. That is why, below, it is enough to work with $T(\cX')$ rather than with $T(\Sig \cup\cX')$.
%We also write  $x_\pi$ to denote the variable $(x,\pi)\in \cX'$ and we let $\rk(x_\pi)=\rk(x)$.
We use the second component of $(x,\pi)\in \cX'$ to define a \hom
$\gam:\cX'\to \Tfin(\Sig_\cP\cup H)\sm H$ by  $\gam(x,\pi) =t_\pi$  where $t_\pi$ was defined in \prref{fig:RunProf}.
(Recall that $t_\pi$ satisfies $\pi=\pi(t_\pi)$.)
Then, by \prref{eq:spec}, for every $x\in X$ the following equation holds.
\begin{equation}\label{eq:gamxpi}
\vsig(x)=\set{t_{\pi} \in T(\Sig_{\Pro}\cup H)}{\exists t \in \sig(x):\, \pi=\pi(t)} = \set{\gam(x,\pi)}{\exists t\in \sig(x): \pi=\pi(t)}.
\end{equation}
Since $\sig(x)\neq \es$, the projection onto the first component $(x,\pi) \mapsto x$ defines a surjective
mapping $\psi_\infty: T(\cX')\to T(\Sig)$ by the \hom
\begin{align}\label{eq:h}
\psi:\cX'\to T(\Sig \cup \cX\cup H)\sm H,\quad (x,\pi) \mapsto x(1\lds \rk(x)).
\end{align}
A top-down induction shows $\Pos(s')=\Pos(\psi_\infty(s'))$ for all $s'\in T(\cX')$. Also note that $\vsig_\mio(s)=\gam_\infty(\psi_\infty^{-1}(s))$ for all $s\in T(\ScX)$. Indeed, consider any $u\in \Pos(s)$ with $x=s(u)$. Then $\oi h(x) =\set{(x,\pi)}{\exists t\in \sig(x): t\models \pi}$. Hence, $\gam(\oi h(x))= \set{t_\pi}{\exists t\in \sig(x): t\models \pi}= \vsig(x)$ where the last equation follows by \prref{eq:gamxpi}.
Therefore,
\begin{align*}
\sio^{-1}(R)&=\vsio^{-1}(\wt R)&\text{by (\ref{eq:RBcP})}\\
&=\set{s\in T(\Sig \cup \cX)}{\gam_\infty(\psi_\infty^{-1}(s))\sse \wt R}\\
&=\set{s\in T(\Sig \cup \cX)}{\psi_\infty^{-1}(s)\sse \gam_\infty^{-1}(\wt R)}.
\end{align*}
It enough to show that
$T(\Sig \cup \cX)\sm \sio^{-1}(R)=
\set{s\in T(\Sig \cup \cX)}{\psi_\infty^{-1}(s)\cap \gam_\infty^{-1}(\wt R)\neq \es}= \psi_\infty\big(T(\cX') \sm \gam_\infty^{-1}(\wt R)\big)$ is regular. The set $\gam_\infty^{-1}(\wt R)$ is regular by \prref{lem:parthom}. We can write $T(\Sig \cup \cX') \sm \gam_\infty^{-1}(\wt R)=L(A',p_0)$ for a \pNTA $A'=(Q,\Sig \cup\cX',\del',\chi)$. It is therefore enough to construct a
\pNTA $A=(Q,\ScX,\del,\chi)$
such that $L(A,p_0)= \psi_\infty(L(A',p_0))$.
%
%
%Thus, $\sio^{-1}(R)$ is regular \IFF the set $S=\{s\in T(\ScX)\mid h^{-1}(s)\in \gam_\infty^{-1}(\wt R)\}$ is regular.
%Let $\ov S= T(\Sig \cup \cX) \sm S$ and $\ov R= T(\Sig ) \sm\wt R$.
%A purely set theoretical consideration shows that
%\begin{equation}\label{eq:ovSovR}
%\ov S= \{s\in T(\Sig \cup \cX)\mid \oi h(s)\cap \gam_\infty^{-1}(\ov R)\neq \es\}
%= h(\gam_\infty^{-1}(\ov R)) = h(T(\cX') \sm \gam_\infty^{-1}(\wt R)).
%\end{equation}
%The set $T(\cX') \sm \gio^{-1}(\wt R)$ is effectively regular
%because $\gio^{-1}(\wt R)$ has this property.
%Thus, we can write $T(\cX') \sm \gio^{-1}(\wt R)=L(A',p_0)$ for a \pNTA $A'=(Q,\cX',\del',\chi)$. It is therefore enough to construct a
%\pNTA $A=(Q,\ScX,\del,\chi)$
%such that $L(A,p_0)= h(L(A',p_0)).$
The construction of $A$ is straightforward by using another \tra relation.
We define $\del$ by the following equivalence where $x\in \Sig\cup \cX$:
\begin{align}\label{eq:Aprime}
(p,x,q_1\lds q_r)\in \del \iff \exists \pi\in \cP_B : \big(p,(x,\pi),q_1\lds q_r\big)\in \del'.
\end{align}
The assertion follows.
\end{proof}
By a \emph{class of tree languages} we mean a family
of sets ${\cC}(\Del)$, indexed by all finite ranked alphabets $\Del$, such that first, we have $\cC(\Del) \subseteq T(\Del)$ and second, for every rank-preserving inclusion
$\phi:\Del \to \Gam$, the tree-homomorphism defined by $\phi$ induces an inclusion $\cC(\Del) \to\cC(\Gam)$.
\begin{defi}\label{def:inreg}
We say that a  class of tree languages $\cC$ satisfies the property \emph{INREG}, if the following  \emph{inclusion problem into regular sets} is decidable:
For every finite ranked alphabet $\Del$, on input $L\in \cC(\Del)$ (given in some effective way) and a regular tree language $K\sse T(\Del)$ (given, say, by some \pNTA) the problem ``$L\sse K$?'' is decidable.
\end{defi}
\begin{thm}\label{thm:s1s2main}Let $\cC$ be  a class of tree languages fulfilling the property INREG of \prref{def:inreg}.
Then the following decision problem is decidable.
\begin{itemize}
\item Input: Regular \subst{s} $\sig_1,\sig_2$ and tree languages $L\sse T(\Sig\cup\cX)$,
$R\sse T(\Sig)$ such that $R$ is regular and $L\in \cC(\Sig \cup \cX)$.
\item Question: Is there some \subst $\sig: \cX\to 2^{T(\Sig\cup H)\sm H}$ satisfying both, $\sio(L) \sse R$  and  $\sig_1\leq \sig\leq \sig_2$?
\end{itemize}
Moreover, we can effectively compute the set of maximal \subst{s} $\sig$ satisfying $\sio(L) \sse R$  and  $\sig_1\leq \sig(x) \leq \sig_2$. It is a finite set of regular \subst{s}.
\end{thm}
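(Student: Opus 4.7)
The plan is to combine \prref{cor:maxioreg} with \prref{lem:SigProf} and the INREG hypothesis on $\cC$. First I would invoke \prref{cor:maxioreg} on the input $(\sig_1,\sig_2,L,R)$ to produce the finite, effectively computable set $S_2$ of regular substitutions. By the construction of $S_2$ in that proof every $\sig'\in S_2$ already satisfies $\sig_1\leq \sig'\leq \sig_2$, and the set has the following \emph{covering property}: any substitution $\sig$ with $\sig_1\leq \sig\leq \sig_2$ and $\sio(L)\sse R$ is dominated by some $\sig'\in S_2$ that itself satisfies $\sio'(L)\sse R$.

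Second, for each $\sig'\in S_2$, apply \prref{lem:SigProf} to the regular substitution $\sig'$ and the regular language $R$ to effectively construct a parity-NTA for the regular set $K_{\sig'}:=(\sig')_{\mathrm{io}}^{-1}(R)\sse T(\Sig\cup\cX)$. By the definition of the preimage, $\sio'(L)\sse R$ is equivalent to $L\sse K_{\sig'}$, and because $L\in\cC(\Sig\cup\cX)$ and $K_{\sig'}$ is regular, this inclusion is decidable by the INREG property of $\cC$. Running this test for every element of $S_2$ yields the finite, effectively computable subset $S^\star=\{\sig'\in S_2 \mid L\sse K_{\sig'}\}$.

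Third, the covering property of $S_2$ tells us that a solution to the input problem exists if and only if $S^\star\neq\es$, which settles decidability. For the maximal substitutions I would output the set $M$ of $\leq$-maximal elements of $S^\star$; computing $M$ is effective because any comparison $\sig'\leq \sig''$ between two regular substitutions reduces to deciding inclusion of finitely many regular tree languages. That $M$ coincides with the set of maximal solutions of the original problem is a short bookkeeping argument: if $\sig^\star$ is a maximal solution, \prref{cor:maxioreg} yields some $\sig'\in S^\star$ with $\sig^\star\leq \sig'$ and maximality forces $\sig^\star=\sig'\in M$; conversely, if $\sig'\in M$ were strictly dominated by some solution $\sig^{\star\star}$, another application of the covering property would produce $\sig''\in S^\star$ with $\sig'<\sig''$, contradicting the maximality of $\sig'$ in $S^\star$.

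The main obstacle has already been overcome in the proof of \prref{lem:SigProf}, which converts the semantic inclusion ``$\sio(L)\sse R$?'' into a regularity statement about the preimage $\sig_{\mathrm{io}}^{-1}(R)$; once that reduction is available, the INREG hypothesis performs the inclusion test on $L$, \prref{cor:maxioreg} confines the search for maximal solutions to a finite effectively given list, and only the combinatorial bookkeeping sketched above remains.
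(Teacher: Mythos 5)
Your proposal is correct and follows essentially the same route as the paper's own proof: compute the finite set $S_2$ from \prref{cor:maxioreg}, filter it by testing $L\sse \sig'^{-1}_{\mathrm{io}}(R)$ via \prref{lem:SigProf} and the INREG hypothesis, and read off existence and the maximal solutions from the resulting finite set of regular substitutions. The only cosmetic difference is that you name the ``covering property'' of $S_2$ explicitly, which the paper leaves implicit in its appeal to \prref{cor:maxioreg}.
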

\begin{proof}
We let $R=L(B,q_0)$ for some parity NTA $B$.
First, we check that $\sig_1(x)\sse\sig_2(x)$ for all $x\in \cX$ because otherwise there is nothing to do. By definition, we have $\sig_1(f)=\sig_2(f)=f(1\lds [\rk(f)])$ for all $f\in \Sig\sm \cX$.
Thus, without restriction, $\sig_1(x)$ and $\sig_2(x)$ are defined as regular sets for all $x\in \Sig\cup \cX$ with $\sig_1\leq \sig_2$. If there is any \subst $\sig: \cX\to 2^{T(\Sig\cup H)\sm H}$ satisfying both, $\sio(L) \sse R$  and  $\sig_1\leq \sig\leq \sig_2$, then $\sig_1$ is the unique \textbf{minimal} \subst satisfying that property.
The \subst $\sig'$ defined by $\wh \sig_1(x)\cap\sig_2(x)$ satisfies that property, too.
It belongs to the following effectively computable finite set  $S_2$
 of  regular \subst{s}
 \begin{align}\label{eq:S_2main}
S_2=\{\sig':\cX\to 2^{T(\Sig\cup H)\sm H}\mid \exists \,\sig
\,\forall x\in \cX: \sig_1(x) \sse \sig'(x) = \wh\sig(x) \cap \sig_2(x)\}.
\end{align}
The set $S_2$ is the same as defined above in (\ref{eq:S_2eff}) in the proof of \prref{cor:maxioreg}. The notation in \prref{eq:S_2main} refers to \subst{s} $\sig:\cX\to 2^{T(\Sig\cup H)\sm H}$ and their saturations $\wh \sig$ \wrt the NTA $B$.
Thus, if the set $S_2$ is empty, then we can stop. The answer to the question in \prref{thm:s1s2main} is negative; and the set of maximal \subst{s} $\sig$ satisfying $\sio(L) \sse R$  and  $\sig_1\leq \sig(x) \leq \sig_2$ is empty.

Thus, we may assume $S_2\neq \es$, which is computable by \prref{cor:maxioreg}.
Next, in a second phase we consider each element $\sig \in S_2$, one after another.
By \prref{lem:SigProf} we know that for each $\sig \in S_2$, the
set $\sig^{-1}_{\mathrm{io}}(R)$ is an effectively regular set. The assertions $\sio(L)\sse R$ and $L\sse \sig^{-1}_{\mathrm{io}}(R)$ are equivalent.
Since $L\in \cC$ we can check $L\sse \sig^{-1}_{\mathrm{io}}(R)$. Thus, we end up with an effectively computable subset $S_2'$ of $S_2$ such that
there is some \subst $\sig: \cX\to 2^{T(\Sig\cup H)\sm H}$ satisfying both, $\sio(L) \sse R$  and  $\sig_1(x)\sse \sig(x) \sse \sig_2(x)$ for all $x\in \cX'$ \IFF $S_2'\neq\es$. In case
$S_2'=\es$ the answer to the question in \prref{thm:s1s2main} is negative, again. Hence, without $S_2'\neq\es$.
  The maximal \subst{s} $\sig$  satisfying both, $\sio(L) \sse R$  and  $\sig_1(x)\sse \sig(x) \sse \sig_2(x)$ for all $x\in \cX'$  are in $S_2'$. Since $S_2'$ is a nonempty, finite, and computable set of regular substitutions, we can compute all maximal element(s) in $S_2'$. %We are done.
\end{proof}
Note that, in the above theorem, $\cC$ can be chosen strictly larger than the class of all regular languages
and still fulfill the hypothesis INREG\@.
For example we can define $\cC(\Del)$ as consisting of all the regular languages in $T(\Del)$ augmented with all context-free languages over $\Tfin(\Del)$ as defined in~\cite{Courcelle78-tcs2,EngelfrietS77,Guessarian83,Schimpf-Gallier85}.

\begin{rem}\label{rem:cL}
The hypothesis INREG cannot be removed in \prref{thm:s1s2main}.
For a counter-example we can choose the family $\cC$ such that $\cC(\Del)$ consists of all regular subsets of $T(\Del)$
together with all the subsets $T(\Del)\sm K$ where $K$ is a context-free subset of $T_{fin}(\Del)$.\\
The inclusion problem: ``$R \subseteq L$?'' for $R$ regular and $L$ context-free, reduces to  the
problem:\\
$\exists \sigma: \cX\to 2^{T(\Del\cup H)\sm H}:\, \sigma(T(\Del) \sm L) \subseteq (T(\Del) \sm R)$ where
$\sig_1=\sig_2=H= \es$.
But the inclusion of a regular language into a context-free language is undecidable, showing that the substitution-inclusion problem for the class $\cC$ is undecidable.
\qed
\end{rem}
The following corollary considers the special case where we demand that \subst{s}
map variables to nonempty subsets of trees. This setting is rather natural and
it is also the setting when Conway studied the problem for finite words.
%\prref{cor:main} is a straightforward consequence of \prref{thm:s1s2main} and the fact that saturated \solu{s} are regular.

\begin{cor}\label{cor:main}Let $\cC$ be  a class of tree languages  fulfilling the property INREG as in \prref{thm:s1s2main}.
Then the following decision problem is decidable.
\begin{itemize}
\item Input: Tree languages $L\sse T(\Sig\cup\cX)$,
$R\sse T(\Sig)$ such that $R$ is regular and $L\in \cC(\Sig\cup\cX)$.
\item Question: Is there some \subst $\sig: \cX\to 2^{T(\Sig\cup H)\sm H}$ satisfying both, $\sio(L) \sse R$  and  $\sig(x)\neq \es$ for all $x\in \cX$?
\end{itemize}
Moreover, we can effectively compute the set of maximal \subst{s} $\sig$ satisfying $\sio(L) \sse R$  and  $\sig(x)\neq \es$ for all $x\in \cX$. It is a finite set of regular \subst{s}.
\end{cor}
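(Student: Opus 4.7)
Plan: I would reduce \prref{cor:main} directly to \prref{thm:s1s2main} by absorbing the nonemptiness constraint into the choice of the bounds. First I take $\sig_1(x)=\es$ for every $x\in\cX$, which is trivially a regular substitution, and $\sig_2(x)=T(\Sig\cup[\rk(x)])\sm H$, which is regular because both $T(\Sig\cup[\rk(x)])$ and the finite set $H$ are regular subsets of $T(\Sig\cup\cX\cup H)$. Every substitution $\sig$ automatically satisfies $\sig_1\leq\sig\leq\sig_2$ by the very definition of a substitution (\prref{eq:defsig}), so these bounds impose no constraint beyond that of being a substitution.

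Next I would invoke \prref{thm:s1s2main} on the input $(\sig_1,\sig_2,L,R)$ to obtain effectively a finite set $\cM$ consisting of the maximal regular substitutions $\sig'$ with $\sio'(L)\sse R$. Then, for each $\sig'\in\cM$, I would decide, separately for every $x\in\cX$, whether the regular tree language $\sig'(x)$ is empty (which is decidable for regular tree languages), and I would filter $\cM$ to the subset $\cM^*$ consisting of those $\sig'$ for which $\sig'(x)\neq\es$ for all $x\in\cX$.

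Finally I would verify correctness in two steps. For the decision problem: if some $\sig$ satisfies $\sio(L)\sse R$ and $\sig(x)\neq\es$ for all $x$, then by \prref{thm:s1s2main} some $\sig'\in\cM$ satisfies $\sig\leq\sig'$, whence $\sig'(x)\supseteq\sig(x)\neq\es$ and therefore $\sig'\in\cM^*$; conversely any $\sig'\in\cM^*$ is itself a witness, so the existence question reduces to $\cM^*\neq\es$. For the maximality claim: the elements of $\cM^*$ are precisely the maximal substitutions for the corollary's problem, because any $\sig''\gneq\sig'\in\cM^*$ satisfying both $\sio''(L)\sse R$ and $\sig''(x)\neq\es$ would already contradict the maximality of $\sig'$ inside $\cM$ for the weaker problem (without the nonemptiness constraint); and conversely any maximal solution of the corollary's problem is by the first step bounded above by some element of $\cM^*$, so maximality forces equality. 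The whole reduction is routine once \prref{thm:s1s2main} is available; the only point worth isolating is that the nonemptiness constraint is upward-closed among substitutions satisfying $\sio(L)\sse R$, which is exactly what makes the two notions of ``maximal solution'' collapse on $\cM^*$, and I expect this small observation to be the only mildly subtle step of the argument.
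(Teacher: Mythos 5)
Your proposal is correct, but it takes a genuinely different route from the paper. The paper's proof of \prref{cor:main} nondeterministically guesses, for each $x\in\cX$, a profile $\pi_x\in\Pro_B$ realized by some tree in $T(\Sig\cup[\rk(x)])\sm H$, sets $\sig_1(x)$ to the full $\equiv_B$-class of that profile (so that the lower bound itself enforces nonemptiness, and $\sig_1$ is regular because $\wh\sig_1=\sig_1$), runs \prref{thm:s1s2main} once per guess, and collects the maximal solutions over all guesses. You instead run the theorem a single time with the trivial bounds $\sig_1=\es$ and $\sig_2(x)=T(\Sig\cup[\rk(x)])\sm H$ and then post-filter the finite output set by emptiness tests on the regular components. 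Your reduction is cleaner and avoids the $|\Pro_B|^{|\cX|}$ guesses; what the paper's version buys is that the nonemptiness requirement is made visible as a lower-bound constraint of exactly the form the theorem is built to handle, so no separate argument about the two notions of maximality is needed. The observation you isolate as the crux --- that the constraint ``$\sig(x)\neq\es$ for all $x$'' is upward-closed, so the maximal solutions of the constrained problem are precisely the maximal solutions of the unconstrained problem that survive the filter --- is exactly right, and both inclusions of that identification check out. The only point to make explicit is that your argument uses the fact that \emph{every} solution of the unconstrained problem is dominated by a maximal one lying in the computed finite set; this is not literally in the statement of \prref{thm:s1s2main} but is established by \prref{cor:maxioreg} and is part of what the theorem's proof delivers, so the gap is one of citation rather than substance.
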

\begin{proof}
By definition of regularity, we have $R=L(B,q_0)$ for some \pNTA $B$.
We run the following nondeterministic decision procedure.
For each $x\in \cX$ we guess a profile $\pi_x\in \Pro_{B}$.
Then we check that there is some $t\in T(\Sig \cup [\rk(x)])\sm H$ such that $\pi_x=\pi(t)$. If there is no such $t$, then the assertion in the corollary has a negative answer for that guess.  If there exists such $t$, then we let $\sig_1(x)= \set{t\in T(\Sig \cup [\rk(x)])\sm H}{\pi=\pi(t)}$ and $\sig_2(x)=T(\Sig \cup [\rk(x)])\sm H$. We have $\wh\sig_1=\sig_1$ and therefore $\sig_1$ is regular by \prref{prop:satsig}.
We now apply \prref{thm:s1s2main}. The assertion in the corollary
is positive \IFF for some guess the application of \prref{thm:s1s2main} yields a positive answer
with a set of maximal \solu{s}. From these data we can compute the
set of maximal \solu{s} where $\sig(x)\neq \es$ for all $x\in \cX$.
Indeed, let $\sig(x)\neq \es$ for all $x\in \cX$.
Then there is some $t_x\in \sig(x)$ and we can define $\pi_x=\pi(t_x)$. The non-deterministic procedure can guess the
mapping $x\mapsto \pi_x$ and we obtain the maximal \solu{s} for that guess. Simulating all guesses and collecting the maximal \solu{s} for each guess yields the result.
\end{proof}
\medskip
\section{Conclusion and open problems}%
\label{sec:op}
The main result of the paper is \prref{thm:s1s2main}.
We included the special case \prref{cor:main} because the assertion in the corollary corresponds exactly to the results for regular languages over finite words due to Conway as stated in the introduction, \prref{sec:intro}.
The assertions of \prref{thm:s1s2main} and
\prref{cor:main} are positive decidability results, but we don't know any
 (matching) lower and upper complexity bounds except for a few special cases.

Various other natural problems are open, too. For example, a remaining question is whether it is possible to derive positive results for the outside-in extension $\soi$. Another puzzling problem is that we don't know how to decide for regular tree languages $L$ and $R$
whether there exists a \subst $\sig$ such that $\sio(L)=R$.
We have seen that if such a \subst $\sig$ exists, then $\sig$ is in a finite and effectively computable set of regular \subst{s}. Thus, the underlying problem has no existential quantifier: decide ``$\sio(L)=R$?'' on input
$L$, $R$, and $\sig$ where $L$, $R$, and $\sig$ are regular. If the answer is \emph{yes}: $\sio(L)=R$, then $\sio(L)$ is regular. So, one could try to solve that problem, first. Recall that the problem is decidable in the setting of finite trees and \hom{s} by~\cite{CreusGGR16SIAMCOMP,GodoyG13JACM}.

Actually, we don't know how to  decide the problem ``$\sio(L)=T(\Sig)$?'' where $L\sse T(\cX)$ and  $\sig(x)= T(\Sig\cup [\rk(x)])$ for  all $x\in \cX$.
Both problems ``$\exists \sig: \sio(L)=R$?'' and ``$\exists \sig: \soi(L)\sse R$?'' remain open, even if we restrict ourselves to deal with finite trees, only.

It is also open whether better results are possible if we restrict $R$ (or $L$ and $R$) to smaller classes of regular tree languages like the class of languages with B\"uchi acceptance.

\bibliography{traces}

\newcommand{\etalchar}[1]{$^{#1}$}
\newcommand{\Ju}{Ju}\newcommand{\Ph}{Ph}\newcommand{\Th}{Th}\newcommand{\Ch}{Ch}\newcommand{\Yu}{Yu}\newcommand{\Zh}{Zh}\newcommand{\St}{St}\newcommand{\curlybraces}[1]{\{#1\}}
\begin{thebibliography}{CDG{\etalchar{+}}07}

\bibitem[Bal06]{bala2006complexity}
Sebastian Bala.
\newblock Complexity of regular language matching and other decidable cases of
  the satisfiability problem for constraints between regular open terms.
\newblock {\em Theory of Computing Systems}, 39(1):137--163, 2006.

\bibitem[Blu20]{Blumensath20lmcs}
Achim Blumensath.
\newblock Regular tree algebras.
\newblock {\em Logical Methods in Computer Science}, 16, February 2020.
\newblock \href {https://doi.org/10.23638/LMCS-16(1:16)2020}
  {\path{doi:10.23638/LMCS-16(1:16)2020}}.

\bibitem[Boj15]{Bojanczyk15arxiv}
Miko{\l}aj Boja{\'n}czyk.
\newblock Recognisable languages over monads.
\newblock {\em ArXiv e-prints}, abs/1502.04898, 2015.
\newblock The contents is part of forthcoming book:
  {https://www.mimuw.edu.pl/bojan/paper/book-on-languages-recognised-by-finite-semigroups}.
\newblock \href {http://arxiv.org/abs/1502.04898} {\path{arXiv:1502.04898}}.

\bibitem[B{\"u}c62]{buc62}
Julius~Richard B{\"u}chi.
\newblock On a decision method in restricted second-order arithmetic.
\newblock In {\em Proc. Int. Congr. for Logic, Methodology, and Philosophy of
  Science}, pages 1--11. Stanford Univ. Press, 1962.

\bibitem[CDG{\etalchar{+}}07]{tata2007}
Hubert Comon, Max Dauchet, R\'emy Gilleron, Florent Jacquemard, Denis Lugiez,
  Christof L\"oding, Sophie Tison, and Marc Tommasi.
\newblock Tree automata techniques and applications, 2007.
\newblock \url{http://www.grappa.univ-lille3.fr/tata}.

\bibitem[CGGR16]{CreusGGR16SIAMCOMP}
Carles Creus, Adri{\`{a}} Gasc{\'{o}}n, Guillem Godoy, and Lander Ramos.
\newblock The {HOM} problem is {EXPTIME}-complete.
\newblock {\em {SIAM} J. Comput.}, 45:1230--1260, 2016.
\newblock \href {https://doi.org/10.1137/140999104}
  {\path{doi:10.1137/140999104}}.

\bibitem[Con71]{conway1971regular}
John~Horton Conway.
\newblock {\em Regular algebra and finite machines}.
\newblock Chapman and Hall, London, 1971.

\bibitem[Cou78]{Courcelle78-tcs2}
Bruno Courcelle.
\newblock A representation of trees by languages. {II}.
\newblock {\em Theoretical Computer Science}, 7(1):25--55, 1978.
\newblock \href {https://doi.org/10.1016/0304-3975(78)90039-7}
  {\path{doi:10.1016/0304-3975(78)90039-7}}.

\bibitem[Cou83]{courcelle83-tcs}
Bruno Courcelle.
\newblock Fundamental properties of infinite trees.
\newblock {\em Theoretical Computer Science}, 25:95--169, 1983.
\newblock \href {https://doi.org/10.1016/0304-3975(83)90059-2}
  {\path{doi:10.1016/0304-3975(83)90059-2}}.

\bibitem[DKRH16]{edam16}
Volker Diekert, Manfred Kufleitner, Gerhard Rosenberger, and Ulrich Hertrampf.
\newblock {\em Discrete Algebraic Methods. Arithmetic, Cryptography, Automata
  and Groups}.
\newblock Walter de Gruyter, 2016.

\bibitem[ES77]{EngelfrietS77}
Joost Engelfriet and Erik~Meineche Schmidt.
\newblock {IO} and {OI.} {I}.
\newblock {\em J. Comput. Syst. Sci.}, 15:328--353, 1977.
\newblock \href {https://doi.org/10.1016/S0022-0000(77)80034-2}
  {\path{doi:10.1016/S0022-0000(77)80034-2}}.

\bibitem[ES78]{EngelfrietS78}
Joost Engelfriet and Erik~Meineche Schmidt.
\newblock {IO} and {OI.} {II}.
\newblock {\em J. Comput. Syst. Sci.}, 16:67--99, 1978.
\newblock \href {https://doi.org/10.1016/0022-0000(78)90051-X}
  {\path{doi:10.1016/0022-0000(78)90051-X}}.

\bibitem[GG13]{GodoyG13JACM}
Guillem Godoy and Omer Gim{\'{e}}nez.
\newblock The {HOM} problem is decidable.
\newblock {\em J. {ACM}}, 60:23:1--23:44, 2013.
\newblock \href {https://doi.org/10.1145/2501600} {\path{doi:10.1145/2501600}}.

\bibitem[GH64]{GinsburgHibbard64}
Seymour Ginsburg and Thomas~N. Hibbard.
\newblock Solvability of machine mappings of regular sets to regular sets.
\newblock {\em Journal of the ACM}, 11:302--312, 1964.
\newblock \href {https://doi.org/10.1145/321229.321234}
  {\path{doi:10.1145/321229.321234}}.

\bibitem[GH82]{GurevichH82stoc}
Yuri Gurevich and Leo Harrington.
\newblock Trees, automata, and games.
\newblock In Harry~R. Lewis, Barbara~B. Simons, Walter~A. Burkhard, and
  Lawrence~H. Landweber, editors, {\em Proceedings of the 14th Annual {ACM}
  Symposium on Theory of Computing, May 5-7, 1982, San Francisco, California,
  {USA}}, pages 60--65. {ACM}, 1982.
\newblock \href {https://doi.org/10.1145/800070.802177}
  {\path{doi:10.1145/800070.802177}}.

\bibitem[GTW02]{LNCS2500automata}
Erich Gr{\"{a}}del, Wolfgang Thomas, and Thomas Wilke, editors.
\newblock {\em Automata, Logics, and Infinite Games: {A} Guide to Current
  Research}, volume 2500 of {\em Lecture Notes in Computer Science}. Springer,
  2002.
\newblock \href {https://doi.org/10.1007/3-540-36387-4}
  {\path{doi:10.1007/3-540-36387-4}}.

\bibitem[Gue83]{Guessarian83}
Ir\`ene Guessarian.
\newblock Pushdown tree automata.
\newblock {\em Mathematical Systems Theory}, 16(4):237--263, 1983.
\newblock \href {https://doi.org/10.1007/BF01744582}
  {\path{doi:10.1007/BF01744582}}.

\bibitem[KO21]{KuncOkhotin21}
Michal Kunc and Alexander Okhotin.
\newblock Language equations.
\newblock In Jean-{\'E}ric Pin, editor, {\em Handbook of Automata, Vol.~I},
  pages 765--800. EMS Publishing House, Berlin, 2021.
\newblock Final version (2018) available at the homepage of Okhotin.

\bibitem[Koz77]{koz77}
Dexter Kozen.
\newblock Lower bounds for natural proof systems.
\newblock In {\em Proc.~of the 18th Ann.~Symp. on Foundations of Computer
  Science, FOCS'77}, pages 254--266, Providence, Rhode Island, 1977. IEEE
  Computer Society Press.

\bibitem[Kun07]{kunc2007power}
Michal Kunc.
\newblock The power of commuting with finite sets of words.
\newblock {\em Theory of Computing Systems}, 40:521--551, 2007.

\bibitem[Mac71]{maclane71}
Saunders MacLane.
\newblock {\em Categories for the Working Mathematician}.
\newblock Springer-Verlag, New York, 1971.
\newblock Graduate Texts in Mathematics, Vol. 5.

\bibitem[MS87]{MullerSchupp87tcs}
David~E. Muller and Paul~E. Schupp.
\newblock Alternating automata on infinite trees.
\newblock {\em Theoretical Computer Science}, 54:267--276, 1987.
\newblock \href {https://doi.org/10.1016/0304-3975(87)90133-2}
  {\path{doi:10.1016/0304-3975(87)90133-2}}.

\bibitem[MS95]{MullerSchupp95tcs}
David~E. Muller and Paul~E. Schupp.
\newblock Simulating alternating tree automata by nondeterministic automata:
  {N}ew results and new proofs of the theorems of {Rabin}, {McNaughton} and
  {Safra}.
\newblock {\em Theoretical Computer Science}, 141:69--107, 1995.
\newblock \href {https://doi.org/10.1016/0304-3975(94)00214-4}
  {\path{doi:10.1016/0304-3975(94)00214-4}}.

\bibitem[PCL97]{PrieurCL97}
Christophe Prieur, Christian Choffrut, and Michel Latteux.
\newblock Constructing sequential bijections.
\newblock In {\em Structures in logic and computer science}, volume 1261 of
  {\em Lecture Notes in Computer Science}, pages 308--321. Springer, Berlin,
  1997.
\newblock \href {https://doi.org/10.1007/3-540-63246-8_19}
  {\path{doi:10.1007/3-540-63246-8_19}}.

\bibitem[Rab69]{rab69}
Michael~O. Rabin.
\newblock Decidability of second-order theories and automata on infinite trees.
\newblock {\em Transactions of the American Mathematical Society}, 141:1--35,
  1969.

\bibitem[SG85]{Schimpf-Gallier85}
Karl~M. Schimpf and Jean~H. Gallier.
\newblock Tree pushdown automata.
\newblock {\em Journal of Computer and System Sciences}, 30(1):25--40, 1985.
\newblock \href {https://doi.org/10.1016/0022-0000(85)90002-9}
  {\path{doi:10.1016/0022-0000(85)90002-9}}.

\bibitem[Tho90]{tho90handbook}
Wolfgang Thomas.
\newblock Automata on infinite objects.
\newblock In J.\ van Leeuwen, editor, {\em Handbook of Theoretical Computer
  Science}, chapter~4, pages 133--191. Elsevier Science Publishers B.\ V.,
  1990.

\bibitem[Zie98]{zielonka98tcs}
Wies{\l}aw Zielonka.
\newblock Infinite games on finitely coloured graphs with applications to
  automata on infinite trees.
\newblock {\em Theoretical Computer Science}, 200:135--183, 1998.
\newblock \href {https://doi.org/10.1016/S0304-3975(98)00009-7}
  {\path{doi:10.1016/S0304-3975(98)00009-7}}.

\end{thebibliography}
\bibliographystyle{alphaurl}

\appendix
\section{Some additional material}\label{sec:append}
As noticed in the preamble to the present paper:  it is not necessary to read anything in the appendix to understand the main results.
L'annexe, c'est l'art pour l'art: ars gratia artis.
\subsection{Conway's result for finite and infinite words}\label{sec:oldconwords}
This section is meant for readers who are interested to see
  proofs of our main results in the special (and simpler) case of finite and infinite words.

Let $\Sigma$ and  $\cX$ be  finite alphabets and let $\Sig^\infty$ denote the set of finite and infinite words. That is $\Sig^\infty = \Sig^*\cup \Sig^\oo$ where $\Sig^\oo$ is the set of infinite words. The aim is to give an essentially self-contained proof for (a generalization of) Conway's result with respect to $\#\in \os{\sse,=}$ for subsets of $\Sig^\infty$. For that we consider ``regular constraints'' in the spirit of \prref{thm:s1s2main}. The proof for $\Sig^*$ is given with all details. Next, we explain that the same approach works for infinite words, too. However, we are more sketchy and our proof uses the well-known result that regular $\oo$-languages form an effective Boolean algebra~\cite{buc62}. This means that on input $\oo$-languages $L_1,L_2$ (for example, specified by B\"uchi automata) we can construct B\"uchi automata for $\Sig^\oo$,
$L_1\cup L_2$, and $\Sig^\oo\sm L_1$. There is a way to avoid an explicit complementation, for example by using ultimately period words as a witness to show that $L_1\sm L_2\neq \es$. \begin{prop}\label{prop:mainword}
Let $\sig_1,\sig_2$ be mappings from $\ScX$ to $2^{\Sig^+}$ such that for $i=1,2$ first, $\sig_1(a)=\os{a}$ for every $a\in \Sig\sm \cX$ and second,
$\sig_i(x)$ is a regular language for every $x\in \cX$.
Then for regular languages $L\sse (\Sig\cup \cX)^\infty$ and  $R\sse \Sig^\infty$
the following assertions hold for $\#\in \os{\sse,=}$.
\begin{enumerate}
\item The following decision problem is decidable.
\begin{align}\label{eq:Confinword}
\text{``$\exists \sig: \sig(L)\mathop{\#} R\wedge \forall x: \sig_1(x)\sse \sig(x) \sse \sig_2(x)$?''}
\end{align}
\item  Define $\sig\leq \sig'$
	by $\sig(x)\sse \sig'(x)$ for all $x\in \cX$. Then every solution $\sig$ of (\ref{eq:Confinword}) is bounded from above by  a maximal \solu; and the number of maximal solutions is finite.
Moreover, if $\sig$ is maximal, then $\sig(x)$ is regular for all $x\in \cX$; and all maximal \solu{s} are effectively computable.
\end{enumerate}
\end{prop}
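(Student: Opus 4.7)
The plan is to transcribe the proof of \prref{thm:s1s2main} from trees to words. Everything specializes: the space $T_\bot(\Sig\cup H)$ becomes $\Sig^\infty \cup\{\bot\}$, inside-out and outside-in coincide (there are no holes to duplicate), and the task/profile machinery of \prref{sec:tapro} can be replaced by the classical theory of recognizing morphisms into a finite monoid (for $\Sig^*$) or a finite $\oo$-semigroup (for $\Sig^\oo$).

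First, I would fix a congruence $\equiv_R$ of finite index on $\Sig^\infty$ such that $R$ is a union of $\equiv_R$-classes and $\equiv_R$ is compatible with the concatenation $\Sig^+\cdot\Sig^\infty\to\Sig^\infty$. For $R\sse\Sig^*$ take the kernel of the syntactic morphism of $R$; for $R$ involving infinite words use a strongly recognizing morphism into a finite $\oo$-semigroup (equivalently, the profile equivalence induced by a parity automaton for $R$, i.e.\ the specialization of \prref{def:task} to unary trees). Define the saturation $\wh\sig:\cX\to 2^{\Sig^+}$ by
\[
\wh\sig(x)=\set{w\in\Sig^+}{\exists u\in\sig(x):\, u\equiv_R w}.
\]
Each $\wh\sig(x)$ is a finite union of $\equiv_R$-classes, hence regular and effectively computable from any regular representation of $\sig(x)$. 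Because $\equiv_R$ is a congruence saturating $R$, we get $\sig(L)\sse R\iff\wh\sig(L)\sse R$, and combined with $\sig\leq\wh\sig$ this also yields $\sig(L)=R\Rightarrow\wh\sig(L)=R$.

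Mimicking \prref{cor:maxioreg}, set
\[
S_2 = \set{\sig':\cX\to 2^{\Sig^+}}{\sig'=\wh\sig\cap\sig_2 \text{ for some saturation }\wh\sig\text{ with }\sig_1\leq\wh\sig}.
\]
Since there are at most $(2^n)^{|\cX|}$ distinct saturations ($n$ being the index of $\equiv_R$), $S_2$ is a finite, effectively computable set of regular substitutions. Any $\sig$ with $\sig_1\leq\sig\leq\sig_2$ and $\sig(L)\mathop{\#}R$ is dominated by $\sig':=\wh\sig\cap\sig_2\in S_2$, and $\sig'(L)\mathop{\#}R$ still holds: in the inclusion case this is immediate from $\sig'(L)\sse\wh\sig(L)\sse R$, while in the equality case we chain $R=\sig(L)\sse\sig'(L)\sse\wh\sig(L)=R$. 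The decision problem thus reduces to testing $\sig'(L)\mathop{\#}R$ for each of the finitely many $\sig'\in S_2$. Since $\sig'$ is a regular substitution and $L$ is regular, $\sig'(L)$ is a regular subset of $\Sig^\infty$ effectively computable by standard constructions on finite and B\"uchi or parity automata; inclusion and equality against the regular set $R$ are then decidable. The maximal solutions are the $\leq$-maximal elements of $\set{\sig'\in S_2}{\sig'(L)\mathop{\#}R}$, a finite and effectively computable set of regular substitutions.

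The main obstacle is only the infinite-word case, where one needs a finite-index concatenation congruence saturating $R$ together with an effective procedure to compute $\sig'(L)$ as a regular subset of $\Sig^\infty$. The former is supplied either by the $\oo$-semigroup theory of regular $\oo$-languages (Perrin--Pin, Wilke) or, self-containedly, by applying \prref{cor:pireg} to unary trees; the latter is classical. The rest of the argument is a direct specialization of the tree proof.
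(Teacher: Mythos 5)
Your proposal is correct and follows essentially the same route as the paper's own proof: saturate $\sig$ with respect to a finite-index strongly recognizing congruence for $R$, obtain a finite effectively computable candidate set by intersecting saturations with $\sig_2$, and test $\sig'(L)\mathop{\#}R$ for each candidate using closure of regular ($\oo$-)languages under regular substitution and effective complementation. The only difference is cosmetic: the paper instantiates the recognizing morphism concretely via transition matrices (Boolean matrices for $\Sig^*$, matrices over a three-element semiring for B\"uchi automata), whereas you invoke the syntactic morphism and the general $\oo$-semigroup theory.
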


\noindent
Recall that a \solu $\wt \sig$ of (\ref{eq:Confinword}) is maximal and above a \solu
$\sig$ \IFF first, $\sig \leq \wt \sig$ and second, if $\sig'$ is any \solu of (\ref{eq:Confinword}) with $\wt\sig \leq \sig'$, then $\wt\sig = \sig'$.
\begin{proof}
 Clearly, we may assume without restriction that $\sig_1(x)\sse \sig_2(x)$ for all $x$ because otherwise there are no \solu{s}.

\noindent{\textbf{Finite words:}} The easiest situation is the original setting of Conway: $L\sse (\Sig \cup \cX)^*$ is arbitrary and $R\sse \Sig^*$ is regular.
In that case let $B=(Q,\Sig,\del,I,F)$ be an NFA such that
 $\del\sse Q\times \Sig \times Q$ and $R=L(B)$.
Without restriction, $Q=\os{1\lds n}$ with $n\geq 1$.
For $p,q\in Q$ we denote by $L[p,q]$ the set of words accepted by the NFA $B_{p,q}= (Q,\Sig,\del,\os p,\os q)$. Next, we consider the
semiring of Boolean $n\times n$ matrices $\B^{n\times n}$ where, as usual, $\B=(\os {0,1},\max,\min)$. We use the multiplication of matrices to define $\B^{n\times n}$ as a monoid where the unit matrix is the neutral element.
For a letter $a\in \Sig$ we denote by $M_a$ the matrix
such that for all $p,q$ we have $M_a(p,q) =1\iff a\in L[p,q]$. Since $\Sig^*$ is a free monoid, the $M_a$'s
define a \hom
$\mu:\Sig^*\to \B^{n\times n}$ such that for all $w\in \Sig^*$ we have $M_w(p,q) =1\iff w\in L[p,q]$ where $M_w = \mu(w)$.
We have $\oi \mu(\mu(R))= R$. Indeed, for $u\in \Sig^*$ let $[u]$ be the set of words $v$ such that $\mu(u)=\mu(v)$. Then we have
\begin{align}\label{eq:finuni}
R= \bigcup\set{u\in L[p,q]}{p\in I, q\in F}= \bigcup_{p\in I, q\in F}
\bigcup\set{[u]\sse \Sig^*}{u\in L[p,q]}.
\end{align}
The verification of (\ref{eq:finuni}) is straightforward from the definition of $\mu$. The crucial observation is that the union in (\ref{eq:finuni}) is finite because the set $\set{[u]\sse \Sig^*}{u\in \Sig^*}$ is finite. Its cardinality is less than $2^{2^{n^2}}$.
\bigskip

\noindent{\textbf{Final Steps.}}
We are almost done with the proof for finite words. We need a few more steps.
Let $\sig:\cX\to 2^{\Sig^*}$ be any \subst. Define its \emph{saturation} $\wh \sig:\cX\to 2^{\Sig^*}$
by
$ %\[
\wh \sig(x) = \bigcup\set{[u]\sse \Sig^*}{u\in \sig(x)}.
$ %\]
Then $\sig\leq \wh \sig$, the set $\wh \sig(x)$ is regular (being a finite union of regular sets), and if $\sig(L)\mathop{\#} R$, then it holds $\wh \sig(L)\mathop{\#} R$, too. Moreover, if
$\sig$ is regular, then we can calculate $\wh \sig$ because we can decide for
all $m\in \B^{n\times n}$ and $x\in \cX$ whether $\oi\mu(m)\sse \sig(x)$.
Let us call a \subst $\sig'$  be
a \emph{candidate} as soon as first, $\sig_1(x)\sse \sig'(x)$ for all $x\in \cX$ and second, $u\in \sig'(x)$ implies $[u]\sse \sig'(x)$.
Since $\set{[u]\sse \Sig^*}{u\in \Sig^*}$ is an effective list of finitely many regular sets, the finite list $\cC$ of candidates is computable: there are at most $2^{2^{n^2}|\cX|}$ candidates.
Since we assume $\sig_1\leq \sig_2$, we compute for each candidate $\sig'\in \cC$ another regular
\subst $\wt\sig$ by $\wt\sig(x) = \sig'(x)\cap \sig_2(x)$.
The point is that whenever the problem in (\ref{eq:Confinword}) has any \solu $\sig:\cX \to 2^{\Sig^*}$, then there is a maximal \solu  of the form $\wt\sig$
 which satisfies $\wt\sig(L)\mathop{\#} R$.
The class of  regular languages is closed under \subst of letters by regular sets. This is a standard exercise in formal language theory.
Hence,  $\wt \sig(L)$ is regular; and we can decide  $\wt \sig(L)\mathop{\#} R$ for every $\wt\sig$.  This is the same as to decide $\wt\sig(L)\sse R$. Thus, we have to decide $\wt\sig(L)\cap (\Sig^\oo\sm R)=\es$. The test involves the complementation\footnote{It is here where our exposition uses for $\oo$-regular languages B\"uchi's help.\label{foot:1}} of $R$.

So, we end up with a nonempty list $\cL$ of \subst{s} $\wt \sig$ satisfying $\wt\sig(L)\mathop{\#} R$.
The list $\cL$ is finite and effectively given. Hence, we can compute all maximal elements. We are done.
\bigskip

\noindent{\textbf{Infinite words:}}
Let us show that (using~\cite{buc62}) the case of infinite words can be explained in a similar fashion. For simplicity we restrict ourselves to \subst{s} where $\sig(x)$ is a non-empty subset of $\Sig^+$. (The other cases are not harder, but need more case distinctions.)
The starting point are two $\oo$-regular languages
$L\sse (\Sig\cup \cX)^\oo$, and  $R\sse \Sig^\oo$. We use the fact that every $\oo$-regular language can be accepted by a
nondeterministic B\"uchi automaton. The syntax of a B\"uchi automaton is as for an NFA:\@ $B=(Q,\Sig,\del,I,F)$ where $\del\sse Q\times \Sig \times Q$.

A word $w\in  \Sig^\oo$ is accepted if there are $p\in I$ and $q\in F$ such that $B$ allows  an infinite path labeled by $w$ which begins in $p$ and visits the state $q$ infinitely often.

Instead of working over the Booleans $\B$, we consider the three-element commutative idempotent semiring $S=(\os{0,1,2},+,\cdot)$ where
$+=\max$ and
$x\cdot y= 0$ if $x=0$ or $y=0$ and otherwise  $x\cdot y=\max\os{x,y}$.
Note that $0$ is a zero and $1$ is neutral in $(S,\cdot)$. Let us define for
$Q=\os{1\lds n}$ and $a\in \Sig$ the matrix $M_a\in S^{n \times n}$ by
\begin{align*}
 M_a(p,q) =
\begin{cases}
 \text{$0$ \hspace{0cm} if $(p,a,q)\notin \del$,}\\
 \text{$1$ \hspace{0cm} if $(p,a,q)\in \del$ but $\os{p,q} \cap F= \es$,}\\
 \text{$2$ otherwise: if $(p,a,q)\in \del$ and $\os{p,q} \cap F\neq \es$.}
\end{cases}
\end{align*}
The multiplicative structure $(S^{n \times n},\cdot)$ yields a finite monoid with
$3^{n^2}$ elements.
The matrices $M_a\in S^{n \times n}$ define a \hom
$\mu:\Sig^*\to S^{n \times n}$. For all $p,q\in Q$ and $w\in \Sig^*$ the interpretation for $M_w=\mu(w)$ is as follows.
We have $M_w(p,q)\neq 0$ \IFF there is a path labeled by $w$ from state $p$ to $q$. Moreover, $M_w(p,q)=2$ \IFF there is a path labeled by $w$ from state $p$ to $q$ which visits a final state.
Let  $T\sse S^{n \times n}$ be any subset, then $\oi\mu(T)$ is a regular language of finite words: $S^{n \times n}$ is the state set of a (deterministic!) NFA
accepting $\oi\mu(T)$. In the terminology of $\oo$-languages:  $\mu$ \emph{strongly recognizes} every $L= L(Q,\Sig,\del,\os p,\os q)\sse \Sig^\oo$
where $p,q\in Q$. Strong recognition means that
if $u= x_0x_1\cdots$ and $v= y_0y_1\cdots$ are infinite sequences of finite words such that $\mu(x_i)=\mu(y_i)$ for all $i\in \N$, then $u\in L\iff v\in L$.
This property is crucial in the following. The verification of this property is straightforward and left to the reader.

We proceed as in the case of finite words.  We let $[u]=\set{v\in \Sig^+}{\mu(u)=\mu(v)}$. Now, consider any \subst $\sig:\cX\to 2^{\Sig^+}\sm \es$. Let us define
its saturation $\wh\sig$ by $\wh\sig(x) = \bigcup\set{[u]\in \Sig^+}{u\in \sig(x)}$. The saturation $\wh\sig$ is regular. Moreover, if $\sig$ itself is regular, then we can compute $\wh\sig$ effectively. (Note that this involves sets of finite words, only.) The number of saturated
\subst{s} is less than $2^{3^{n^2}|\cX|}$. We can calculate a list all saturated
\subst{s}.

We also use the fact that if $L\sse (\Sig\cup\cX)$ is $\oo$-regular and if  $\sig:\cX\to 2^{\Sig^+}$ is a regular  \subst, then $\sig(L)$ is effectively $\oo$-regular. Again, this fact is rather easy to see and it does not rely on the results of~\cite{buc62}.
It is well-known that a language $L$ of infinite words is $\oo$-regular
\IFF $L$ is a finite union of languages $UV^\oo$ where
$U$ and $V$ are regular languages of nonempty finite words. This fact can be derived in essentially the same way as the corresponding statement for finite words showing that NFAs have the same expressive power as regular expressions.
Now, we can use our knowledge on finite words: if $L=\bigcup UV^\oo$, then $\sig(L)=\bigcup \sig(U)\sig(V)^\oo$. If $\sig$ is a regular \subst such that the empty word is not in any $\sig(x)$, then $\sig(L)\sse \Sig^\oo$ is $\oo$-regular because the class of regular subsets in $\Sig^*$ is closed under regular \subst{s}. More precisely, let $w= x_0x_1\cdots \in L\sse (\Sig\cup \cX)^\oo$ such that $\sig(w)\in L(B)$. Then $\sig(w)$ is the set of all $\oo$-words
which have a factorization $u= u_0u_1\cdots$ where for each $i\in \N$ we have
$u_i\in \sig(x_i)$.  Consider any $\oo$-word
$v=v_0v_1\cdots \in \Sig^\oo$ such that $v_i\in [u_i]$ for all $i\in \N$. Since $\mu$ it is strongly recognizable, as explained above, $\sig(w)\sse R$ implies $v\in R$. As a consequence, if $\sig(L)\#R$, then we have $\wh\sig(L) \# R$, too.

We now have all ingredients together to argue  as in paragraph above ``\textbf{Final~Steps}'' for finite words. However, to make the statements effective we use the fact that the family of $\oo$-regular languages is an effective Boolean algebra. It is in this part where (according to our approach) B\"uchi's result in~\cite{buc62} enters the scene: the complement of an $\oo$-regular language is effectively $\oo$-regular. See also the corresponding footnote (\ref{foot:1}) above where we speak about finite words and where we point to complementation.
\end{proof}
For more details about regular languages over infinite words (in the spirit of this section) we refer to the textbook~\cite{edam16}. \Ip the book presents
the equivalence between $\oo$-regular expressions and Büchi automata. It also shows how Büchi used a simple argument from Ramsey theory to derive his results in~\cite{buc62}.
Of course, other textbooks or survey papers than~\cite{edam16} do the same.

\subsection{About the existence of maximal solutions}\label{sec:maxsol}
The aim of this subsection is to
show  that the \emph{regularity} hypothesis on $R\sse T(\Sig)$ cannot be removed from our main results. This is done in \prref{ex:zorn}. The example uses the special case of infinite words, only.

Remember that $\TfinX(\Sig\cup\cX)$ (resp.~$\TfinH(\Sig\cup H)$) denote the set of trees  with only a finite number of occurrences of variables (resp.~holes).
Moreover, in \prref{sec:substi} we used induction on the maximal level of a variable and \prref{eq:siodef} (resp.~\prref{eq:soidef}) to define $\sio(s)$ (resp.~$\soi(s)$)
for $s\in \TfinX(\Sig\cup\cX)$
(with the additional exceptional case of a term without variable).
Such inductions can be viewed, as inductions on the following notion of \emph{norm} of a term
\begin{equation}\label{eq:sizes}
\|s\| = \sup \{ |u|+1 \mid u \in \pos(s) \wedge s(u) \in {\cX}\}.
\end{equation}
%(The supremum is taken in the ordered set $(\N,\leq))$; it is an integer, which is null exactly when $s$ has no occurrence of variable).
%%%%%%%%%%%%%%%%%%%
\begin{prop}\label{prop:Zorn}
Let  $L\sse \TfinX(\Sig\cup\cX)$ and $R\sse T(\Sig)$ be arbitrary subsets, $\sig:\cX\to 2^{T(\Sig\cup H)\sm H}$ be a \subst, and
$\#\in \os{{\sse}, {=}}$. Then the following holds.
\begin{itemize}
\item There exists a maximal \subst $\sig'$ such that $\sig\leq \sig'$ and $\sio'(L)\,\#R$.
\item If, in addition, $\sig(x)\sse \TfinH(\Sig\cup H)$, then
exists a maximal \subst $\sig'$ such that $\sig\leq \sig'$ and $\soi'(L)\,\#R$, too.
\end{itemize}
\end{prop}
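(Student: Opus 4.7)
Plan. The plan is to apply Zorn's lemma, exploiting the hypothesis $L\sse\TfinX(\Sig\cup\cX)$ to reduce any single witness to a finite piece of data. For the IO part I take as poset
\[P_\mio=\set{\sig':\cX\to 2^{T(\Sig\cup H)\sm H}}{\sig\leq\sig'\text{ and }\sig'_\mio(L)\,\#\,R},\]
which is nonempty because $\sig\in P_\mio$. For the OI part I work in
\[P_\moi=\set{\sig':\cX\to 2^{T(\Sig\cup H)\sm H}}{\sig\leq\sig',\ \sig'(x)\sse\TfinH(\Sig\cup H)\ \forall x,\ \sig'_\moi(L)\,\#\,R},\]
which is nonempty precisely thanks to the extra assumption on $\sig$. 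In both cases the pointwise union $\sig^*(x)=\bigcup_i\sig^{(i)}(x)$ of a chain $(\sig^{(i)})_{i\in I}$ is again a substitution (and stays in $\TfinH$ in the OI case), so the only nontrivial point is that $\sig^*$ still satisfies the prescribed relation with $R$; once established, Zorn's lemma delivers the desired maximal $\sig'$.

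For $\sig^*_\mio(L)\sse R$ I pick $s\in L$ and $t\in\sig^*_\mio(s)$, and let $\gam\in\Gamma(\sig^*,s)$ be a choice function with $\gaminf(s)=t$ (\prref{def:Gamsio}). Because $s\in\TfinX(\Sig\cup\cX)$, the set $U=\set{u\in\Pos(s)}{s(u)\in\cX}$ is finite. For every $u\in U$ with $\sig^*(s(u))\neq\es$, the defining condition of $\Gamma(\sig^*,s)$ forces $\gam(u)\in\sig^*(s(u))=\bigcup_i\sig^{(i)}(s(u))$, so $\gam(u)\in\sig^{(i_u)}(s(u))$ for some $i_u\in I$. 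Applying directedness to the finite set $\set{i_u}{u\in U,\ \sig^*(s(u))\neq\es}$ yields a single $i^*\in I$ above all of them. For $u\in U$ with $\sig^*(s(u))=\es$ the union is empty, hence $\sig^{(i^*)}(s(u))=\es$ as well and $\gam(u)=\bot$ is consistent with $\Gamma(\sig^{(i^*)},s)$; thus $\gam\in\Gamma(\sig^{(i^*)},s)$ automatically and $t=\gaminf(s)\in\sig^{(i^*)}_\mio(s)\sse R$. For $\#$ equal to $=$ I further need $R\sse\sig^*_\mio(L)$; this follows from the monotonicity $\sig_1\leq\sig_2\Rightarrow\sig_{1,\mio}(s)\sse\sig_{2,\mio}(s)$, which is proved by re-picking any $\gam\in\Gamma(\sig_1,s)$ witnessing $t\neq\bot$ at positions $u$ where $\sig_1(s(u))=\es$ but $\sig_2(s(u))\neq\es$: such positions are necessarily unreachable for $\gam$ (otherwise the $\bot$-value would propagate and force $\gaminf(s)=\bot$), so the re-picking yields $\gam'\in\Gamma(\sig_2,s)$ with $\gam'_\infty(s)=\gaminf(s)=t$.

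For the OI case the recursive unfolding via \prref{eq:soidef} of a given $t\in\sig^*_\moi(s)$ forms a tree of choices $t_x\in\sig^{(i_x)}(x)$ whose depth is bounded by the maximum level of a variable in $s$ (finite since $s\in\TfinX$) and whose branching at every internal node is bounded by the number of hole leaves of the selected $t_x$, which is finite because the restriction in $P_\moi$ keeps $t_x$ inside $\TfinH(\Sig\cup H)$. Hence the unfolding is a finite tree, only finitely many indices $i_x$ occur, and directedness again gives a common $i^*\in I$ with $t\in\sig^{(i^*)}_\moi(s)\sse R$; the equality case is closed by the same monotonicity argument as in IO. The main obstacle I anticipate is precisely the careful bookkeeping around $\bot$: the definition of $\Gamma(\sig,s)$ couples $\gam(u)=\bot$ to $\sig(s(u))=\es$, which is benign when shrinking the substitution (the chain upper bound direction) but demands a re-picking on unreachable positions when enlarging it (monotonicity, needed for the ``$=$'' case); the OI case additionally needs the $\TfinH$-restriction baked into $P_\moi$ to guarantee that the OI unfolding is a finite tree, so that ``finitely many indices from a chain'' has an upper bound.
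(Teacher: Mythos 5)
Your proof is correct and follows essentially the same route as the paper's: Zorn's lemma applied to the poset of solutions above $\sig$, with chains bounded above by pointwise unions, and the key verification that any witness $t\in\sig^*_e(s)$ already arises from a single chain member because $s\in\TfinX(\Sig\cup\cX)$ (and, for OI, the $\TfinH$ condition) makes the relevant set of choices finite, plus monotonicity for the ``$=$'' case. The only differences are cosmetic: the paper runs an induction on the norm $\|s\|$ using the inductive characterization of $\sio$ and $\soi$ rather than bounding all finitely many variable positions at once via choice functions, and you build the $\TfinH$ restriction explicitly into the OI poset, which is exactly the finiteness the paper's own ``the maximum exists'' step silently relies on.
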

\begin{proof}
For the proof let $e\in \os{\mio,\, \moi}$ one of the two possible extensions.
It is enough to see that $\leq$ is an \emph{inductive ordering} on the set of \subst{s} $\sig:\cX\to 2^{T(\Sig\cup H)\sm H}$ such that
$\sext(L)\,\#\, R$ because then, every solution is bounded from above by some maximal solution thanks to Zorn's lemma.
For that we have to consider nonempty totally ordered subsets
\begin{align*}
\{\sig^{(k)}:\cX\to 2^{T(\Sig\cup H)\sm H}\mid k \in K\}\sse
\{\sig:\cX\to 2^{T(\Sig\cup H)\sm H}\mid \Nat(L)\,\#\, R\}.
\end{align*}
Here, the index set $K$ is totally ordered satisfying
$\sig^{(k)}\leq \sig^{(\ell)} \iff k\leq \ell$.
For $x\in \cX$ we define $\sig'(x)=\bigcup \set{\sig^{(k)}}{k\in K}$.
This yields a \subst $\sig': \cX\to 2^{T(\Sig\cup H)\sm H}$ such that  $\sig^{(k)}\leq \sig'$.
We show by induction on $\Abs s$
that
if $s\in L$ and $t \in \sig'(s)$, then $t  \in \sig^{(k)}(s)$
for some $k \in K$.
For that we fix any index $k_0\in K$. Now, let $s=x(s_1\lds s_r)\in L$ and
$t=t_x[i_j\lsa t_{i}]\in \sio'(s)$ (resp.~$t=t_x[i_j\lsa t_{i_j}]\in \soi'(s)$) such that
$t_x\in \sig'(x)$ and $t_{i}\in \sio'(s_i)$ (resp.~$t_{i_j}\in \sio'(s_i)$).  For $\Abs s =0$ and both possibilities  $e\in \os{\mio,\, \moi}$,  we have
$\sext^{(k_0)}(s)= \sext'(s)=s$. We are done for $\Abs s = 0$. In the other case,  there are indices $k_i$ (resp.~$k_{ij}$) by induction such that
$t_{i}\in \sio^{(k_i)}(s_i)$ (resp.~$t_{i_j}\in \soi^{(k_{ij})}(s_i))$ for all $1\leq i \leq \rk(x)$. (This is true for $\rk(x)=0$.) Since $t_x\in \sig'(x)$, there is some index $k_x$ such that $t_x\in \sig^{(k_x)}(x)$.
Let $k_t=\max\set{k_x,k_i}{0\leq i \leq \rk(x)}$
(resp.~$k_t=\max\set{k_x,k_{ij}}{0\leq i \leq \rk(x) \wedge i_j\in \leaf{t_x}}$). The maximum exists for both $e=\mio$ and $e=\moi$.
We have $k_t\geq k_0$ and
$t \in\sext^{(k_t)}(s)$: the induction step is achieved.
Therefore,
$\sext'(s)\sse R$.

Finally, if $\sext^{(k)}(L)= R$ for all $k\in K$, then for each $t\in R$ there is some $s\in L$ such that $t\in \sext^{(k_t)}(s)$.
This implies
$ %\begin{align}\label{eq:Es}
R\sse \bigcup\{\sext^{(k)}(s)\mid s \in L \wedge k\in K\}
=\bigcup\{\sext'(s)\mid s \in L\}\sse R.
$ %\end{align}
Hence, $\bigcup\{\sext'(s)\mid s \in L\}=R$: every solution is bounded from above by some maximal solution.
\end{proof}

%%%%%%%%%%

\prref{ex:zorn} shows that the statement of \prref{prop:Zorn} might fail, when $L$ contains some tree with infinitely many occurrences of some variable. We give such an example in the case where $R$ is  not regular and $L$ is a set of  $\oo$-words. It fails for both extensions $\os{\mio,\moi}$ and
for both comparison relations ${\#}\in \os{{=},\, {\sse}}$.
%%%%%%%
\begin{exa}\label{ex:zorn}
Consider $\Sigma= \{a,b\}$ and $\cX =\{x\}$
where all three symbols have rank one. Hence, $H=\os 1$ and an infinite tree
over $\ScX$ encodes an infinite word in  $(\ScX)^\oo$ and vice versa.
We let  $L=L' \cup R$ where $L'=\os{(ax)^{\omega}}$ and $R =\set{u \in \Sigma^\omega}{\exists k \geq 1, b^k \text{ is no factor of }u}$. The singleton $L'$ is regular, but  neither $L$ nor $R$ is not regular. Note that we have $\sio(L)=\soi(L)$ for every
\subst $\sig(x) \sse T(\Sig \cup H)\sm H$. Clearly,
$\sio(L)= R\iff \sio(L)\sse R\iff \sio(L')\sse R$.
There are many \subst{s} $\sig$ such that $\sio(L')\sse R$. For example, let $\sig(x)= a(1)$, then $\sio(L')=a^\oo$. But there is no maximal \subst $\sig$ such that $\sio(L)\sse R$. Indeed,
given any \subst  $\sigma(x)\sse \Tfin(\os{a,b,1})\sm H$ such that $\sio(L')\sse R$, we can define $n= \max\{k \geq 0 \mid b^k \text{ is a factor of }\sigma(x)\}$. Then
$\sigma'(x) = \sigma(x) \cup \{b^{n+1}(1)\}$ is
strictly larger than $\sigma$, and it satisfies
$\sio'(L)\sse R$, too. For example, if $\sio(L)=a^\oo$, then  we obtain $\sigma'(x)= \os{a(1), b(1)}$ and $\sio'(L)= (ab)^\oo$.
\end{exa}
Comparing the two items  in \prref{prop:Zorn} we see that the ``outside-in''-version for $\moi$ requires an additional assumption: every
term $\sig(x)$  has finitely many holes, only. The next example shows that we cannot drop that hypothesis.
\begin{exa}\label{ex:zornig}
Consider $\Sigma= \{f,a,b\}$ and $\cX =\{x,y\}$ where
$\rk(f)=2$, $\rk(x)=\rk(y)=\rk(a)=1$, and $\rk(b)=0$. We are interested
in the finite term $x(y(b))$.
Let %$L$ be the singleton $L=\os{xy(b)}$ and
$\sig(x)=t$ where
$t=f(t,1)$ be infinite comb as in  \prref{fig:infcomb} and
$\sig(y) =a(b)$. The infinite tree $\sig(x)$ is regular and the accepted language of a deterministic top-down
tree automaton with a B\"uchi-acceptance condition as defined for example in~\cite{tho90handbook}. The singleton $\soi(x(y(b)))$ is shown in \prref{fig:ffa} on the left with $n=1$ and $\sig(y) =a(b)$. Similar as in \prref{ex:zorn} we let
\begin{align*}
R=\set{t \in T(\os{f,a,b})}{\exists k \geq 1, a^k \text{ is no factor of any path in }t}.
\end{align*}
Then $\soi(x(y(b)))\sse R$. However, there is no maximal
\subst $\sig'$ such that $\sig\leq\sig'$  and $\soi(x(y(b)))\sse R$.
Indeed, given any such $\sig'$ there is some $n\in \N$ such that $\sig''(y) = \sig'(y) \cup \os{a^{n}(b)}$ is strictly larger than $\sig'$ and $\sig''$ still satisfies $\soi''(x(y(b)))\sse R$. If we define $\sig''(y)=a^*(b)$, then $\soi''(x(y(b)))$ is regular, but it is not a subset of $R$.
\end{exa}

%%%%%%%%%%%%%%%%%%%%%%
\subsection{The outside-in extension \texorpdfstring{$\sig_{\mathrm{oi}}$}{sigma-oi} for  infinite trees}\label{sec:oiinf}
Although the definition of $\sig_{\mathrm{oi}}(s)$ for infinite trees is not used in the main body of the paper, we include a definition to have a reference for possible future research.
Actually, the definition of
$\soi(s)$ is technically simpler than the one of $\sio(s)$ because no choice functions are used.
Let us define $\soi(L)$ for a set of trees
which can be finite or infinite.
Without restriction, we content ourselves
to define $\soi(L)$ for $S\sse T(\cX)$ where $\Sig\cap \cX=\es$ and $\sig(x)\sse T(\Sig\cup [\rk(x)])\sm H$ for all $x\in \cX$.
The first step reduces the problem to define $\soi(L)$ for a set $L$  to the case where $L$ is a singleton simply be letting, as expected,  $\soi(L)=\bigcup\set{\soi(s)}{s\in L}$. Thus, it is enough to define $\soi(s)$ for an infinite tree $s\in T(\cX)$ because for a finite tree $s$ we employ the definition in \prref{eq:soidef}. The idea is to use a nondeterministic program
which transforms $s\in T(\cX)$ into a ``hybrid'' tree of $T(\cX\cup \Sig)$. During the process top-down more and more symbols appear in $\Sig$ and the subtrees (which are subtrees of $s$) appear at greater levels.
Each run of the program defines at most one output. The set of all outputs over all runs defines the set $\soi(s)$.
The program does not need to terminate, but if it runs forever, then, in the limit, it defines (nondeterministically) a tree $t\in T(\Sig)$.
The nondeterministic program is denoted as  ``$\soi^{nd}$''.
The input for the program is any tree $s\in T(\cX)$.
\medskip

{\textbf{begin procedure}}
Initialize a tree variable  $t:=s\in T(\Sig\cup \cX)$.\\
\textbf{while} $t\notin T(\Sig)$ \textbf{do}
%Perform the following while-loop  as long as $t\notin T(\Sig)$.
\begin{enumerate}

\item Choose in the breadth-first order on $\Pos(t)$ the first position $v$ such that $x=s(v)\in \cX$.
\item Denote the subtree $t|_{v}$ rooted at $v$ as
$t|_{v}=x(t_1\lds t_r)$. This implies $\rk(x)=r$.
\item Choose nondeterministically some $t_x\in \sig(x)$.\\ If this is not possible because $\sig(x)=\es$, then EXIT without any output.
\item Replace in $t$ the subtree $t|_{v}$ by $t_x[i_j\lsa t_i]$. Recall that this means to replace in $t_x$ every $i_j\in \leaf_i(t_x)$ by the same tree $t_i$ where $t_i$ is the $i$-th child of the root in $t|_{v}$. Moreover, if $\leaf_i(t_x)\neq \es$, then $1 \leq i\leq \rk(x)$.
\item []\textbf{endwhile} \\
\textbf{end procedure}
\end{enumerate}
%\noindent{\textbf{end procedure}}
%\medskip

\noindent
Do not confuse the procedure with an inside-out extension.
We visit positions in $t$ labeled by a variable one after another and later choices of trees in $\sig(t(v))$ are fully independent of each other.
If the program terminates without using the EXIT branch, then we return the final tree $t$ as the output $t=\sigma(\rho,s)$  of the specific nondeterministic run $\rho$.
If the program runs forever, then let $t_n$ be the value of $t$ after performing the $n$-th loop of this run $\rho$. Since we always have $t_x\in T(\Sig \cup H)\sm H$, an easy reflection shows that there exists a unique infinite tree $\sigma(\rho,s)=\lim_{n\to \infty}t_n$. Moreover, $\sigma(\rho,s)\in T(\Sig)$. We then define
$\soi(s) = \{ \sigma(\rho,s) \mid \rho \text{ is a run of $\soi^{nd}(s)$}\}$ and \prref{eq:soidef} holds.

\subsection{Quotient metrics}\label{sec:pseu}
This short subsection explains \prref{foot:pm}: our definition of a \emph{quotient metric} agrees with the standard one in topology.
Given a metric space $(M,d)$ and any equivalence relation
$\sim$ on $M$, there is a canonical definition of a quotient (pseudo-)metric $d_\sim$ on the quotient space $M/{\sim}$.
For $x\in M$ let $[x']=\set{x'\in M}{x\sim x'}$ denote its equivalence class. We associate to $(M,d)$ a complete weighted graph with vertex set $M/{\sim}$ and  weight
$g([x],[y])= \inf\set{d(x',y')}{x\sim x' \wedge y\sim y'}$. (So the weight might be $0$ for $[x]\neq [y]$.) Then we define
$d_\sim([x],[y])$ by the infimum over all weights of paths in the undirected graph connecting $[x]$ and $[y]$. The path can be arbitrarily long and still have weight $0$.
Clearly, $d_\sim$ is a pseudo metric (but not a metric, in general) satisfying
\[
0 \leq d_\sim([x],[y]) \leq g([x],[y]) \leq d(x,y).
\]
If for each $[x]$ there exists $x_0\in[x]$
such that for all $y$ we have
\[0< g([x],[y])= \inf\set{d(x_0,y')}{y\sim y'}\] then $g([x],[y])=d_\sim([x],[y])$ is a metric.
In our situation it is a metric: we have $(M,d)=(T(\OO_\bot),d)$ and $M/{\sim}$
results by identifying all trees $s \in (T(\OO_\bot),d)$ where some position in $\Pos(s)$ is labeled by $\bot$.
To see this, just recall our definition of $d(s,s')$ for trees in $T(\OO_\bot)$:
\begin{align*}
 d(s,s') =
\begin{cases}
 \text{$1$ \hspace{3.8cm} if either $s$ or $s'$ uses the symbol $\bot$ but not both,}\\
2^{-\inf\set{|u|\in \N}{u\in \N^*: \, s(u)\neq s'(u)}} \text{ otherwise.}
\end{cases}
%\ngam n (s) =  \gamma(0)[i_j\lsa \gamma_i(s_i,n-1)].
\end{align*}
\end{document}